\documentclass[sigconf]{acmart}

\usepackage{hyperref}           
\hypersetup{
    colorlinks=true,
    linkcolor=blue,
    filecolor=red,      
    urlcolor=magenta,
    breaklinks=true,            
}
\usepackage{breakurl}           

\usepackage{todonotes}
\usepackage{shuffle}
\usepackage[capitalise]{cleveref}
\usetikzlibrary{decorations.pathreplacing,calligraphy}

\newenvironment{sketch}{%
  \renewcommand{\proofname}{Proof Sketch}\proof}{\endproof}
  
\newenvironment{claimproof}{%
  \renewcommand{\proofname}{Proof of Claim}\proof}{\hfill $\lhd$}

\newtheorem{claim}{Claim}

\newcommand{\proofsubparagraph}[1]{\emph{#1}}

\newcommand{\qrank}{\mathsf{qr}}

\mathchardef\mhyphen="2D

\newcommand{\concrel}{R_\circ}

\newcommand{\foeq}{\mathsf{FO}[\mathsf{EQ}]}

\newcommand{\langinf}{L(w^\omega, v^\omega)}

\newcommand{\cvector}[1]{\vec{c}^{\, #1}}

\newcommand{\expo}{\mathsf{exp}}
\newcommand{\qr}{\mathsf{qr}}
\newcommand{\fraisse}{Fra\"{\i}ss\'e }

\newcommand{\facts}{\mathsf{Facs}}

\newcommand{\subs}{\sigma}

\newcommand{\dynfo}{\ifmmode{\mathsf{DynFO}}\else{\textsf{DynFO}}\xspace\fi}
\newcommand{\dynprop}{\ifmmode{\mathsf{DynPROP}}\else{\textsf{DynPROP}}\xspace\fi}
\newcommand{\dyncq}{\ifmmode{\mathsf{DynCQ}}\else{\textsf{DynCQ}}\xspace\fi}
\newcommand{\dynucq}{\ifmmode{\mathsf{DynUCQ}}\else{\textsf{DynUCQ}}\xspace\fi}

\newcommand{\dynqf}{\ifmmode{\mathsf{DynQF}}\else{\textsf{DynQF}}\xspace\fi}

\newcommand{\absins}[1]{\ifmmode{\mathsf{ins_{#1}}}\else{\textsf{ins_{#1}}}\xspace\fi}
\newcommand{\absreset}{\ifmmode{\mathsf{reset}}\else{\textsf{reset}}\xspace\fi}

\newcommand{\select}{\zeta}

\newcommand{\fun}[1]{\llbracket #1 \rrbracket}

\newcommand{\bind}[2]{#1\{#2\}}

\newcommand{\strucvar}{\mathfrak{u}}

\newcommand{\var}{\mathsf{vars}}

\newcommand{\splog}{\ifmmode{\mathsf{SpLog}}\else{\textsf{SpLog}}\xspace\fi}
\newcommand{\splogneg}{\ifmmode{\mathsf{SpLog}^{\neg}}\else{\textsf{SpLog}$^{\neg}$}\xspace\fi}

\newcommand{\dpcsplog}{\ifmmode{\mathsf{DPC}}\else{\textsf{DPC}}\xspace\fi}
\newcommand{\pcsplog}{\ifmmode{\mathsf{PC}}\else{\textsf{PC}}\xspace\fi}

\newcommand{\ECrtext}{\textsf{EC\textsuperscript{reg}}}
\newcommand{\EC}{\ifmmode{\mathsf{EC}}\else{\textsf{EC}}\xspace\fi}
\newcommand{\ECr}{\ifmmode{\mathsf{EC^{reg}}}\else{\ECrtext}\xspace\fi}
\newcommand{\fc}{\mathsf{FC}}
\newcommand{\fcreg}{\mathsf{FC}[\mathsf{REG}]}

\newcommand{\reg}{\mathsf{REG}}

\newcommand{\regconst}{\mathbin{\dot{\in}}}

\newcommand{\logeq}{\mathbin{\dot{=}}}

\newcommand{\fo}{\mathsf{FO}}

\newcommand{\cq}{\ifmmode{\mathsf{CQ}}\else{\textsf{CQ}}\xspace\fi}
\newcommand{\ucq}{\ifmmode{\mathsf{UCQ}}\else{\textsf{UCQ}}\xspace\fi}

\newcommand{\lang}{\mathcal{L}}

\newcommand{\signature}{\tau}

\newcommand{\emptyword}{\varepsilon}

\newcommand{\true}{\ifmmode{\mathsf{True}}\else{\textsf{True}}\xspace\fi}
\newcommand{\false}{\ifmmode{\mathsf{False}}\else{\textsf{False}}\xspace\fi}
\newcommand{\union}{\mathrel{\cup}}
\newcommand{\intersect}{\mathrel{\cap}}

\newcommand{\df}{:=}

\settopmatter{printfolios=true} %

\begin{document}
\sloppy

\title[Generalized Core Spanner Inexpressibility via Ehrenfeucht-\fraisse Games for FC]{Generalized Core Spanner Inexpressibility \\ via Ehrenfeucht-\fraisse Games for FC}

\settopmatter{printacmref=false} %
\renewcommand\footnotetextcopyrightpermission[1]{} %
\pagestyle{plain}

\author{Sam M. Thompson}
\affiliation{%
  \institution{Loughborough University}
  \city{Loughborough}
  \country{UK}}

\author{Dominik D. Freydenberger}
\affiliation{%
  \institution{Loughborough University}
  \city{Loughborough}
  \country{UK}}

\renewcommand{\shortauthors}{D.\ D.\ Freydenberger and S.\ M.\ Thompson}

\begin{abstract}
Despite considerable research on document spanners, little is known about the expressive power of generalized core spanners.
In this paper, we use Ehrenfeucht-\fraisse games to obtain general inexpressibility lemmas for the logic $\fc$ (a finite-model variant of the theory of concatenation). 
Applying these lemmas give inexpressibility results for $\fc$ that we lift to generalized core spanners. 
In particular, we give several relations that cannot be selected by generalized core spanners, thus demonstrating the effectiveness of the inexpressibility lemmas.
As an immediate consequence, we also gain new insights into the expressive power of core spanners.
\end{abstract}

\begin{CCSXML}
<ccs2012>
 <concept>
  <concept_id>10010520.10010553.10010562</concept_id>
  <concept_desc>Computer systems organization~Embedded systems</concept_desc>
  <concept_significance>500</concept_significance>
 </concept>
 <concept>
  <concept_id>10010520.10010575.10010755</concept_id>
  <concept_desc>Computer systems organization~Redundancy</concept_desc>
  <concept_significance>300</concept_significance>
 </concept>
 <concept>
  <concept_id>10010520.10010553.10010554</concept_id>
  <concept_desc>Computer systems organization~Robotics</concept_desc>
  <concept_significance>100</concept_significance>
 </concept>
 <concept>
  <concept_id>10003033.10003083.10003095</concept_id>
  <concept_desc>Networks~Network reliability</concept_desc>
  <concept_significance>100</concept_significance>
 </concept>
</ccs2012>
\end{CCSXML}

\ccsdesc[500]{Theory of computation~Theory and algorithms for application domains~Database theory~Logic and databases}
\ccsdesc[300]{Theory of computation~Logic~Finite model theory}

\keywords{Information extraction, document spanners, finite model theory, expressive power, Ehrenfeucht-\fraisse games}

\maketitle

\section{Introduction}\label{sec:intro}
Ehrenfeucht-\fraisse games are a fundamental tool for establishing inexpressibility results in finite-model theory.
In this paper, we develop techniques to use Ehrenfeucht-\fraisse games for the logic~$\fc$ and use these to gain insights into the expressive power of generalized core spanners.

\paragraph*{Document Spanners.}
SystemT is a rule-based information extraction system, developed by IBM, which includes a declarative query language called Annotation Query Language (AQL) -- see~\cite{chiticariu2010systemt, krishnamurthy2009systemt, li2011systemt}.
Fagin, Kimelfeld, Reiss, and Vansummeren~\cite{fag:spa} developed a formal framework for information extraction called \emph{document spanners} (or just \emph{spanners}), which captures the core functionality of AQL.

One can define the process of querying a word with a spanner in a two-step process.
First, so-called \emph{extractors} obtain relations of intervals (or \emph{spans}) from the text. 
For the purposes of this introduction, we assume these extractors to be \emph{regex formulas}, which are regular expressions with capture variables.
For example, if someone wanted to find all occurrences of common misspellings in a text document, they could consider the regex formula $\gamma(x) \df \Sigma^* \cdot \bind{x}{ \mathtt{acheive} \lor \mathtt{begining} \lor \cdots \lor \mathtt{wether} } \cdot \Sigma^*$. 
Then, $\gamma$ extracts a unary relation of spans within the given text document where one of the misspellings occur.

Secondly, the extracted relations are combined using a relational algebra.
This paper is concerned with the class of \emph{generalized core spanners}, which allow for $\cup$ (union), $\pi$ (projection), $\bowtie$ (natural join), $\setminus$ (difference), and $\zeta^=$ (equality selection). 
While union, projection, natural join, and difference are defined as one might expect (assuming prior knowledge of relational algebra), equality selection is more text specific.
The equality selection $\select^=_{x,y}$ only chooses those tuples for which $x$ and $y$ are mapped to spans that represent the same factor of the input word (at potential different locations).

A word relation $R \subseteq (\Sigma^*)^r$, for some $r \in \mathbb{N}$, is \emph{selectable} by generalized core spanners if we add an $R$ selection operator (e.g., $\select^R$) to our relational algebra without increasing the expressive power.
One fundamental question regarding generalized core spanners is what relations are selectable?
In other words, what is the expressive power of generalized core spanners?
For example, from Theorem 5.14 of~\cite{frey2019finite}, we know that length equality $\select^\mathsf{len}_{x,y}$ (i.e., only keep those tuples for which the span that $x$ is mapped to has the same length as the span that $y$ is mapped to) cannot be expressed by generalized core spanners.

\paragraph*{Ehrenfeucht-\fraisse Games and FC}
For relational first-order logic (and therefore relational algebra), one of the fundamental tools for inexpressibility are \emph{Ehrenfeucht-\fraisse Games}; for example, see~\cite{ebbinghaus1999finite,immerman1998descriptive,libkin2004elements}.
To use Ehrenfeucht-\fraisse games in order to yield inexpressibility results for generalized core spanners, we need a suitable logic.
For this task, we use $\fc$; a logic that was introduced by Freydenberger and Peterfreund~\cite{frey2019finite} as a finite-model variant on the theory of concatenation. 
For this paper, we define this logic slightly differently to~\cite{frey2019finite}.
First, we treat every word $w \in \Sigma^*$ as a relational structure consisting of a universe that contains all factors of $w$, a ternary concatenation relation $\concrel$ where $(x,y,z) \in \concrel$ if and only if $x$, $y$, and $z$ are all factors of $w$ where $x = y \cdot z$, and constant symbols for every $\mathtt{a} \in \Sigma$ and for the emptyword $\emptyword$.
Then, $\fc$ is simply a first-order logic over these structures.
As syntactic sugar, we use $(x \logeq y \cdot z)$ rather than $\concrel(x,y,z)$ for atomic $\fc$-formulas.
As an example, consider the formula
\[ \varphi \df \forall z \colon \bigl( \neg (z \logeq \emptyword) \rightarrow \neg \exists x, y \colon ( x \logeq z \cdot y) \land (y \logeq z \cdot z) \bigr). \]
This formula states, that for all factors $z$ where $z \neq \emptyword$ does not hold (all non-empty factors), we have that there does not exist a factor $x$ where $x = zzz$.
In other words, $\varphi$ defines those words $w \in \Sigma^*$ that do not contain $uuu$ where $u$ is a non-empty word.

If we consider Ehrenfeucht-\fraisse games over two structures that represent a word $w \in \Sigma^*$, we can gain inexpressibility results for $\fc$.
Then, using techniques from combinatorics on words (namely, \emph{commutation}), we are able to lift these inexpressibility results to generalized core spanners.

\paragraph*{Related Work.}
Fagin et al.~\cite{fag:spa} introduces three main classes of document spanners.
The \emph{regular spanners} (regex formulas extended with projection, union, and natural join) are perhaps the most intensively studied class with a sizeable amount of research on enumeration algorithms~\cite{florenzano2020efficient,amarilli2020constant,schmid2021spanner}.
When it comes to expressive power, Fagin et al.~\cite{fag:spa} showed that regular spanners cannot go beyond \emph{recognizable relations}.
\emph{Core spanners} extend regular spanners with equality selection, allowing for more expressive power at the cost of tractability, see e.g.~\cite{fre:splog,fre:doc,frey2019finite}.
With regards to their expressive power, \cite{fag:spa} gave the \emph{core-simplification lemma}, allowing one to simplify core spanners into a normal form in order to study their expressive power.
Freydenberger~\cite{fre:splog} uses inexpressibility techniques from word equations (namely, results in Karhum\"{a}ki, Mignosi, and Plandowski~\cite{karhumaki2000expressibility}) to obtain an inexpressibility lemma for core spanners, which yields relations that cannot be selected by core spanners.

When it comes to generalized core spanners, the techniques from core spanners do not work.
As far as the authors are aware, only two inexpressibility results for generalized core spanners exist in literature: 
First, Peterfreund, ten Cate, Fagin, and Kimelfeld~\cite{peterfreund2019recursive} showed that over a unary alphabet, only \emph{semi-linear languages} can be expressed by generalized core spanners.
Then, Freydenberger and Peterfreund~\cite{frey2019finite} used the \emph{Feferman-Vaught theorem} to show that the language containing those words $w \df \mathtt{a}^n \mathtt{b}^n$ for $n \in \mathbb{N}$ is not expressible by generalized core spanners.
However, the use of the Feferman-Vaught theorem in~\cite{frey2019finite} relies on the limited structure of the language $\mathtt{a}^n \mathtt{b}^n$, and it is unclear how to generalize this proof technique.
Note that from the data complexity of model checking $\fc$-formulas, we know that it cannot express languages outside of $\mathsf{LOGSPACE}$, see~\cite{frey2019finite}.
While this observation might not be considered particularly enlightening from an inexpressibility point of view, we state this for completeness sake.

The expressive power of other classes of spanners (that is, apart from regular spanners, core spanners, and generalized core spanners) has also been considered.
For example, see~\cite{schmid2020purely,peterfreund2019recursive,peterfreund2021grammars,thompson2023languages}.

\paragraph*{Issues with Standard Techniques.}
Beyond toy examples, Eh\-ren\-feucht-\fraisse games are difficult to play, and usually require rather involved combinatorial arguments. 
Therefore, many techniques have been developed in order to achieve sufficient criteria that give Duplicator a winning strategy (see~\cite{fagin1996easier} for a survey).
For example, \emph{locality} is often used to show that Duplicator has a winning strategy.
Unfortunately, these locality results often fail for \emph{non-sparse structures}.
The structures we look at in the present paper are very non-sparse, and therefore we cannot simply apply locality results in order to gain inexpressibility tools.

Another useful tool is the \emph{Feferman-Vaught theorem}. 
Freydenberger and Peterfreund~\cite{frey2019finite} used this result  to prove that $\mathtt{a}^n \mathtt{b}^n$ is not an $\fc$-language, which then implies that length selection is unattainable for generalized core spanners.
To this end, \cite{frey2019finite} defines $\foeq$, which extends first-order logic over a linear order with symbol predicates with a built-in equality relation.
Then, we can decompose these structures that represent words of the form $\mathtt{a}^n \mathtt{b}^m$ into two substructures (one for $\mathtt{a}^n$ and one for $\mathtt{b}^m$).
Finally, the Feferman-Vaught theorem can be invoked to prove that $\mathtt{a}^n \mathtt{b}^n$ is not an $\fc$ language.
However, this proof idea does not generalize beyond words which can be decomposed into disjoint sections (at least not without a sizeable amount of extra machinery).

\paragraph*{Structure of the Paper.}
\cref{sec:prelims} gives notational conventions and definitions that are used throughout this article.
In~\cref{sec:EFgames}, we define Ehrenfeucht-\fraisse games for $\fc$, and give some basic results.
The main technical contributions are given in~\cref{sec:pumping}, where we work towards the \emph{Fooling Lemma} for $\fc$ -- a general tool for showing a language cannot be expressed in $\fc$.
On the way to the Fooling Lemma, we give two important lemmas called the \emph{Pseudo-Congruence Lemma} and the \emph{Primitive Power Lemma}.
We then use the Fooling Lemma in~\cref{sec:spanners} to obtain various relations that are not selectable by generalized core spanners.
Due to space constraints, most of the proofs can be found in the appendix.

\section{Preliminaries}\label{sec:prelims}
Let $\mathbb{N} \df \{0,1,2,\dots\}$ and let $\mathbb{N}_+ \df \mathbb{N} \setminus \{ 0 \}$ where $\setminus$ denotes set difference.
For $n \geq 1$, we use $[n]$ for $\{ 1,2,\dots,n\}$. 
The cardinality of a set $S$ is denoted by $|S|$.
For a vector $\vec a \in A^k$ for some set $A$ and $k \in \mathbb{N}$, we write $x \in \vec a$ to denote that $x$ is a component of $\vec a$.

We use $\Sigma$ for a fixed and finite alphabet of terminal symbols and we use $\Xi$ for a countably infinite set of variables. 
If $w = w_1 \cdot w_2 \cdot w_2$ where $w, w_1, w_2, w_3 \in \Sigma^*$, then we call $w_1$ a prefix of $w$, $w_2$ a factor~$w$, and $w_3$ a suffix of $w$.
If $w_1 \neq w$, then $w_1$ is a \emph{strict prefix} of $w$; and likewise, if $w_3 \neq w$, then $w_3$ is a \emph{strict suffix} of $w$.
We denote that $w_2$ is a \emph{factor} of $w$ by $w_2 \sqsubseteq w$ and if $w_2 \neq w$ also holds, then $w_2 \sqsubset w$.
For $w \in \Sigma^*$, we write the set of all factors of $w$ as $\facts(w) \df \{ u \in \Sigma^* \mid u \sqsubseteq w \}$.
We use $|w|$ to denote the length of a word $w \in \Sigma^*$, and for some $\mathtt{a} \in \Sigma$, we use $|w|_\mathtt{a}$ to denote the number of occurrences of $\mathtt{a}$ within $w$.
For $w \in \Sigma^*$, let $w^k \in \Sigma^*$ denote the word that consists of $k$ repetitions of $w$.
We always assume $w^0 = \emptyword$.

A word $w \in \Sigma^+$ is called \emph{imprimitive} if $w = z^k$ for some $z \in \Sigma^*$ and $k > 1$.
We always assume $\emptyword$ to be imprimitive.
If $w \in \Sigma^+$ is not imprimitive, then $w$ is \emph{primitive}.
That is, $w \in \Sigma^+$ is primitive if for all $z \in \Sigma^+$, we have that $w = z^m$ implies $w = z$.

\paragraph*{The logic FC}
We introduce a definition of $\fc$ that is slightly more technical than the original one from~\cite{frey2019finite}, and more suitable for defining games. The differences, and why they do no affect the expressive power, are discussed at the end of this section.

For the purposes of this paper, we consider one fixed signature $\signature_\Sigma \df \{ \concrel, \mathtt{a}_1, \dots, \mathtt{a}_m, \emptyword \}$ for every terminal alphabet $\Sigma \df \{ \mathtt{a}_1, \dots, \mathtt{a}_m\}$, where $\concrel$ is a ternary relation symbol and where $\emptyword$ and $\mathtt{a}_i$ for every $i \in [m]$ are constant symbols.
Given a word $w \in \Sigma^*$, let $\mathfrak{A}_w \df (A, \concrel^{\mathfrak{A}_w}, \mathtt{a}_1^{\mathfrak{A}_w}, \dots, \mathtt{a}_m^{\mathfrak{A}_w}, \emptyword^{\mathfrak{A}_w})$ be the $\signature_\Sigma$-structure that represents $w \in \Sigma^*$ as follows:
\begin{itemize}
\item $A \df \facts(w) \union \{ \perp \}$ is the universe,
\item $\concrel^{\mathfrak{A}_w} \df \{ (a,b,c) \in \facts(w)^3 \mid a = b \cdot c \}$ is concatenation, restricted to factors of $w$, 
\item $\mathtt{a}^{\mathfrak{A}_w} \df \mathtt{a}$ if $|w|_\mathtt{a} \geq 1$, and $\mathtt{a}^{\mathfrak{A}_w} = \perp$ otherwise, and
\item $\emptyword^{\mathfrak{A}_w} \df \emptyword$.
\end{itemize}
Note that if $|w|_\mathtt{a} = 0$, then $\mathtt{a}^{\mathfrak{A}_w} = \perp$.
One can think of $\perp$ as a null value; however, we usually deal with those words where $|w|_\mathtt{a} \geq 1$ for all $\mathtt{a} \in \Sigma$.
Therefore, due to the nature of the constant symbols used in $\signature_\Sigma$, we often do not distinguish between a constant symbol $\mathtt{a} \in \signature_\Sigma$, and the terminal symbol $\mathtt{a} \in \Sigma$.
In other words, when the structure is clear from context, we use $\mathtt{a} \in \Sigma$ rather than $\mathtt{a}^{\mathfrak{A}_w} \in A$.

An $\fc$-formula is a first-order formula, where the atomic formulas are of the form $\concrel(x,y,z)$ for variables or constants $x$, $y$, and $z$.
As syntactic sugar, we use $(x \logeq y \cdot z)$ as $\fc$ atomic formulas, as we always interpret $\concrel$ as concatenation.
More formally:
\begin{definition}
We define $\fc$, the set of all $\fc$-formulas, recursively as:
\begin{itemize}
\item If $x, y, z \in \Xi \union \Sigma \union \{ \emptyword \}$, then $(x \logeq y \cdot z) \in \fc$,
\item if $\varphi, \psi \in \fc$, then $(\varphi \land \psi) \in \fc$, $(\varphi \lor \psi) \in \fc$, and $\neg \varphi \in \fc$,
\item if $\varphi \in \fc$ and $x \in \Xi$, then $\forall x \colon \varphi \in \fc$ and $\exists x \colon \varphi \in \fc$.
\end{itemize}
\end{definition}

An interpretation $\mathcal{I} \df (\mathfrak{A}_w, \subs)$ consists of a $\signature_\Sigma$-structure $\mathfrak{A}_w$ that represents some $w \in \Sigma^*$, and a mapping $\subs \colon (\Sigma \union \Xi \union \{\emptyword\}) \rightarrow \facts(w) \union \{ \perp \}$, where $\subs(\mathtt{a}) = \mathtt{a}^{\mathfrak{A}_w}$ for each constant symbol $\mathtt{a} \in \signature_\Sigma$.
Furthermore, we always assume $\subs(x) \neq \perp$ for all $x \in \Xi$.

\begin{definition}
For any $\varphi\in \fc$, and any interpretation $\mathcal{I} \df (\mathfrak{A}_w, \subs)$, where $\mathfrak{A}_w$ represents $w \in \Sigma^*$, we define $\mathcal{I} \models \varphi$ as follows:
Let $\mathcal{I} \models (x \logeq y \cdot z)$ if $(\subs(x), \subs(y), \subs(z)) \in \concrel^{\mathfrak{A}_w}$.
For the quantifiers, we have that $\mathcal{I} \models \exists x \colon \varphi$ (or $\mathcal{I} \models \forall x \colon \varphi$) if $\subs_{x \rightarrow u} \models \varphi$ holds for an (or all) $u \in \facts(w)$, where $\subs_{x \rightarrow u}(x) = u$ and $\subs_{x \rightarrow u}(y) = \subs(y)$ for all $y \in \Xi \setminus \{ x \}$.
Negation, disjunction, and conjunction are defined canonically.
\end{definition}

We use the usual first-order logic definition of bound variables and free variables.
If $\varphi$ is a sentence (that is, there are no free variables in $\varphi$), then we simply write $\mathfrak{A}_w \models \varphi$.
Furthermore, for any $w \in \Sigma^*$, there is a unique $\signature_\Sigma$-structure $\mathfrak{A}_w$ that represents $w$. 
We can therefore define the \emph{language} of an $\fc$-sentence.

\begin{definition}
The language defined by a sentence $\varphi \in \fc$ is $\lang(\varphi) \df \{ w \in \Sigma^* \mid \mathfrak{A}_w \models \varphi \}$.
Let $\lang(\fc)$ be the class of languages definable by an $\fc$-sentence.
\end{definition}

We write $\fun{\varphi}(w)$ to denote the set of all mappings $\subs$ such that $(\mathfrak{A}_w, \subs) \models \varphi$, where $\mathfrak{A}_w$ is the $\signature_\Sigma$-structure that represents $w \in \Sigma^*$.
Note that for simplicity, we assume that for any $\subs \in \fun{\varphi}(w)$, the domain of $\subs$ is exactly the free variables of $\varphi$.

For a word relation $R \subseteq (\Sigma^*)^k$, we say that $R$ is \emph{definable in $\fc$} if there exists $\varphi_R \in \fc$ with free variables $x_1, x_2, \dots, x_k$ such that for any $w \in \Sigma^*$, we have $\subs \in \fun{\varphi_R}(w)$ if and only if 
\[ (\subs(x_1), \subs(x_2), \dots, \subs(x_k) ) \in R \intersect \facts(w)^k. \]
We say that such a formula $\varphi_R$ \emph{defines} $R$.

\begin{example}\label{example:FCrelations}
Consider the following $\fc$-formula:
\[ \varphi_w (x) \df \neg \exists z_1, z_2 \colon \bigl( ( (z_1 \logeq z_2 \cdot x) \lor (z_1 \logeq x \cdot z_2) ) \land \neg (z_2 \logeq \emptyword) \bigr). \]
This formula states that there is no factor that is a concatenation of $\subs(x)$ and a non-empty word $\subs(z_2)$. 
Thus, for $w \in \Sigma^*$, it must be that $\fun{\varphi}(w) = \{ \subs \}$ where $\subs(x) = w$.
Note that we use $(z_2 \logeq \emptyword)$ as a shorthand for $(z_2 \logeq \emptyword \cdot \emptyword)$.

We can use $\varphi_w(x)$ to define the following sentence:
\[ \varphi_{ww} \df \exists x \colon \exists y \colon \bigl( \varphi_w(x) \land (x \logeq yy) \bigr). \]
It follows that $\mathfrak{A} \models \varphi_{ww}$ if and only if $\mathfrak{A}$ is a $\signature_\Sigma$-structure that represents $ww$ for some $w \in \Sigma^*$.

Now consider $\varphi(x,y) \df (x \logeq yy)$.
If follows that $\varphi$ defines the relation $R_{\mathsf{copy}} \df \{ (u,v) \subseteq {(\Sigma^*)}^2 \mid u = vv \}$.
Since we know that $R_\mathsf{copy}$ can be defined in $\fc$, we can use it as an additional atomic formula without changing the expressive power of $\fc$.
Furthermore, it is rather straightforward to generalize this relation to the $\fc$ definable relation $R_{k \mathsf{\mhyphen copies}} \df \{ (u,v) \in (\Sigma^*)^2 \mid u = v^k \}$.
\end{example}

Before moving on, we first make a brief note on the difference between the definition of $\fc$ in this paper, and the definition given in~\cite{frey2019finite}.
First, the definition of $\fc$ given in~\cite{frey2019finite} allows for an arbitrarily large right-hand side. 
That is, atomic formulas of the form $x \logeq \alpha$ where $x \in \Xi$ and $\alpha \in (\Sigma \union \Xi)^*$.
Whereas we use $(x \logeq y \cdot z)$ as atomic formulas, where $x,y,z \in \Xi \union \Sigma \union \{ \emptyword \}$.
It is clear that an arbitrarily large right-hand side is shorthand for a binary concatenation term (for example, see Freydenberger and Thompson~\cite{freydenberger2021splitting}).
The reason behind our choice, is that we have a finite signature $\signature_\Sigma$ which more closely aligns with ``traditional'' finite-model theory.
The second difference is that we do not use a \emph{universe variable} $\strucvar$ to denote the ``input word'' (see~\cite{frey2019finite} for more details).
However, referring back to~\cref{example:FCrelations}, we can simply use a subformula to simulate the behaviour of $\strucvar$.
While there is also a small difference with the semantic definitions (such as the use of $\perp$), these differences are negligible and do not change the expressive power.

\section{Ehrenfeucht-\fraisse Games for FC.}\label{sec:EFgames}
In this section, we discuss the use of Ehrenfeucht-\fraisse games for $\fc$.
Although broader definitions of the concepts defined here exist (see e.g.~\cite{ebbinghaus1999finite,immerman1998descriptive,libkin2004elements}), we tailor our definitions to the logic~$\fc$.

\begin{definition}[Partial Isomorphism]\label{defn:partialiso}
Let $\mathfrak{A}_w$ and $\mathfrak{B}_v$ be two $\signature_\Sigma$-structures that represent $w \in \Sigma^*$ and $v \in \Sigma^*$, respectively.
Let $A \df \facts(w) \union \{ \perp \}$ and $B \df \facts(v) \union \{ \perp \}$ be the universes of these structures.
For $\vec a = (a_1,\dots,a_n)$ and $\vec b = (b_1, \dots, b_n)$ where $a_i \in A$ and $b_i \in B$ for all $i \in [n]$, we say that $(\vec a, \vec b)$ defines a \emph{partial isomorphism} between $\mathfrak{A}_w$ and $\mathfrak{B}_v$ if: 
\begin{itemize}
\item For every $i \in [n]$, and constant symbol $c \in \signature_\Sigma$, we have $a_i = c^{\mathfrak{A}_w}$ if and only if $b_i = c^{\mathfrak{B}_v}$,
\item for every $i,j \in [n]$, we have that $a_i = a_j$ if and only if $b_i = b_j$, and
\item for every $i,j,k \in [n]$, we have that $a_i = a_j \cdot a_k$ if and only if $b_i = b_j \cdot b_k$.
\end{itemize}
\end{definition}

We now discuss Ehrenfeucht-\fraisse games.
This game is played between two players, called \emph{Spoiler} and \emph{Duplicator}.
For our purposes, Ehrenfeucht-\fraisse games are played on two $\signature_\Sigma$-structures $\mathfrak{A}_w$ and $\mathfrak{B}_v$ that represent some $w \in \Sigma^*$ and $v \in \Sigma^*$ respectively.
Let $A \df \facts(w) \union \{ \perp \}$ and $B \df \facts(v) \union \{ \perp \}$.
The players play a fixed number (which we often denote with $k \in \mathbb{N}$) of rounds.
For any $i \leq k$, the $i$-th round is played as follows:
\begin{description}
\item[Spoiler's Move:] Spoiler picks a structure $\mathfrak{A}_w$ or $\mathfrak{B}_v$, and then picks some element $a_i \in A$ if they chose $\mathfrak{A}_w$, or some $b_i \in B$ if they chose $\mathfrak{B}_v$.
\item[Duplicator's Move:] Duplicator then responds by picking some element in the other structure.
\end{description}

For any $\signature_\Sigma$-structure $\mathfrak{A}$, where $\Sigma \df \{ \mathtt{a}_1, \dots, \mathtt{a}_n \}$, we use $\cvector{\mathfrak{A}}$ to denote $(\mathtt{a}_1^\mathfrak{A}, \dots, \mathtt{a}_n^\mathfrak{A}, \emptyword^{\mathfrak{A}})$.
Let $\mathfrak{A}_w$ and $\mathfrak{B}_v$ be two $\signature_\Sigma$ structures that represent $w \in \Sigma^*$ and $v \in \Sigma^*$ respectively.
For a $k$-round game $\mathcal{G}$ over $\mathfrak{A}_w$ and $\mathfrak{B}_v$, let $a_i \in A$ and $b_i \in B$ be the elements chosen in round $i$ for each $i \in [k]$.
Furthermore, let $\vec a = (a_1,a_2,\dots,a_{k+|\Sigma|+1})$ and $\vec b = (b_1,b_2,\dots,b_{k+|\Sigma|+1})$ be tuples of the chosen elements, combined with the tuples $\cvector{\mathfrak{A}_w}$ and $\cvector{\mathfrak{B}_v}$ respectively.
That is, the tuples consisting of the last $|\Sigma|+1$ components of $\vec a$ and $\vec b$ are $\cvector{\mathfrak{A}_w}$ and $\cvector{\mathfrak{B}_v}$ respectively.
Then, Duplicator has won the $k$-round game if and only if $(\vec a, \vec b)$ is a partial isomorphism between $\mathfrak{A}_w$ and $\mathfrak{B}_v$.

For a $k$-round game $\mathcal{G}$ over $\mathfrak{A}_w$ and $\mathfrak{B}_v$, we say that Duplicator has a \emph{winning strategy} for $\mathcal{G}$ if and only if there is a way for Duplicator to play such that no matter what Spoiler chooses, Duplicator can win $\mathcal{G}$.
Otherwise, Spoiler has a winning strategy for $\mathcal{G}$.
If Duplicator has a winning strategy for $\mathcal{G}$, then we write $\mathfrak{A}_w \equiv_k \mathfrak{B}_v$. 
Since $\mathfrak{A}_w$ and $\mathfrak{B}_v$ are uniquely determined by $w,v \in \Sigma^*$, we can simply write $w \equiv_k v$ when convenient.

\begin{example}
Let $\Sigma = \{ \mathtt{a} \}$.
For any $i \in \mathbb{N}$, Spoiler has a 2-round winning strategy on $\mathfrak{A}_w$ and $\mathfrak{B}_v$ where $w \df \mathtt{a}^{2i}$ and $v \df \mathtt{a}^{2i-1}$:
\begin{description}
\item[Round One.] Spoiler chooses the structure $\mathfrak{A}_w$ and the factor $a_1 = \mathtt{a}^{2i}$. Duplicator responds with some $b_1 \sqsubseteq v$.
\item[Round Two.] The choice for Spoiler in this round is broken into two cases, depending on Duplicator's choice in round one:
\begin{enumerate}
\item If Duplicator responds in round one with $b_1 = \mathtt{a}^{2i-1}$, then Spoiler chooses the structure $\mathfrak{A}_w$ and $a_2 = \mathtt{a}^{i}$, as there does not exists a factor $b_2 \sqsubseteq v$ such that $v = b_2 \cdot b_2$, Spoiler has won after two rounds. 
\item If Duplicator responds in round one with $b_1 \neq \mathtt{a}^{2i-1}$, then Spoiler chooses the structure $\mathfrak{B}_v$ and $b_2 = b_1 \cdot \mathtt{a}$. 
Duplicator must respond with some factor $a_2$, however a factor of the form $a_1 \cdot \mathtt{a}$ does not exist.
\end{enumerate}
\end{description}
For either of the cases given in round two, we have that $(a_1, a_2, \mathtt{a}, \emptyword)$ and $(b_1, b_2, \mathtt{a}, \emptyword)$ does not form a partial isomorphism and thus, Spoiler has a winning strategy for the two round game.
Consequently, $\mathtt{a}^{2i} \not \equiv_2 \mathtt{a}^{2i-1}$ for any $i \in \mathbb{N}_+$.
\end{example}

Next, we connect Ehrenfeucht-\fraisse games to $\fc$.
Each $\varphi \in \fc$ has a so-called \emph{quantifier rank}, denoted by the function $\qrank \colon \fc \rightarrow \mathbb{N}$.
Let $\varphi, \psi \in \fc$.
We define $\qrank$ recursively as follows:
\begin{itemize}
\item If $\varphi \in \fc$ is an atomic formula, then $\qrank(\varphi) \df 0$,
\item $\qrank(\neg \varphi) \df \qrank(\varphi)$, 
\item $\qrank(\varphi \land \psi) = \qrank(\varphi \lor \psi) \df \mathsf{max} \left( \qrank(\varphi), \qrank(\psi) \right)$, and 
\item $\qrank(Q x \colon \varphi) \df \qrank(\varphi) + 1$ for any $x \in \Xi$ and $Q \in \{\forall, \exists\}$.
\end{itemize}
Let $\fc(k)$ denote the set of sentences $\varphi \in \fc$ such that $\qrank(\varphi) \leq k$.

Note that the set of all $\signature_\Sigma$-structures $\mathfrak{A}_w$ that represent $w \in \Sigma^*$ are simply a subset of all structures over a signature that contains a ternary relational symbol, and $|\Sigma|+1$ constant symbols.
Therefore, we can state an $\fc$-version of the the Ehrenfeucht-\fraisse theorem  (see e.g.~\cite{ebbinghaus1999finite,immerman1998descriptive,libkin2004elements} for the general version).

\begin{theorem}\label{thm:ehrenfeucht}
Let $\mathfrak{A}_w, \mathfrak{B}_v$ be $\signature_\Sigma$-structures that represent $w \in \Sigma^*$ and $v \in \Sigma^*$ respectively. Then, for any $k \in \mathbb{N}$, the following are equivalent:
\begin{enumerate}
\item $\mathfrak{A}_w \equiv_k \mathfrak{B}_w$,
\item $\mathfrak{A}_w \models \varphi$ if and only if $\mathfrak{B}_v \models \varphi$ for all $\varphi \in \fc(k)$.
\end{enumerate}
\end{theorem}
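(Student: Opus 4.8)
The plan is to follow the standard proof of the Ehrenfeucht-\fraisse theorem, adapted to the signature $\signature_\Sigma$, and observe that nothing goes wrong because a finite signature with one ternary relation symbol and finitely many constant symbols is a perfectly ordinary relational signature. The natural route is to prove a stronger statement by induction on $k$ that talks about partial plays: for all $\signature_\Sigma$-structures $\mathfrak{A}_w, \mathfrak{B}_v$, all $n \in \mathbb{N}$, and all tuples $\vec a \in A^n$, $\vec b \in B^n$ for which $\bigl((\vec a, \cvector{\mathfrak{A}_w}), (\vec b, \cvector{\mathfrak{B}_v})\bigr)$ is a partial isomorphism, Duplicator wins the $k$-round game on $\mathfrak{A}_w, \mathfrak{B}_v$ starting from position $(\vec a, \vec b)$ if and only if for every $\fc$-formula $\varphi$ with free variables among $x_1, \dots, x_n$ and $\qrank(\varphi) \le k$, we have $(\mathfrak{A}_w, \vec a) \models \varphi \iff (\mathfrak{B}_v, \vec b) \models \varphi$. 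The theorem as stated is the case $n = 0$.

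The base case $k = 0$ is where one uses the finiteness of the signature: with no rounds left, Duplicator has already won iff $(\vec a, \vec b)$ is a partial isomorphism, and quantifier-free formulas of rank $0$ are Boolean combinations of atoms $\concrel(t_1, t_2, t_3)$ with each $t_j$ a variable among $x_1, \dots, x_n$ or a constant of $\signature_\Sigma$; the three bullet points of \cref{defn:partialiso} say exactly that $(\vec a, \vec b)$ preserves equalities with constants, equalities between chosen elements, and the relation $\concrel$ in both directions, which is precisely agreement on all such atoms. Because there are only finitely many constant symbols and one relation symbol of fixed arity, and because the partial isomorphism condition already folds in the images $\cvector{\mathfrak{A}_w}, \cvector{\mathfrak{B}_v}$ of the constants, this equivalence is immediate. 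The inductive step is the classical back-and-forth argument. For $(2) \Rightarrow (1)$ (Duplicator wins): given Spoiler's move, say $a_{n+1} \in A$, consider the rank-$k$ Hintikka-style formula describing the $\equiv_k$-type of $(\vec a, a_{n+1})$ in $\mathfrak{A}_w$ — up to logical equivalence there are only finitely many $\fc$-formulas of rank $\le k$ in $n+1$ variables, so the type is captured by a single formula $\theta$; then $\exists x_{n+1}\colon \theta$ has rank $\le k{+}1$ and holds in $(\mathfrak{A}_w, \vec a)$, hence in $(\mathfrak{B}_v, \vec b)$, yielding a witness $b_{n+1}$ with the same rank-$k$ type, which Duplicator plays and then appeals to the induction hypothesis. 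For $(1) \Rightarrow (2)$ (Spoiler wins): contrapositively, if some rank-$(k{+}1)$ formula distinguishes the two positions, it is a Boolean combination of atoms and formulas $Q x_{n+1}\colon \psi$ with $\qrank(\psi) \le k$; tracking down a distinguishing outermost quantified subformula tells Spoiler which structure to pick a witness in so as to reach positions distinguished by a rank-$k$ formula, and the induction hypothesis finishes.

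The one genuinely finite-signature-specific point to double-check — and the step I expect to require the most care — is the bookkeeping around the constant symbols: our games append $\cvector{\mathfrak{A}_w}$ and $\cvector{\mathfrak{B}_v}$ to the played tuples rather than having Duplicator pin them down during play, whereas the textbook version typically either has no constants or treats them as extra relations. I would handle this by noting that a constant $c$ can be simulated by a played element that both players are forced to select consistently, or more simply by observing that since $\subs(\mathtt{a}) = \mathtt{a}^{\mathfrak{A}_w}$ is built into every interpretation and $\cvector{\mathfrak{A}_w}$ is built into the winning condition, every atom $\concrel(t_1, t_2, t_3)$ with some $t_j$ a constant is, for the purpose of the partial-isomorphism check and the satisfaction relation, exactly an atom about the corresponding fixed component of the played tuple; so the standard argument applies verbatim to the expanded tuples $(\vec a, \cvector{\mathfrak{A}_w})$ and $(\vec b, \cvector{\mathfrak{B}_v})$. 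A minor subtlety is that $\perp$ may appear as the image of a constant $\mathtt{a}^{\mathfrak{A}_w}$ when $|w|_\mathtt{a} = 0$, but $\perp \in A$ is a legitimate element of the universe and the partial-isomorphism clauses treat it uniformly, so no special case is needed. Everything else is the routine induction, and since the paper only needs the statement (not a novel proof), citing \cite{ebbinghaus1999finite,immerman1998descriptive,libkin2004elements} for the general theorem and spelling out only that $\signature_\Sigma$-structures form a subclass of ordinary relational structures — as the paragraph preceding the theorem already observes — is the most economical route.
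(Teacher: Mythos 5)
Your proposal is correct and matches the paper's treatment: the paper gives no separate proof of this theorem, justifying it exactly by the observation (in the paragraph preceding the statement) that $\signature_\Sigma$-structures are ordinary finite structures over a finite signature with one ternary relation and $|\Sigma|+1$ constants, and citing the textbook Ehrenfeucht-\fraisse theorem. Your back-and-forth induction with the constants folded into the played tuples is precisely the standard argument contained in those references, so the only difference is that you spell it out where the paper merely cites it.
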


Thus,~\cref{thm:ehrenfeucht} is useful for proving inexpressibility for first-order logic over finite models.
A slight reformulation of~\cref{thm:ehrenfeucht} yields the following observation:

\begin{lemma}\label{obs:equivToLang}
Let $L \subseteq \Sigma^*$. If for every $k \in \mathbb{N}$, there exist  $w \in L$ and $u \notin L$ where $w \equiv_k u$, then $L$ is not an $\fc$-language.
\end{lemma}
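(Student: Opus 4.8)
The plan is to derive \cref{obs:equivToLang} as a direct contrapositive consequence of the Ehrenfeucht-\fraisse theorem (\cref{thm:ehrenfeucht}). Suppose, for contradiction, that $L$ \emph{is} an $\fc$-language, witnessed by some sentence $\varphi \in \fc$; that is, $L = \lang(\varphi) = \{ w \in \Sigma^* \mid \mathfrak{A}_w \models \varphi \}$. Since $\varphi$ is a fixed formula, it has a fixed quantifier rank; set $k \df \qrank(\varphi)$, so $\varphi \in \fc(k)$.

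Now I would invoke the hypothesis of the lemma at this particular value of $k$: there exist $w \in L$ and $u \notin L$ with $w \equiv_k u$. By \cref{thm:ehrenfeucht} (the implication from item~1 to item~2), $w \equiv_k u$ means $\mathfrak{A}_w \models \psi$ iff $\mathfrak{A}_u \models \psi$ for every $\psi \in \fc(k)$. Applying this to $\psi \df \varphi$ gives $\mathfrak{A}_w \models \varphi$ iff $\mathfrak{A}_u \models \varphi$. But $w \in L = \lang(\varphi)$ forces $\mathfrak{A}_w \models \varphi$, hence $\mathfrak{A}_u \models \varphi$, hence $u \in \lang(\varphi) = L$ — contradicting $u \notin L$. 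Therefore no such $\varphi$ exists, i.e.\ $L$ is not an $\fc$-language.

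The argument is essentially a one-line contrapositive, so there is no real ``hard part'': the only point requiring a word of care is that the quantifier rank $k$ must be chosen \emph{after} fixing $\varphi$ but the lemma's hypothesis supplies the witnesses $w, u$ for \emph{every} $k$, so in particular for this one — the order of quantifiers (``for every $k$, there exist $w,u$'') is exactly what makes the contradiction go through. One should also note that \cref{thm:ehrenfeucht} is stated for $\fc(k)$ consisting of sentences, which is fine since $\varphi$ is a sentence (it defines a language). If desired, one can also phrase this without contradiction: the hypothesis shows that for no $k$ can any $\fc(k)$-sentence separate $L$ from its complement, and since every $\fc$-sentence lies in some $\fc(k)$, no $\fc$-sentence defines $L$.
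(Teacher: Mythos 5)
Your proposal is correct and matches the paper's own proof: both fix a hypothetical defining sentence $\varphi$, take $k \df \qrank(\varphi)$, apply the hypothesis at that $k$, and use \cref{thm:ehrenfeucht} to transfer satisfaction of $\varphi$ from $w$ to $u$, contradicting $u \notin L$. Your added remark about the quantifier order in the hypothesis is a nice clarification but does not change the argument.
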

For our next step, we rely on an inexpressibility technique for languages over unary alphabets. 
Over a unary alphabet $\{\mathtt{a}\}$, we can identify every unary word $\mathtt{a}^n$ with the natural number $n$, which allows us to treat languages $L\subseteq \{\mathtt{a}\}^*$ as sets $S_L$ of natural numbers.

A set $S\subseteq\mathbb{N}$ is \emph{linear} if there exist an $r\geq 0$ and $m_0,\dots,m_r\in \mathbb{N}$ such that $S=\{m_0+\sum_{i=1}^r m_in_i \mid n_i\geq 0\}$. 
A set is \emph{semi-linear} if it is a finite union of linear sets, and we call a unary language $L$ semi-linear if the corresponding set $S_L$ is semi-linear.

Over a unary alphabet,  semi-linear languages are exactly the languages defined by \emph{Presburger arithmetic} (Ginsburg and Spanier~\cite{ginsburg1966semigroups}), core spanners (Freydenberger and Holldack~\cite{fre:doc}) or generalized core spanners (Peterfreund, ten Cate, Fagin, and Kimelfeld~\cite{peterfreund2019recursive}). By Freydenberger and Peterfreund~\cite{frey2019finite}, this means that $\fc$ (over unary alphabets) defines exactly the semi-linear languages.

As $2^n$ grows faster than any linear function, no finite union of linear sets can define $\{2^n\}$.
Hence, $L_{\mathsf{pow}}\df\{ \mathtt{a}^{2^n} \mid n \in \mathbb{N} \}$ is not semi-linear and, therefore, is not expressible in $\fc$.
From this, we  almost directly infer the following:

\begin{lemma}\label{lemma:pow2}
Let $\mathtt{a} \in \Sigma$. 
For every $k \in \mathbb{N}$, there exists $p,q \in \mathbb{N}$ such that $\mathtt{a}^{p} \equiv_k \mathtt{a}^{q}$ where $p \neq q$.
\end{lemma}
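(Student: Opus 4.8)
The plan is to derive this directly from the preceding discussion, so the argument is essentially a pigeonhole/compactness observation rather than a combinatorial game analysis. We already know that $L_{\mathsf{pow}} = \{\mathtt{a}^{2^n} \mid n \in \mathbb{N}\}$ is not semi-linear, hence not expressible in $\fc$, and by \cref{obs:equivToLang} this is witnessed by the failure of $\equiv_k$-separation at every quantifier rank. So first I would fix an arbitrary $k \in \mathbb{N}$ and consider the equivalence relation $\equiv_k$ on unary words $\mathtt{a}^n$. The key point is that $\equiv_k$ has only finitely many classes: by \cref{thm:ehrenfeucht}, $\mathtt{a}^p \equiv_k \mathtt{a}^q$ iff $\mathtt{a}^p$ and $\mathtt{a}^q$ satisfy exactly the same sentences of $\fc(k)$, and up to logical equivalence there are only finitely many sentences of quantifier rank at most $k$ over the fixed finite signature $\signature_\Sigma$ (with $|\Sigma|=1$). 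Hence $\equiv_k$ partitions the infinitely many unary words into finitely many classes, so by pigeonhole there must exist $p \neq q$ with $\mathtt{a}^p \equiv_k \mathtt{a}^q$. That already proves the lemma for every $k$.

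Since the lemma statement only asks for \emph{some} $p \neq q$ with $\mathtt{a}^p \equiv_k \mathtt{a}^q$, the pigeonhole argument above suffices and I would present exactly that. If one wanted the stronger statement that $p,q$ can be taken on opposite sides of $L_{\mathsf{pow}}$ (one a power of two, one not), that follows too: if for some $k$ every pair of $\equiv_k$-equivalent unary words were both in $L_{\mathsf{pow}}$ or both outside it, then $L_{\mathsf{pow}}$ would be a union of $\equiv_k$-classes, hence definable by a disjunction of $\fc(k)$-sentences (pick one characteristic sentence per class, which exists because there are finitely many classes and each class is the set of models of its $\fc(k)$-theory), contradicting that $L_{\mathsf{pow}} \notin \lang(\fc)$. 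I would include a brief remark to this effect, but the bare statement only needs the plain pigeonhole.

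The only genuinely non-routine ingredient is the finiteness of the number of $\equiv_k$-classes, i.e.\ that there are only finitely many inequivalent $\fc$-sentences of quantifier rank $\leq k$ over a fixed finite signature. This is the standard Ehrenfeucht-\fraisse fact (every quantifier-rank-$k$ type over a finite signature is characterized by a so-called Hintikka sentence, and there are finitely many of these); since the excerpt has set up $\signature_\Sigma$ as a genuine finite relational signature and invoked the general Ehrenfeucht-\fraisse theorem as \cref{thm:ehrenfeucht}, I would simply cite this standard fact (e.g.\ \cite{ebbinghaus1999finite,libkin2004elements}) rather than reprove it. So the expected main obstacle is essentially presentational: being careful that the finite-class fact applies to our restricted collection of word-structures $\mathfrak{A}_w$ (it does, since they form a subclass of all $\signature_\Sigma$-structures, as already noted in the paragraph preceding \cref{thm:ehrenfeucht}), and nothing deeper is required.
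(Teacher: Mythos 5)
Your proof is correct, but it takes a genuinely different route from the paper. The paper argues by contradiction through $L_{\mathsf{pow}}$: assuming that for some $k$ no power of two is $\equiv_k$-equivalent to a non-power, it concludes that $L_{\mathsf{pow}}$ would be defined by an $\fc(k)$-sentence, contradicting non-semi-linearity; note that this step is really the \emph{converse} direction of \cref{obs:equivToLang} (a language closed under $\equiv_k$ is a finite union of $\equiv_k$-classes, each definable by a rank-$k$ sentence), so the paper is implicitly using the same finitely-many-rank-$k$-types fact that you invoke explicitly. Your argument drops the semi-linearity machinery entirely: finitely many $\equiv_k$-classes plus infinitely many unary words gives, by pigeonhole, some $p \neq q$ with $\mathtt{a}^p \equiv_k \mathtt{a}^q$, which is exactly the stated lemma and is all that its later applications (e.g.\ \cref{example:anbn}, \cref{lemma:notCongruent}, the claim inside the proof of \cref{lemma:FCpumping}) ever use. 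What the paper's route buys is the sharper conclusion that the equivalent pair can be chosen to straddle $L_{\mathsf{pow}}$ (one exponent a power of two, the other not), which keeps the lemma aligned with \cref{obs:equivToLang}; your closing remark recovers that strengthening correctly via characteristic sentences of the finitely many classes, so nothing is lost. One small caution: your opening sentence says the non-definability of $L_{\mathsf{pow}}$ is "witnessed by the failure of $\equiv_k$-separation" \emph{by} \cref{obs:equivToLang} — as stated, that lemma only gives the implication in the other direction, so phrase that as a consequence of the finite-type argument (as you in fact do later) rather than of \cref{obs:equivToLang} itself.
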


\cref{lemma:pow2} is one of our main building blocks for $\fc$ inexpressibility.
That is, we shall utilize Duplicator's winning strategy for the $k$-round game over structures that represent $\mathtt{a}^p$ and $\mathtt{a}^q$ in order to construct winning strategies for more general structures.

A binary relation $R \subseteq (\Sigma^*)^2$ is a \emph{congruence relation} if $a \mathrel{R} b$ and $c \mathrel{R} d$ implies $(a \cdot c) \mathrel{R} (b \cdot d)$.
When considering the more common $\fo[<]$ over words (where words are encoded as a linear order with symbol predicates), we have that $\equiv_k$ is a congruence relation (see Section 3.2 of~\cite{thomas1993ehrenfeucht}).
Unfortunately, this does not hold for $\fc$.

\begin{proposition}\label{lemma:notCongruent}
There exists $u, u', v, v' \in \Sigma^*$ where, for any $k \geq 5$, we have that $u \equiv_k v$ and $u' \equiv_k v'$, but $u \cdot u' \not\equiv_k v \cdot v'$.
\end{proposition}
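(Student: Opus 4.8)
The plan is to exhibit concrete witnesses and exploit the failure of $\equiv_k$ to be a congruence relation, which we can expect to arise precisely because $\fc$ can talk about primitivity and periods of words but only in a ``local'' way controlled by the quantifier rank. The natural candidates are words whose individual $\equiv_k$-classes are insensitive to a small additive change in length (by \cref{lemma:pow2}, over a unary alphabet we can find $p \neq q$ with $\mathtt{a}^p \equiv_k \mathtt{a}^q$), but whose concatenation creates a new combinatorial feature — such as a square, a distinguished primitive root, or a border — that a rank-$k$ sentence can detect. Concretely, I would take $u = \mathtt{a}^p$, $v = \mathtt{a}^q$ from \cref{lemma:pow2} and set $u' = v' = \mathtt{a}^p$ (so trivially $u' \equiv_k v'$), aiming for the situation where $u \cdot u' = \mathtt{a}^{2p}$ has a factor equal to its own half while $v \cdot v' = \mathtt{a}^{p+q}$ need not; but since $\mathtt{a}^{2p} \equiv_k \mathtt{a}^{2q}$ might still hold, a purely unary example is unlikely to work for \emph{all} large $k$ simultaneously against a \emph{fixed} rank bound. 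So instead I would move to a binary alphabet and pick $u,v$ over $\{\mathtt{a}\}$ and $u',v'$ over $\{\mathtt{b}\}$, of the form $u = \mathtt{a}^p$, $v = \mathtt{a}^{p'}$, $u' = \mathtt{b}^q$, $v' = \mathtt{b}^{q'}$, chosen so that $pq$ has some divisibility property (e.g. $p = q$, making $\mathtt{a}^p\mathtt{b}^p$ detectably ``balanced'') that $p'q'$ lacks; the fixed constant $k \geq 5$ in the statement strongly suggests a small, explicit finite example rather than an asymptotic one.

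Given the bound $k \geq 5$, I would actually hunt for the smallest explicit instance. A clean choice: over $\Sigma \supseteq \{\mathtt{a},\mathtt{b}\}$, take $u = \mathtt{a}$, $v = \mathtt{a}^2$ will not satisfy $u \equiv_k v$ for large $k$, so that is too naive; the right move is to use \cref{lemma:pow2} to get $u = \mathtt{a}^p \equiv_5 \mathtt{a}^q = v$ with $p \neq q$ and similarly (or identically) $u' \equiv_5 v'$ over $\mathtt{b}$, then argue that $u u' = \mathtt{a}^p \mathtt{b}^{p'}$ and $v v' = \mathtt{a}^q \mathtt{b}^{q'}$ can be distinguished because a rank-$5$ $\fc$-sentence can express ``the $\mathtt{a}$-block and the $\mathtt{b}$-block have equal length'' — this is essentially the $\mathtt{a}^n\mathtt{b}^n$ phenomenon from~\cite{frey2019finite}, and if $p = p'$ but $q \neq q'$ we are done. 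The steps are then: (1) invoke \cref{lemma:pow2} with $k = 5$ (and note the strategies for larger $k$ restrict to give strategies for $5$) to obtain $p \neq q$ with $\mathtt{a}^p \equiv_k \mathtt{a}^q$; (2) by symmetry of the alphabet, the same game transported to the letter $\mathtt{b}$ gives $\mathtt{b}^p \equiv_k \mathtt{b}^q$, and we set $u' = v' = \mathtt{a}^p$ — wait, they must be the same letter as needed; cleaner is $u = \mathtt{a}^p, v = \mathtt{a}^q, u' = \mathtt{b}^p, v' = \mathtt{b}^p$ so $u' = v'$; (3) exhibit an explicit $\fc(5)$-sentence $\psi$ with $\mathtt{a}^p\mathtt{b}^p \models \psi$ but $\mathtt{a}^q\mathtt{b}^p \not\models \psi$, using a subformula (as in \cref{example:FCrelations}) to isolate the maximal $\mathtt{a}$-prefix and maximal $\mathtt{b}$-suffix and a copy-style atom to compare their lengths — but length comparison of equal-letter words is just equality-as-unary-numbers, which needs a $k$-independent bounded-rank formula only when we can phrase it as ``$\mathtt{a}^p = \mathtt{b}^p$ after renaming'', i.e. via a bijection factor argument expressible in constant rank; (4) conclude by \cref{thm:ehrenfeucht} that $uu' \not\equiv_5 vv'$, hence $\not\equiv_k$ for $k \geq 5$ as well since $\varphi \in \fc(5) \subseteq \fc(k)$.

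The main obstacle is step (3): producing a genuine \emph{constant-quantifier-rank} $\fc$-sentence that distinguishes $\mathtt{a}^p\mathtt{b}^p$ from $\mathtt{a}^q\mathtt{b}^p$, i.e. detects length-equality of the two letter-blocks, because naively ``$\mathtt{a}$-part equals $\mathtt{b}$-part in length'' is exactly the kind of counting that $\fc$ is weak at — indeed the whole paper is about $\fc$'s inability to do length comparisons in general. The resolution must be that over the \emph{specific} structure $\mathfrak{A}_{\mathtt{a}^n\mathtt{b}^m}$, the factors are extremely restricted (each factor is $\mathtt{a}^i\mathtt{b}^j$ for $i \leq n$, $j \leq m$, with at most one block boundary), so a bounded-rank formula \emph{can} pin down $\mathtt{a}^n$, $\mathtt{b}^m$, and the full word $\mathtt{a}^n\mathtt{b}^m$, and then assert $\exists t\colon (t \logeq \mathtt{a}^n \mathtt{b}^m) \wedge (\text{something forcing } n = m)$ — the equality $n = m$ being extractable because $\mathtt{a}^n\mathtt{b}^m$ is a square iff $n = m$ (namely $\mathtt{a}^n\mathtt{b}^n = (\mathtt{a}^n)(\mathtt{b}^n)$ is not a square, but $\mathtt{a}\mathtt{b}\mathtt{a}\mathtt{b}\cdots$ reasoning fails). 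Actually the cleanest route, which I would pursue, is: $\mathtt{a}^n\mathtt{b}^m$ has a factor of the form $\mathtt{a}^i\mathtt{b}^i$ of \emph{maximal} such length, and $n = m$ iff that maximal $\mathtt{a}^i\mathtt{b}^i$ has its $\mathtt{a}$-part equal to the $\mathtt{a}$-prefix of the word and its $\mathtt{b}$-part equal to the $\mathtt{b}$-suffix — but ``$\mathtt{a}^i\mathtt{b}^i$'' is itself not bounded-rank definable, so this circular. I therefore expect the actual proof to either (a) use a more clever pair of words over a possibly larger alphabet where the distinguishing feature is a \emph{border} or \emph{square} rather than block-length-equality — e.g. $u = \mathtt{a}^p$, $u' = \mathtt{a}^p$ giving the square $\mathtt{a}^{2p}$, distinguished from $\mathtt{a}^{2q}$... no, that is unary again — or (b) directly play the Ehrenfeucht–\fraisse game on $\mathfrak{A}_{uu'}$ versus $\mathfrak{A}_{vv'}$ and give Spoiler an explicit $5$-move winning strategy, deriving the needed $p,q$ from \cref{lemma:pow2} only to guarantee $u \equiv_k v$ and $u' \equiv_k v'$ while Spoiler exploits a concatenation-boundary element (a factor straddling the junction of $u$ and $u'$) that has no image in the other structure. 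Given the explicit ``$k \geq 5$'', option (b) with a hand-crafted $5$-round Spoiler strategy on the concatenations, using \cref{lemma:pow2}-supplied $p, q$ for the components, is the approach I would write up in full.
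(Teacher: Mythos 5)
There is a genuine gap: you correctly identify the obstacle to your own plan but never overcome it, and your committed witnesses are the wrong ones. With $u=\mathtt{a}^p$, $v=\mathtt{a}^q$, $u'=v'=\mathtt{b}^p$, the concatenations $\mathtt{a}^p\mathtt{b}^p$ and $\mathtt{a}^q\mathtt{b}^p$ differ only in the relative lengths of the two letter blocks, and no $\fc$-sentence of fixed quantifier rank can detect that: if some $\psi\in\fc(5)$ held on every $\mathtt{a}^n\mathtt{b}^n$ and failed on every $\mathtt{a}^m\mathtt{b}^n$ with $m\neq n$, then conjoining $\psi$ with an $\fc$-definition of the bounded regular language $\mathtt{a}^*\mathtt{b}^*$ (such definitions exist, cf.~\cref{lemma:bounded} and the discussion preceding it) would define $\{\mathtt{a}^n\mathtt{b}^n\mid n\in\mathbb{N}\}$ in $\fc$, contradicting \cref{example:anbn}. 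Worse, \cref{example:anbn} (the Pseudo-Congruence Lemma with $r=0$) shows that if $p,q$ are taken with $\mathtt{a}^p\equiv_{k+2}\mathtt{a}^q$, then in fact $\mathtt{a}^p\mathtt{b}^p\equiv_k\mathtt{a}^q\mathtt{b}^p$, so your candidate concatenations can outright satisfy the equivalence they are supposed to violate. Your fallback, a hand-crafted $5$-round Spoiler strategy on these two words, is never exhibited, and there is no reason to expect one exists, since the only feature separating them is precisely the length comparison that this whole development shows $\fc$ cannot perform.

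The missing idea is to make the detectable feature equality of \emph{factors as words} rather than equality of lengths. The paper takes $u=\mathtt{a}^p$, $v=\mathtt{a}^q$ from \cref{lemma:pow2} (with $p\neq q$ chosen per $k$) and $u'=v'=\mathtt{b}\cdot\mathtt{a}^p$, so that $u\cdot u'=\mathtt{a}^p\mathtt{b}\mathtt{a}^p$ has the shape $x\,\mathtt{b}\,x$ while $v\cdot v'=\mathtt{a}^q\mathtt{b}\mathtt{a}^p$ does not. The language $\{x\mathtt{b}x\mid x\in\Sigma^*\}$ is defined by a single sentence of quantifier rank exactly $5$: existentially choose $x,y,z$ with $(y\logeq x\cdot z)$ and $(z\logeq\mathtt{b}\cdot x)$, and force $y$ to be the entire input word using the rank-$2$ subformula from \cref{example:FCrelations}. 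By \cref{thm:ehrenfeucht}, this one fixed sentence witnesses $u\cdot u'\not\equiv_k v\cdot v'$ for every $k\geq 5$, while $u\equiv_k v$ by choice of $p,q$ and $u'\equiv_k v'$ trivially. You brushed past exactly this option when you dismissed squares on the grounds that they forced you back to a unary alphabet; the marked square $x\mathtt{b}x$, whose two halves must be equal as words (cheaply checkable through $\concrel$), is the non-unary variant your argument needed.
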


\cref{lemma:notCongruent} rules out what would be a convenient technique for strategy composition.
Despite this, in~\cref{sec:congruence} we  consider a special case where $\equiv$ can be used as if it were a congruence relation.

\section{FC Inexpressibility}\label{sec:pumping}
Our next goal is using Ehrenfeucht-\fraisse games in order to yield inexpressibility results for $\fc$.
To do so, we heavily utilize \emph{strategy composition}.
That is, we ``bootstrap'' known winning strategies for Duplicator -- gathered from inexpressibility results such as~\cref{lemma:pow2} -- and construct more general winning strategies for Duplicator.
The main technical contribution of this section is the Fooling Lemma for $\fc$ that is derived from the \emph{Pseudo-Congruence Lemma} and the \emph{Primitive Power Lemma}.
\cref{sec:congruence} contains the Pseudo-Congruence Lemma (\cref{lemma:congruence}), and~\cref{sec:primitivepower} contains the Primitive Power Lemma (\cref{lemma:primitivePower}).
Both these sections also contain many useful lemmas on the way to proving the respective result.
Before looking at these lemmas, we first demonstrate the capabilities of $\fc$ by giving a language that (somewhat surprisingly) is expressible in $\fc$.

For every $n \in \mathbb{N}$, we define $F_n \in \{ \mathtt{a}, \mathtt{b} \}^*$ recursively as follows: $F_0 \df \mathtt{a}$, $F_1 \df \mathtt{ab}$, and $F_i \df F_{i-1} \cdot F_{i-2}$ for all $i \geq 2$.
The \emph{Fibonacci word} is the limit $F_\omega$ (see~\cite{lothaire2002algebraic} for details on $F_\omega$ and its properties).
We show that the language $L_\mathsf{fib}$ containing those words $\mathtt{c} F_0 \mathtt{c} F_1 \mathtt{c} \cdots \mathtt{c} F_n  \mathtt{c}$ for $\mathtt{c} \in \Sigma$ and for all $n \in \mathbb{N}$ is expressible in $\fc$.

\begin{proposition}\label{prop:fib}
$L_\mathsf{fib} \in \lang(\fc)$.
\end{proposition}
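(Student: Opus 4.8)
The plan is to write an explicit $\fc$-sentence $\varphi_{\mathsf{fib}}$ that checks membership in $L_{\mathsf{fib}}$ by exploiting the recursive structure $F_i = F_{i-1} \cdot F_{i-2}$. First I would note that a candidate word has the shape $\mathtt{c} u_0 \mathtt{c} u_1 \mathtt{c} \cdots \mathtt{c} u_n \mathtt{c}$ where $\mathtt{c}$ does not occur in any block $u_i \in \{\mathtt{a},\mathtt{b}\}^*$; this ``list of $\mathtt{c}$-delimited blocks over $\{\mathtt{a},\mathtt{b}\}$'' shape is clearly $\fc$-definable (it amounts to saying the word starts and ends with $\mathtt{c}$, every maximal $\mathtt{c}$-free factor lies in $\{\mathtt{a},\mathtt{b}\}^*$, and no two $\mathtt{c}$'s are adjacent except we need at least $F_0$, i.e.\ at least one block). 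Then I would add the constraints pinning the blocks down: the first block equals $\mathtt{a}$ (i.e.\ $\mathtt{c}\mathtt{a}\mathtt{c}$ is a prefix), the second block, if present, equals $\mathtt{ab}$ (i.e.\ $\mathtt{c}\mathtt{ab}\mathtt{c}$ follows), and the key recursive clause: for every ``three consecutive blocks'' pattern $\mathtt{c} x \mathtt{c} y \mathtt{c} z \mathtt{c}$ occurring as a factor, we must have $z \logeq y \cdot x$. Crucially, a factor of the form $\mathtt{c} x \mathtt{c} y \mathtt{c} z \mathtt{c}$ with $x,y,z$ all $\mathtt{c}$-free can be recognized in $\fc$ by first defining ``$s$ is a $\mathtt{c}$-free factor'' (using the constant $\mathtt{c}$ and concatenation: $s$ has no factor equal to $\mathtt{c}$) and then quantifying over such $x,y,z$ together with the delimiters.

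The heart of the argument is then a correctness proof that $\lang(\varphi_{\mathsf{fib}}) = L_{\mathsf{fib}}$. The ``$\supseteq$'' direction is immediate since each $\mathtt{c} F_0 \mathtt{c} \cdots \mathtt{c} F_n \mathtt{c}$ satisfies every clause by construction and by $F_i = F_{i-1}F_{i-2}$. For ``$\subseteq$'', suppose $w \models \varphi_{\mathsf{fib}}$; write $w = \mathtt{c} u_0 \mathtt{c} u_1 \mathtt{c} \cdots \mathtt{c} u_n \mathtt{c}$ (this decomposition exists and is unique because $w$ has the right shape). Induction on the block index shows $u_i = F_i$: the base cases $u_0 = \mathtt{a} = F_0$ and $u_1 = \mathtt{ab} = F_1$ are the explicit clauses, and for $i \geq 2$ the factor $\mathtt{c} u_{i-2} \mathtt{c} u_{i-1} \mathtt{c} u_i \mathtt{c}$ occurs in $w$, so the recursive clause forces $u_i = u_{i-1} \cdot u_{i-2} = F_{i-1} F_{i-2} = F_i$. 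Hence $w = \mathtt{c} F_0 \mathtt{c} \cdots \mathtt{c} F_n \mathtt{c} \in L_{\mathsf{fib}}$.

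The step I expect to require the most care is making the ``$\mathtt{c}$-free block'' and ``consecutive-triple-of-blocks'' predicates genuinely expressible in the restricted syntax of $\fc$ (only ternary concatenation atoms $x \logeq y \cdot z$, no built-in substring-ordering beyond what concatenation gives). Defining ``$s$ contains no occurrence of $\mathtt{c}$'' requires care with the empty block and with the interplay of prefixes/suffixes; it is cleanest to use the already-established fact (\cref{example:FCrelations}) that factor-relations like $R_{\mathsf{copy}}$ and the whole-word predicate are $\fc$-definable, and to bootstrap a ``$s \sqsubseteq t$'' atom and a ``$\mathtt{c} \notin s$'' atom as syntactic sugar. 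Identifying which $\mathtt{c}$-delimited factor is the ``first'' block versus an ``interior triple'' also needs the boundary conditions ($\mathtt{c} x \mathtt{c} y \mathtt{c} z \mathtt{c}$ must be a genuine infix, with $x,y,z$ each maximal $\mathtt{c}$-free), but all of this is a routine, if slightly tedious, encoding once the auxiliary predicates are in place. No new combinatorics on words is needed; this proposition is purely a positive expressibility statement illustrating the power of $\fc$, in contrast to the inexpressibility machinery developed in the rest of the section.
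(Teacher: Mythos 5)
Your proposal is correct and follows essentially the same route as the paper's proof: enforce the delimited-block shape with the fixed prefix $\mathtt{c}\mathtt{a}\mathtt{c}\mathtt{ab}\mathtt{c}$ and no adjacent $\mathtt{c}$'s, then universally quantify over factors $\mathtt{c} y_1 \mathtt{c} y_2 \mathtt{c} y_3 \mathtt{c}$ and force $y_3 \logeq y_2 \cdot y_1$, with correctness by induction on blocks. The only cosmetic difference is that the paper avoids your ``maximal $\mathtt{c}$-free block'' predicate by guarding the recursive clause with the disjunct ``some $y_i$ contains a $\mathtt{c}$'' (via $\varphi_\mathtt{c}(y_i) \df \exists y,z\colon (y_i \logeq y \cdot \mathtt{c} \cdot z)$), which is logically the same restriction you impose.
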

While the definition suggests that expressing $L_\mathsf{fib}$ requires a logic that allows recursion, we can use the universal quantifier to simulate recursion in certain cases.
As a curious aside, this also shows that $\fc$ does not have a pumping lemma in the sense that for sufficiently long words, some non-empty factor can be repeated an arbitrary number of times without falling out of the language.
This is due to the fact that $F_{\omega}$ does not contain any factor of the form $u^4$ with $u\neq\emptyword$, see Karhumäki~\cite{karhumaki1983cube}.

\subsection{The Pseudo-Congruence Lemma}\label{sec:congruence}
In this section, we first give some necessary lemmas, we then give the Pseudo-Congruence Lemma along with a proof idea. 
To conclude this section, we consider some consequences.

First, we consider the case where Spoiler has chosen a factor that is so short (with respect to the number of remaining rounds) that Duplicator must respond with the identical factor or lose.

\begin{lemma}\label{lemma:consistentStrats}
Let $\mathfrak{A}_w$ and $\mathfrak{B}_v$ be $\signature_\Sigma$-structures that represent $w \in \Sigma^*$ and $v \in \Sigma^*$, where $\mathfrak{A}_w \equiv_k \mathfrak{B}_v$.
Let $\vec a = (a_1, a_2, \dots, a_{k+|\Sigma|+1})$ and $\vec b= (b_1, b_2, \dots, b_{k+|\Sigma|+1})$ be the tuple resulting from a $k$-round game over $\mathfrak{A}_w$ and $\mathfrak{B}_v$ where Duplicator plays their winning strategy.
If $r + |a_r| -1 < k$ or $r + |b_r| - 1 < k$ for some $r \in [k]$, then $b_r = a_r$. 
\end{lemma}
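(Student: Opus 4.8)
The plan is to argue by induction on the length of the short factor, using the fact that Duplicator's winning strategy must survive every continuation of the game that Spoiler is still free to choose. Fix $r \in [k]$ and suppose, say, $r + |a_r| - 1 < k$ (the case $r + |b_r| - 1 < k$ is symmetric, swapping the roles of the two structures). The intuition is that a factor of length $\ell$ in $\mathfrak{A}_w$ is ``spelled out'' by $\ell$ concatenation steps, and since $\concrel$ restricted to constants fixes the single-letter factors, a short enough factor is rigidly determined in both structures; if $b_r \neq a_r$, Spoiler can spend the remaining $k - r$ rounds exposing a concatenation discrepancy.

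Concretely, I would proceed as follows. First I would observe that the single-letter factors are pinned down: for each $\mathtt{a} \in \Sigma$ with $|w|_\mathtt{a} \geq 1$, the constant symbol $\mathtt{a}$ sits in $\cvector{\mathfrak{A}_w}$, and by the partial-isomorphism condition on constants, Duplicator's response to the element $\mathtt{a}^{\mathfrak{A}_w}$ must be $\mathtt{a}^{\mathfrak{B}_v}$; so if Spoiler plays a letter, Duplicator is forced to copy it. Next, the key step: I claim by induction on $\ell \geq 1$ that if, at some point with at least $\ell$ rounds remaining, Spoiler plays a factor $a$ of $w$ with $|a| = \ell$ and $a$ is a factor of $v$ as well, then Duplicator's winning strategy forces the response $b = a$. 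For $\ell = 1$ this is the constant case just discussed together with the observation that the only length-one factors are letters. For the inductive step, write $a = a' \cdot \mathtt{a}$ where $|a'| = \ell - 1$; suppose Duplicator responds with some $b \neq a$. Spoiler then uses one round to play the letter $\mathtt{a}$ (forcing the matching letter $\mathtt{a}$ in the other structure), and — if $b$ has the form $b' \cdot \mathtt{a}$ — uses another round to play $b'$ in $\mathfrak{B}_v$; since $|b'| = \ell - 1$ and (using $|a| = |b|$, which partial isomorphism via the copy-relation / length considerations forces) $b'$ is a factor of $w$ of length $\ell - 1$, the induction hypothesis forces Duplicator to answer $a'$ with $a' = b'$, whence $a = a' \mathtt{a} = b' \mathtt{a} = b$, a contradiction. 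If instead $b$ does not end in $\mathtt{a}$, then the triple $(b, b', \mathtt{a})$ cannot lie in $\concrel^{\mathfrak{B}_v}$ while $(a, a', \mathtt{a}) \in \concrel^{\mathfrak{A}_w}$, so Spoiler plays $a'$ in $\mathfrak{A}_w$ and Duplicator cannot maintain the partial isomorphism on the concatenation condition no matter what they answer. Finally I would apply this claim to the actual play: in round $r$, at most $r - 1$ rounds have been used, so at least $k - r + 1 > |a_r| - 1$, i.e.\ at least $|a_r|$, rounds remain counting round $r$ itself — enough budget to run the argument — and since Duplicator wins the whole game, $(\vec a, \vec b)$ is a partial isomorphism, which in particular forces $a_r$ and $b_r$ to be factors of $v$ and $w$ respectively of equal length. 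Hence $b_r = a_r$.

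The main obstacle I anticipate is the bookkeeping on round budgets: I need to be careful that decomposing $a = a' \cdot \mathtt{a}$ and recursing genuinely fits inside the $k - r$ rounds that remain after round $r$, and that at each recursive level I am charging exactly the rounds I am allowed to. Getting the inequality $r + |a_r| - 1 < k$ to translate cleanly into ``$|a_r| - 1$ further rounds suffice'' is the delicate point, since the recursion peels one letter per ``probe'' but each probe may cost up to two rounds (one for the letter, one for the prefix); a cleaner bound comes from noting that once the letter at the end is fixed it stays fixed for free, so the recursion depth, not a branching count, controls the budget. The other thing to handle with care is the degenerate case $a_r = \emptyword$, where the bound $|a_r| = 0$ makes the hypothesis $r - 1 < k$ vacuous but the conclusion $b_r = \emptyword$ still needs the constant $\emptyword$ to be in $\cvector{\mathfrak{A}_w}$ and pinned by partial isomorphism — which it is.
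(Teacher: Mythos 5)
Your overall idea---that a short factor is ``spelled out'' by the constants and that Spoiler can spend the remaining rounds exposing any discrepancy---is the same as the paper's, but two steps in your execution do not hold up as written. The critical one is your repeated appeal to ``partial isomorphism forces $|a| = |b|$'' (used to conclude that $b'$ is a factor of $w$ in the inductive step, and that $a_r$ is a factor of $v$ of the same length as $b_r$ in the final application). A partial isomorphism in this signature carries no length information at all: Duplicator may perfectly well answer a factor with one of a different length, and indeed must do so in the winning strategies witnessing $\mathtt{a}^p \equiv_k \mathtt{a}^q$ with $p \neq q$ that the paper relies on later. That a sufficiently short factor must be answered by the identical---in particular equally long---factor is precisely the content of the lemma being proved, so invoking it is circular; and without it the side condition of your induction hypothesis (``$a$ is a factor of $v$ as well'') can never be discharged, neither inside the inductive step nor when you finally apply the claim to $a_r$.

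The second problem is the round budget you yourself flag: charging one round to ``play the letter'' and one for the prefix costs up to $2(|a_r|-1)$ rounds against a budget of $k-r \geq |a_r|$, and the promised ``cleaner bound'' is never carried out. (The fix is that letters cost no rounds at all: every constant $\mathtt{a}^{\mathfrak{A}_w}$, $\mathtt{a}^{\mathfrak{B}_v}$ is already a component of the final tuples via $\cvector{\mathfrak{A}_w}$ and $\cvector{\mathfrak{B}_v}$.) The paper avoids both issues by having Spoiler play, in rounds $r+1,\dots,r+|a_r|-1$ and always inside $\mathfrak{A}_w$, the increasing prefixes $\mathtt{a}_1\mathtt{a}_2$, $\mathtt{a}_1\mathtt{a}_2\mathtt{a}_3,\dots$ up to $a_r$ itself. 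Each of Duplicator's answers is forced to be literally the same word (first because it is a concatenation of two constants, afterwards because it is the previously chosen element concatenated with a constant), the existence of these answers in $\facts(v)$ comes for free from the assumption that Duplicator wins, and the equality condition of the partial isomorphism applied to $a_{r+|a_r|-1} = a_r$ then contradicts $b_r \neq a_r$. If you restate your claim so that it asserts outright that Duplicator's answer equals the played factor (with no cross-membership hypothesis) and let all letters come from the constant components, your right-peeling induction can be repaired; as written, however, it has a genuine gap.
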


Next, we show that for rounds $r$ with $r \leq k-2$, if Spoiler picks a prefix (or a suffix), then Duplicator must respond with a prefix (or suffix, respectively); assuming they play a winning strategy.

\begin{lemma}\label{lemma:prefixSuffix}
Let $\mathfrak{A}_w$ and $\mathfrak{B}_v$ be two $\signature_\Sigma$-structures that represent $w \in \Sigma^*$ and $v \in \Sigma^*$, where $\mathfrak{A}_w \equiv_k \mathfrak{B}_v$.
Let $\vec a = (a_1, a_2, \dots, a_{k+|\Sigma|+1})$ and $\vec b= (b_1, b_2, \dots, b_{k+|\Sigma|+1})$ be the tuple resulting from a $k$-round game over $\mathfrak{A}_w$ and $\mathfrak{B}_v$ where Duplicator plays their winning strategy.
For any $r \leq k-2$, we have that $a_r$ is a suffix (or prefix) of $w$ if and only if $b_r$ is a suffix (or prefix respectively) of~$v$. 
\end{lemma}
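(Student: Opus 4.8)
The plan is to exploit the fact that being a prefix (respectively, suffix) of the input word is a property that is almost first-order definable in $\fc$ using only a couple of quantifiers, and then to use the structure of Duplicator's winning strategy together with~\cref{thm:ehrenfeucht} (or directly with a local re-play argument) to transfer the property. Concretely, recall from~\cref{example:FCrelations} that there is an $\fc$-formula $\varphi_w(x)$ of low quantifier rank stating ``$x$ is the whole input word''. The key observation is that ``$x$ is a prefix of $w$'' can be written as $\exists u \colon \exists z \colon \bigl(\varphi_w(z) \land (z \logeq x \cdot u)\bigr)$ — roughly speaking, pin down the whole word $z$ and say $z = x \cdot u$ — and symmetrically ``$x$ is a suffix of $w$'' is $\exists u \colon \exists z \colon \bigl(\varphi_w(z) \land (z \logeq u \cdot x)\bigr)$. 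I would first check the exact quantifier rank of $\varphi_w$ as used here (it costs $\bigl(z_1, z_2\bigr)$-style quantification inside a negation, so constant, at most $2$), so that ``is a prefix/suffix'' has quantifier rank at most $c$ for a small explicit constant $c$; the bound $r \leq k-2$ in the statement strongly suggests $c = 2$.

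The main argument then runs as a sub-game (re-play) argument rather than a bare invocation of the Ehrenfeucht-\fraisse theorem, because we need the statement about a single intermediate round $r$, not just about sentences. Suppose, for contradiction, that after round $r$ (with $r \le k-2$) we have $a_r$ a prefix of $w$ but $b_r$ not a prefix of $v$ (the suffix case is symmetric, and the two ``only if'' directions are symmetric by swapping the structures). Then Spoiler, who still has at least $k - r \ge 2$ rounds left, plays as follows: pick the structure $\mathfrak{A}_w$ and play $a_{r+1} \df w$ (the whole word); Duplicator must answer with some $b_{r+1} \in \facts(v)$, and — by the argument that pins down the universe element, i.e.\ because $\varphi_w$ has quantifier rank small enough to survive the remaining rounds, or more elementarily because if $b_{r+1} \neq v$ then Spoiler wins in one further round by extending $b_{r+1}$ by a letter as in the worked example — we may assume $b_{r+1} = v$. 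Now the partial isomorphism condition on the elements played so far forces: there exists an element witnessing $a_{r+1} = a_r \cdot a_?$ in $\mathfrak{A}_w$ (since $a_r$ is a prefix of $w = a_{r+1}$, the suffix $u$ with $w = a_r u$ is a factor of $w$, hence in the universe $A$), so Spoiler plays that $u$; Duplicator must answer with some $b \in \facts(v)$ with $v = b_r \cdot b$, which is impossible since $b_r$ is not a prefix of $v$. That contradiction uses exactly two extra rounds, matching $r \le k-2$, and yields the claim.

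I would organize the write-up by: (i) stating and proving the small claim that in any round $r \le k-1$ where Duplicator follows a winning strategy, if Spoiler plays the whole word $w$ then Duplicator must answer with $v$ (a one-round extension argument, essentially the worked example in the text, or a citation to~\cref{lemma:consistentStrats}/\cref{thm:ehrenfeucht} applied to $\varphi_w$); (ii) the two-round reduction above for the prefix case; (iii) remarking that the suffix case is obtained by reading concatenation ``from the other side'', and the two directions of the biconditional by swapping $\mathfrak{A}_w$ and $\mathfrak{B}_v$. The main obstacle I anticipate is bookkeeping the round budget precisely: one must make sure that after round $r$ there really are enough rounds left both to force Duplicator to answer $w \mapsto v$ and to then play the witnessing suffix/prefix $u$ and derive the contradiction — this is where the hypothesis $r \le k-2$ is spent, and getting an off-by-one wrong here (e.g.\ needing $r \le k-3$) would be the easy trap. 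A secondary subtlety is the $\perp$ element and the degenerate cases $a_r = w$ or $a_r = \emptyword$, which are trivially prefixes and suffixes and should be handled separately so that the witness $u$ is a genuine factor.
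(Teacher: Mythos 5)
Your proposal is correct and follows essentially the same route as the paper's proof: Spoiler spends one of the two remaining rounds playing the whole word $w$ (with Duplicator forced to answer $v$, since otherwise Spoiler wins by extending Duplicator's answer by a single letter), and then plays the witnessing factor $u$ with $w = a_r \cdot u$ (resp.\ $w = u \cdot a_r$), so the partial-isomorphism condition forces $b_r$ to be a prefix (resp.\ suffix) of $v$. The only cosmetic differences are that the paper plays these moves in rounds $k-1$ and $k$ rather than $r+1$ and $r+2$, writes out the suffix case instead of the prefix case, and presents the ``Duplicator must answer $v$'' step as an explicit case split rather than a preliminary claim.
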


While we mainly use~\cref{lemma:consistentStrats} and~\cref{lemma:prefixSuffix} to prove the Pseudo-Congruence Lemma, they also provide some insights  into necessary conditions of Duplicator's strategy.

In some of the subsequent proofs, we use strategy compositions that require quite technical proofs of correctness. 
To avoid handwaving, we shall sometimes define Duplicator's strategy using what we call \emph{look-up games}.
If Spoiler and Duplicator are playing a $k$-round game $\mathcal{G}$ over $\mathfrak{A}_w$ and $\mathfrak{B}_v$, a \emph{look-up} game is a $k'$-round auxiliary game $\mathcal{G}'$ over two (potentially different) structures $\mathfrak{C}$ and $\mathfrak{D}$, where $\mathfrak{C} \equiv_{k'} \mathfrak{D}$.
The idea is that in the $i$-th round in $\mathcal{G}$, each move by Spoiler corresponds to a move by Spoiler in $\mathcal{G}'$.
Then, Duplicator ``looks up'' what their response would be in $\mathcal{G}'$ to form their response to Spoiler in $\mathcal{G}$.
Generalizing this idea, Duplicator can use multiple look-up games $\mathcal{G}_1, \mathcal{G}_2, \dots, \mathcal{G}_n$ to form their response in $\mathcal{G}$.
In other words, if Duplicator has winning strategies in $\mathcal{G}_1, \mathcal{G}_2,  \dots, \mathcal{G}_n$, then Duplicator has a winning strategy in $\mathcal{G}$.

While $\equiv_k$ is not a congruence relation in general (see~\cref{lemma:notCongruent}), we identify a special case where we can use $\equiv$ as if it were a congruence relation.

\begin{lemma}[Pseudo-Congruence Lemma]\label{lemma:congruence}
Let $w_1, w_2, v_1, v_2 \in \Sigma^*$ where $\facts(w_1) \intersect \facts(w_2) = \facts(v_1) \intersect \facts(v_2)$, and let $r \df \mathsf{max} \{|u| \in \mathbb{N} \mid u \in \facts(w_1) \intersect \facts(w_2) \}$.
If $w_1 \equiv_{k+ r + 2} v_1$ and $w_2 \equiv_{k+ r + 2} v_2$ for some $k \in \mathbb{N}_+$, then $w_1 \cdot w_2 \equiv_k v_1 \cdot v_2$.
\end{lemma}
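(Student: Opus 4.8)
The plan is to prove the Pseudo-Congruence Lemma by describing Duplicator's winning strategy for the $k$-round game over $\mathfrak{A}_{w_1 w_2}$ and $\mathfrak{B}_{v_1 v_2}$ as a composition of look-up games, one for each of the ``sides'' (the $w_1 \equiv_{k+r+2} v_1$ game and the $w_2 \equiv_{k+r+2} v_2$ game), plus bookkeeping to handle factors that straddle the $w_1$--$w_2$ boundary. The key structural fact we exploit is that $\facts(w_1 w_2) = \facts(w_1) \cup \facts(w_2) \cup S$ where $S$ is the set of ``boundary factors'' — factors that use at least one symbol from the $w_1$-part and at least one from the $w_2$-part — and each such boundary factor decomposes uniquely as $u_1 \cdot u_2$ with $u_1$ a (nonempty) suffix of $w_1$ and $u_2$ a (nonempty) prefix of $w_2$; and the same holds for $v_1 v_2$. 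Moreover, the hypothesis $\facts(w_1) \cap \facts(w_2) = \facts(v_1) \cap \facts(v_2)$ is what guarantees that the ``shared'' factors behave consistently between the two sides.

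First I would set up notation: fix the $k$-round game $\mathcal{G}$ over $\mathfrak{A}_{w_1 w_2}$ and $\mathfrak{B}_{v_1 v_2}$, and fix Duplicator's winning strategies in the auxiliary games $\mathcal{G}_1$ over $\mathfrak{A}_{w_1}, \mathfrak{B}_{v_1}$ and $\mathcal{G}_2$ over $\mathfrak{A}_{w_2}, \mathfrak{B}_{v_2}$, each played for $k + r + 2$ rounds (the extra $r+2$ rounds give Duplicator enough ``budget'' to reveal the short shared/boundary factors, invoking \cref{lemma:consistentStrats} and \cref{lemma:prefixSuffix}). When Spoiler, in round $i$ of $\mathcal{G}$, picks an element $a_i \in \facts(w_1 w_2) \cup \{\perp\}$, Duplicator classifies it: if $a_i \in \facts(w_1)$, feed it as Spoiler's move in $\mathcal{G}_1$ and return the response; if $a_i \in \facts(w_2)$, feed it to $\mathcal{G}_2$; if $a_i = u_1 u_2$ is a boundary factor, feed $u_1$ (suffix of $w_1$) to $\mathcal{G}_1$ and $u_2$ (prefix of $w_2$) to $\mathcal{G}_2$, get responses $u_1'$ (suffix of $v_1$, by \cref{lemma:prefixSuffix}) and $u_2'$ (prefix of $v_2$), and play $u_1' u_2'$. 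A subtlety: a factor in $\facts(w_1) \cap \facts(w_2)$ must be fed consistently — I would always route such factors to, say, $\mathcal{G}_1$, and separately argue via a third look-up game (or via \cref{lemma:consistentStrats}, since all shared factors have length $\le r$ and after the $r+2$-round buffer they are pinned to the identity) that the $\mathcal{G}_2$-strategy would give the same answer on them. The symmetric routing handles Spoiler playing in $\mathfrak{B}_{v_1 v_2}$.

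Then I would verify that the resulting tuple $(\vec a, \vec b)$ is a partial isomorphism. Constant symbols: $\emptyword$ is handled trivially and each $\mathtt{a} \in \Sigma$ appears in $w_1 w_2$ iff it appears in $w_1$ or $w_2$, and likewise for $v_1 v_2$, so the constant condition transfers from the two sub-games. Equality of chosen elements: two elements are equal iff they have the same classification (into $\facts(w_1)$-only, $\facts(w_2)$-only, shared, or boundary with matching decomposition) and the corresponding sub-game responses agree — this uses that the decomposition of a boundary factor is unique and that the two sub-strategies are partial isomorphisms. The concatenation condition $a_i = a_j \cdot a_k$ is the heart of the argument: I need to check, for all ways $j, k$ can combine to form $i$, that the relation is preserved. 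The cases where all three lie on one side reduce to that side's game; the mixed cases (e.g.\ $a_j$ a suffix of $w_1$, $a_k$ a prefix of $w_2$, $a_i$ the boundary factor) are where the unique-decomposition property and the $r+2$-round buffer do the work — one must rule out spurious concatenations by noting that any factor involved in a ``crossing'' relation is either short (length $\le r$, pinned by \cref{lemma:consistentStrats}) or is forced by \cref{lemma:prefixSuffix} to be a prefix/suffix on both sides.

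The main obstacle I expect is precisely this concatenation-preservation case analysis for boundary and shared factors — in particular ensuring that a concatenation $a_i = a_j \cdot a_k$ which ``looks'' like it crosses the boundary on one side is not matched to a non-crossing (or differently-crossing) configuration on the other side. This is exactly where the hypotheses $\facts(w_1)\cap\facts(w_2) = \facts(v_1)\cap\facts(v_2)$ and the choice of $r$ as the maximal shared-factor length are essential: the shared factors are the only factors that admit more than one ``residence,'' they are short, and the congruence of the two sides forces them to be handled identically. I would organize this as a sequence of claims — (i) classification is preserved by Duplicator's responses, (ii) boundary decompositions match across structures, (iii) short factors are pinned to the identity on both sides — and then dispatch the partial-isomorphism conditions uniformly using these claims.
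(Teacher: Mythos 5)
Your proposal is correct and follows essentially the same route as the paper's proof: two $(k{+}r{+}2)$-round look-up games composed via the decomposition of $\facts(w_1 w_2)$ into $\facts(w_1)$, $\facts(w_2)$, and uniquely-splitting boundary factors, with \cref{lemma:consistentStrats} pinning the short shared factors and \cref{lemma:prefixSuffix} forcing boundary pieces to map to suffixes/prefixes. The only cosmetic difference is that the paper feeds shared factors to both look-up games (rather than routing them to one and arguing the other agrees), and it then carries out the concatenation-preservation case analysis you correctly identify as the heart of the argument.
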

\begin{sketch}
We consider two look-up games $\mathcal{G}_1$ and $\mathcal{G}_2$.
For $\iota \in \{1, 2 \}$, the game $\mathcal{G}_\iota$ is a $k+r+2$ round game over $w_\iota$ and $v_\iota$.
We assume Duplicator plays $\mathcal{G}_1$ and $\mathcal{G}_2$ using their winning strategy.
We use these look-up games to give Duplicator's strategy for the $k$ round game $\mathcal{G}$ over $w \df w_1 \cdot w_2$ and $v \df v_1 \cdot v_2$.

For each round $p \in [k]$ of $\mathcal{G}$, we determine Spoiler's choice in $\mathcal{G}_1$ and $\mathcal{G}_2$ from Spoiler's choice in $\mathcal{G}$.
If Spoiler chooses some $u \in \facts(w_\iota) \union \facts(v_\iota)$, where $\iota \in \{ 1, 2 \}$, then we let Spoiler's choice in round $p$ of $\mathcal{G}_\iota$ be $u$.
If Spoiler chooses some $u \in \facts(w) \setminus (\facts(w_1) \union \facts(w_2))$, then we consider $u_1 \in \facts(w_1)$ and $u_2 \in \facts(w_2)$ such that $u = u_1 \cdot u_2$. Then, we let Spoiler choose $u_1$ in round $p$ of $\mathcal{G}_1$ and $u_2$ in round $p$ of $\mathcal{G}_2$; see~\cref{fig:middle}.
We do the analogous action if Spoiler chooses $u \in \facts(v) \setminus (\facts(v_1) \union \facts(v_2))$.
If, for example, Spoiler chooses $u \in \facts(w_1) \setminus \facts(w_2)$, then we assume that Spoiler ``skips'' round $p$ of $\mathcal{G}_2$.
Analogously, Spoiler skips round $p$ of $\mathcal{G}_1$ if Spoiler chose $u \in \facts(w_2) \setminus \facts(w_1)$ in round $p$ of $\mathcal{G}$.

To sum up Duplicator's strategy for $\mathcal{G}$ informally: If Spoiler chooses an element from either $\facts(w_1)$ or $\facts(v_1)$, then Duplicator responds with their winning strategy for $\mathcal{G}_1$.
Likewise, if Spoiler chooses from either $\facts(w_2)$ or $\facts(v_2)$, then Duplicator responds with their winning strategy for $\mathcal{G}_2$.
Finally, if Spoiler chooses $u \in \facts(w) \setminus \bigl( \facts(w_1) \union \facts(w_2) \bigr)$ or $u \in \facts(v) \setminus \bigl( \facts(w_2) \union \facts(v_2) \bigr)$, then we split $u = u_1 \cdot u_2$ where $(u_1, u_2) \in \facts(w_1) \times \facts(w_2)$, or $(u_1, u_2) \in \facts(v_1) \times \facts(v_2)$, and Duplicator responds with the concatenation of their winning strategy when Spoiler chooses $u_1$ in $\mathcal{G}_1$ and their winning strategy when Spoiler chooses $u_2$ in $\mathcal{G}_2$.

The proof that this composition of strategies is a winning strategy for Duplicator is somewhat tedious.
However, in this sketch, we informally explain how this strategy is indeed a strategy for Duplicator over $w$ and $v$.
That is, Duplicator's strategy is well-defined and Duplicator always chooses a factor of $w$ or $v$.

The extra $r+2$ rounds for $\mathcal{G}_1$ and $\mathcal{G}_2$ are to ensure that Duplicator makes certain choices.
If in $\mathcal{G}$ Spoiler chooses some $u \in \facts(w_1) \intersect \facts(w_2)$ for some round $p \leq k$, then $|u| \leq r$ holds (from the lemma statement).
Therefore, in both $\mathcal{G}_1$ and $\mathcal{G}_2$, Duplicator responds with $u$, see~\cref{lemma:consistentStrats}.
The same holds when Spoiler chooses some $u \in \facts(v_1) \intersect \facts(v_2)$.
Therefore, Duplicator's choice is well defined for the case where Spoiler chooses some $u \in \facts(w_1) \intersect \facts(w_2)$ due to the fact that Duplicator's responses in $\mathcal{G}_1$ and $\mathcal{G}_2$ coincide.
The same holds when Spoiler chooses some factor in $\facts(v_1) \intersect \facts(v_2)$.

\begin{figure}
\centering
\begin{tikzpicture}[fill=white]
\tikzset{
    position label/.style={
       below = 3pt, 
       text height = 1.5ex,
       text depth = 1ex
    },
   brace/.style={
     decoration={brace, mirror},
     decorate
   }
}

\draw [-, line width=0.4mm] (-3,0) -- (3,0);
\draw [-, line width=0.4mm] (0, -.1) -- (0,.1);
\draw [-, line width=0.4mm] (3, -.1) -- (3,.1);
\draw [-, line width=0.4mm] (-3, -.1) -- (-3,.1);
\draw [decorate,decoration = {calligraphic brace, amplitude=7}, line width=0.3mm] (-3, 0.2) -- (-0.1, 0.2) node [above, xshift= -40pt, yshift=5.5pt] {$w_1$};
\draw [decorate,decoration = {calligraphic brace, amplitude=7}, line width=0.3mm] (0.1, 0.2) -- (3, 0.2) node [above, xshift= -40pt, yshift=5.5pt] {$w_2$};
\draw [decorate,decoration = {calligraphic brace, amplitude=7}, line width=0.3mm] (2, -0.2) -- (-2, -0.2) node [below, xshift= 57pt, yshift= -5.5pt] {$u$};

\end{tikzpicture}
\caption{Illustration of factors in $\facts(w) \setminus ( \facts(w_1) \union \facts(w_2))$.}
\label{fig:middle}
\end{figure}

Let us consider the case where for round $p \leq k$ in $\mathcal{G}$, Spoiler chooses some $u \in \facts(w) \setminus (\facts(w_1) \union \facts(w_2))$.
Observing~\cref{fig:middle}, we can split $u$ into the suffix $u_1$ of $w_1$ and the prefix $u_2$ of $w_2$.
Thus, due to~\cref{lemma:prefixSuffix}, in $\mathcal{G}_1$, Duplicator responds to~$u_1$ with a suffix $u_1'$ of $v_1$.
Likewise, in $\mathcal{G}_2$, Duplicator responds to $u_2$ with a prefix $u_2'$ of $v_2$.
Consequently, $u_1' \cdot u_2'$ is indeed a factor of $v$.
The analogous reasoning holds if Spoiler chooses some $u \in \facts(v) \setminus (\facts(v_1) \union \facts(v_2))$.

We have given an description of Duplicator's strategy, and informally sketched a proof that this strategy is well-defined.
\end{sketch}

Freydenberger and Peterfreund~\cite{frey2019finite} showed that the language $\{ \mathtt{a}^n \cdot \mathtt{b}^n \mid n \in \mathbb{N} \}$ is not expressible in $\fc$.
Their proof uses an alternative logic $\foeq$ with equivalent expressive power, for which the Feferman-Vaught theorem can be invoked.
As this approach heavily relies on the  restricted nature of words $\mathtt{a}^n \cdot \mathtt{b}^m$, 
it seems to be very difficult to generalize.
Furthermore, even for $\foeq$, the Feferman-Vaught theorem needs some additional reasoning. 
Compare this to the following alternative proof.

\begin{example}\label{example:anbn}
From~\cref{lemma:pow2}, we know that for every $k \in \mathbb{N}$, there exists $p, q \in \mathbb{N}$ where $\mathtt{a}^p \equiv_{k+2} \mathtt{a}^q$ and, without loss of generality, $p < q$.
Trivially, we have that $\mathtt{b}^p \equiv_{k+2} \mathtt{b}^p$ for all $k \in \mathbb{N}$.
Furthermore, $\facts(\mathtt{a}^n) \intersect \facts(\mathtt{b}^m) = \{ \emptyword \}$  for any $m,n \in \mathbb{N}$.
We can now invoke~\cref{lemma:congruence}, with $r=0$, to state that for any $k \in \mathbb{N}$, there exists $p, q \in \mathbb{N}$ such that $\mathtt{a}^q \cdot \mathtt{b}^p \equiv_k \mathtt{a}^p \cdot \mathtt{b}^p$ and $p < q$.
Consequently, $\{ \mathtt{a}^i \cdot \mathtt{b}^j \mid 0 \leq i \leq j \} \notin \lang(\fc)$ and $\{ \mathtt{a}^n \cdot \mathtt{b}^n \mid n \in \mathbb{N} \} \notin \lang(\fc)$.
\end{example}

In addition to giving an alternative proof of an existing result, we can also use~\cref{lemma:congruence} to prove new inexpressibility results.

\begin{proposition}\label{prop:subword}
$ \{ \mathtt{a}^n \cdot (\mathtt{ba})^n \mid n \in \mathbb{N} \} \notin \lang(\fc)$.
\end{proposition}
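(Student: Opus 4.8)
The plan is to mirror the structure of~\cref{example:anbn}, but with a nontrivial overlap between the two halves. Write $u_1 \df \mathtt{a}^n$ and $u_2 \df (\mathtt{ba})^n$. The obstacle, compared to the $\mathtt{a}^n\mathtt{b}^n$ case, is that $\facts(u_1) \intersect \facts(u_2)$ is no longer just $\{\emptyword\}$: the factor $\mathtt{a}$ (and only $\mathtt{a}$, together with $\emptyword$) is a factor of both $\mathtt{a}^n$ and $(\mathtt{ba})^n$ for $n\geq 1$. So the intersection is $\{\emptyword, \mathtt{a}\}$, giving $r = 1$ in the hypothesis of~\cref{lemma:congruence}. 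The key point is that this intersection does \emph{not} depend on $n$, so the hypothesis $\facts(w_1)\intersect\facts(w_2) = \facts(v_1)\intersect\facts(v_2)$ of~\cref{lemma:congruence} is satisfied when we choose $w_1, v_1$ both powers of $\mathtt{a}$ and $w_2 = v_2 = (\mathtt{ba})^n$.

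Concretely, fix $k \in \mathbb{N}$. By~\cref{lemma:pow2} applied with bound $k + r + 2 = k+3$, there exist $p, q \in \mathbb{N}$ with $p \neq q$ and $\mathtt{a}^p \equiv_{k+3} \mathtt{a}^q$; without loss of generality $p < q$. Trivially $(\mathtt{ba})^p \equiv_{k+3} (\mathtt{ba})^p$. Now set $w_1 \df \mathtt{a}^q$, $w_2 \df (\mathtt{ba})^p$, $v_1 \df \mathtt{a}^p$, $v_2 \df (\mathtt{ba})^p$. We have $\facts(w_1) \intersect \facts(w_2) = \{\emptyword, \mathtt{a}\} = \facts(v_1) \intersect \facts(v_2)$ (for $p \geq 1$; the degenerate case $p = 0$ can be handled separately or absorbed by noting~\cref{lemma:pow2} lets us take $p$ large), and the maximal length of a common factor is $r = 1$. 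Hence~\cref{lemma:congruence} gives $\mathtt{a}^q (\mathtt{ba})^p \equiv_k \mathtt{a}^p (\mathtt{ba})^p$. Since $p < q$, the word $\mathtt{a}^p (\mathtt{ba})^p$ lies in $L \df \{\mathtt{a}^n (\mathtt{ba})^n \mid n \in \mathbb{N}\}$ while $\mathtt{a}^q (\mathtt{ba})^p$ does not (its $\mathtt{a}$-block before the first $\mathtt{b}$ has length $q \neq p$, and $(\mathtt{ba})^p$ forces the parameter to be $p$). As $k$ was arbitrary,~\cref{obs:equivToLang} yields $L \notin \lang(\fc)$.

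The main thing to verify carefully is that $\mathtt{a}^q (\mathtt{ba})^p \notin L$ — i.e., that no $m$ satisfies $\mathtt{a}^m (\mathtt{ba})^m = \mathtt{a}^q (\mathtt{ba})^p$. This is immediate from counting: the longest prefix in $\mathtt{a}^*$ has length $q$ on the left and $m$ on the right (using $p\geq 1$ so that $(\mathtt{ba})^p$ starts with $\mathtt{b}$, and $m\geq 1$ likewise), forcing $m = q$; but then the suffix $(\mathtt{ba})^q$ has length $4q \neq 2p + 2q = |(\mathtt{ba})^p| + 2q$ unless $q = p$, contradiction. I would also note explicitly that the edge case $p = 0$ is avoidable: $L_{\mathsf{pow}}$-style reasoning behind~\cref{lemma:pow2} produces arbitrarily large $p, q$, so we may assume $p \geq 1$. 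No deeper combinatorics on words is needed here — the work is entirely in checking the hypotheses of~\cref{lemma:congruence}, and the one subtlety is recognizing that the overlap $\{\emptyword, \mathtt{a}\}$ is $n$-independent.
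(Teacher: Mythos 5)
Your proof is correct and follows essentially the same route as the paper's: obtain $\mathtt{a}^p \equiv_{k+3} \mathtt{a}^q$ from~\cref{lemma:pow2}, note the trivial self-equivalence of the $(\mathtt{ba})$-block, and apply the Pseudo-Congruence Lemma with $r=1$ before concluding via~\cref{obs:equivToLang}; the only (immaterial) difference is that you fix the second block as $(\mathtt{ba})^p$ where the paper uses $(\mathtt{ba})^q$. Your explicit handling of the $p=0$ edge case and the check that $\mathtt{a}^q(\mathtt{ba})^p \notin L$ is in fact slightly more careful than the paper's version (ignore the stray ``$4q$'' in your length count --- the intended comparison $2q \neq 2p$ is what matters).
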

\begin{proof}
To prove this result, we show that for every $k \in \mathbb{N}$, there exists $p,q \in \mathbb{N}$ where $p < q$ and $\mathtt{a}^q (\mathtt{ba})^q \equiv_k \mathtt{a}^p (\mathtt{ba})^q$.

From~\cref{lemma:pow2}, for every $k \in \mathbb{N}$, there exists $p,q \in \mathbb{N}$ such that $\mathtt{a}^q \equiv_{k+3} \mathtt{a}^p$ where $q \neq p$.
Without loss of generality, assume that $p < q$. 
Trivially, $(\mathtt{ba})^q \equiv_{k + 3}(\mathtt{ba})^q$ for every $k \in \mathbb{N}$.
Since, $\facts(\mathtt{a}^m) \intersect \facts((\mathtt{ba})^n) = \{ \emptyword, \mathtt{a} \}$ for any $m,n \in \mathbb{N}$, we can invoke~\cref{lemma:congruence} with $r=1$.
It therefore follows that for every $k \in \mathbb{N}$, there exists $p,q \in \mathbb{N}$ where $p \neq q$ and $\mathtt{a}^q (\mathtt{ba})^q \equiv_k \mathtt{a}^p (\mathtt{ba})^q$.
From~\cref{obs:equivToLang} we infer that $ \{ \mathtt{a}^n \cdot (\mathtt{ba})^n \mid n \in \mathbb{N} \} \notin \lang(\fc)$.
\end{proof}

Observing~\cref{prop:subword}, we  see that the Pseudo-Congruence Lemma by itself is a useful tool for inexpressibility.

\subsection{The Primitive Power Lemma}\label{sec:primitivepower}
In this section, we look at the Primitive Power Lemma. 
Before giving this result, we first consider a lemma for words that are primitive (that is, not a power of a shorter word).

For any $w \in \Sigma^+$, let $\expo_w \colon \Sigma^+ \rightarrow \mathbb{N}$ be a function that maps $u \in \Sigma^+$ to $m \in \mathbb{N}$ if $w^m \sqsubseteq u$ and there does not exist $m' > m$ such that $w^{m'} \sqsubseteq u$. 
Less formally, $\expo_w (u)$ is the maximum value $m \in \mathbb{N}$ such that $w^m$ is a factor of $u$.

\begin{example}
Consider $u \df \mathtt{aaaabaabaab}$.
Then, $\expo_{\mathtt{a}}(u) = 4$, and $\expo_{\mathtt{aab}}(u) = 3$.
\end{example}

The following  is due to some basic results on primitive words.

\begin{lemma}\label{obs:factorOfRep}
Let $w \in \Sigma^+$ be a primitive word and let $m \in \mathbb{N}$.
If $u \sqsubseteq w^m$ where $\expo_w(u) > 0$, then there is a unique proper suffix $u_1 \in \Sigma^*$ of $w$ and a unique proper prefix $u_2 \in \Sigma^*$ of $w$, such that $u = u_1 \cdot w^{\expo_w(u)} \cdot u_2$.
\end{lemma}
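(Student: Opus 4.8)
The plan is to exploit the classical characterization of primitive words via their (lack of) nontrivial self-overlaps, namely the Fine--Wilf / Lyndon--Sch\"utzenberger style fact that a primitive word $w$ is \emph{unbordered up to rotation}: no occurrence of $w$ inside $w^m$ can start at a position that is not a multiple of $|w|$. First I would make precise the decomposition claim. Suppose $u \sqsubseteq w^m$ with $\expo_w(u) = e > 0$, so $w^e \sqsubseteq u$ but $w^{e+1} \not\sqsubseteq u$. Fix an occurrence of $u$ in $w^m$, say at position $i$ (so $u = w^m[i+1 \ldots i+|u|]$), and inside that occurrence fix an occurrence of the factor $w^e$, at absolute position $j$ with $i \le j$ and $j + e|w| \le i + |u|$. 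Write $u_1$ for the part of $u$ before this $w^e$ and $u_2$ for the part after, so that $u = u_1 \cdot w^e \cdot u_2$ by construction; it remains to show $u_1$ is a proper suffix of $w$ and $u_2$ a proper prefix of $w$, and that the whole decomposition is unique.

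The key step is the periodicity argument pinning down $j$ modulo $|w|$. Since $w^e$ occurs at absolute position $j$ inside $w^m$, the infinite (or sufficiently long) word $w^m$ has period $|w|$ on the overlapping region, and we get a nontrivial conjugate equation: writing $j = q|w| + t$ with $0 \le t < |w|$, the prefix of length $|w|$ of $w^e$ starting at position $j$ is the rotation $w' \cdot w''$ where $w = w'' \cdot w'$ with $|w''| = t$. If $t \neq 0$ and $e \ge 1$ this forces $w$ and its rotation by $t$ to generate the same period, and by the primitivity of $w$ (a primitive word commutes with none of its nontrivial rotations) we reach a contradiction once $e$ is large enough — here $e \ge 1$ suffices because $w^e$ already spans at least a full period, so two full copies of $w$ appear at offset $t$ within $w^{e+1}$'s worth of text, giving $w z = z w$ for the nonempty proper prefix/suffix $z$ determined by $t$, hence $t = 0$. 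Therefore $j$ is a multiple of $|w|$, i.e. the occurrence of $w^e$ is ``block-aligned'' with the tiling of $w^m$ by copies of $w$.

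Once $j$ is a multiple of $|w|$, the factor of $w^m$ immediately to the left of this block is a suffix of the preceding copy of $w$ (or empty, when $u$ starts at a block boundary), so $u_1$ is a suffix of $w$; and it is a \emph{proper} suffix, for otherwise $u_1 = w$ would extend $w^e$ to $w^{e+1} \sqsubseteq u$, contradicting maximality of $e = \expo_w(u)$. Symmetrically $u_2$ is a proper prefix of $w$. For uniqueness: any decomposition $u = u_1' \cdot w^{e'} \cdot u_2'$ with $u_1'$ a proper suffix of $w$, $u_2'$ a proper prefix, and $e' > 0$ must have $e' = \expo_w(u)$ (the displayed power cannot exceed the maximum, and cannot be smaller since $|u_1'|, |u_2'| < |w|$ means $w^{e'+1} \not\sqsubseteq u_1' w^{e'} u_2'$ by a counting/period argument), and then $u_1', u_2'$ are forced by $|u_1'| < |w|$ together with the block-alignment just established. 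I expect the main obstacle to be the periodicity step: carefully invoking the right primitivity lemma (that a primitive $w$ is not a proper factor of $ww$ except at the endpoints, equivalently $w$ is ``unbordered up to conjugacy'') and handling the edge cases where $u$ begins or ends exactly at a block boundary, so that $u_1$ or $u_2$ is the empty word — which is fine, since $\emptyword$ is both a proper prefix and a proper suffix of $w$.
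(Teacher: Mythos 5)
Your proposal is correct. It rests on the same key fact as the paper's proof --- the synchronization property of primitive words, stated in the paper as \cref{obs:primitive} (if $w^m = x \cdot w \cdot y$ with $x,y \in \Sigma^+$, then $x$ and $y$ are powers of $w$) --- but deploys it differently. The paper dismisses existence as obvious and proves uniqueness purely equationally: from two decompositions $u_1 \cdot w^r \cdot u_2 = u_3 \cdot w^r \cdot u_4$ it cancels the shorter of $u_1, u_3$, pads on the right to obtain $w^{r+1} = u_1' \cdot w^r \cdot u_4 \cdot \bar u_2$, and applies \cref{obs:primitive} to force $u_1' = w^k$, whence $k = 0$ (otherwise $\expo_w(u) > r$). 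You instead argue positionally: every occurrence of $w^{e}$ inside $w^m$ is block-aligned with the tiling by copies of $w$, so the left flanking pieces of any two valid decompositions have lengths that are congruent modulo $|w|$ and both less than $|w|$, hence equal. Your route has the advantage of actually establishing the existence half, including that $u_1$ and $u_2$ are \emph{proper} (via maximality of $\expo_w(u)$), which the paper skips; the paper's route avoids reasoning about positions and so never has to relate two decompositions of the abstract word $u$ to a fixed embedding of $u$ in $w^m$. One small point to tighten in your existence step: you should rule out $|u_1| \geq |w|$ in general, not only the case $u_1 = w$, but the same maximality argument handles this, since an aligned left part of length at least $|w|$ ends in a full block and would give $w^{e+1} \sqsubseteq u$.
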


While perhaps not particularly interesting by itself, \cref{obs:factorOfRep} shall be an important part of the proof of the Primitive Power Lemma -- a result that (in a sense) generalizes~\cref{lemma:pow2}.

\begin{lemma}[Primitive Power Lemma]\label{lemma:primitivePower}
If $\mathtt{a}^p \equiv_{k+3} \mathtt{a}^q$ for $\mathtt{a} \in \Sigma$ and $p,q,k \in \mathbb{N}_+$, then $w^p \equiv_k w^q$ for any primitive word $w \in \Sigma^+$.
\end{lemma}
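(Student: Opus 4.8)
The plan is to convert a Duplicator winning strategy for the $k+3$-round unary game on $\mathtt{a}^p$ versus $\mathtt{a}^q$ into a winning strategy for the $k$-round game on $w^p$ versus $w^q$, using the structure theory of primitive words (\cref{obs:factorOfRep}) to translate moves back and forth. The key idea is that a factor $u$ of $w^p$ with $\expo_w(u)>0$ decomposes \emph{uniquely} as $u_1 \cdot w^{\expo_w(u)} \cdot u_2$ with $u_1$ a proper suffix and $u_2$ a proper prefix of $w$; so $u$ is determined by the pair $(\expo_w(u), (u_1,u_2))$, and $\expo_w(u) \in \{0,1,\dots,p\}$ behaves like an element of the unary structure on $\mathtt{a}^{p}$ (roughly — one has to be careful about the ``overhang'' contributed by $u_1,u_2$, which can bump the count by one). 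Concretely, I would set up a look-up game $\mathcal{G}'$ that is the $k+3$-round game on $\mathtt{a}^p$ and $\mathtt{a}^q$ where Duplicator plays their winning strategy, and define Duplicator's response in the main $k$-round game $\mathcal{G}$ on $w^p, w^q$ as follows: when Spoiler picks $u$ in $w^p$ (say), compute $\expo_w(u)$ and the decomposition $u = u_1 w^{\expo_w(u)} u_2$, feed the count (suitably normalized) as Spoiler's move in $\mathcal{G}'$, read off Duplicator's reply as some $\mathtt{a}^{j}$, and return $u_1 \cdot w^{j'} \cdot u_2$ for the appropriate $j'$, keeping the \emph{same} suffix/prefix pair $(u_1,u_2)$. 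Short factors (those with $\expo_w(u)=0$, equivalently $|u| < 2|w|$ roughly) must be answered by the identical factor — this is forced and consistent, and the $+3$ rounds provide the slack needed for \cref{lemma:consistentStrats}-style arguments to apply to the unary look-up game.

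The verification that this is a partial isomorphism breaks into checking the three conditions of \cref{defn:partialiso}. Equality $a_i = a_j \iff b_i = b_j$ follows because each $u$ is uniquely encoded by (overhang pair, count), the overhang pairs are copied verbatim, and the counts match iff the corresponding unary elements match (Duplicator wins $\mathcal{G}'$). The constant symbols $\emptyword$ and $w$'s letters are short factors, handled by the forced-identity case. The main work is the concatenation condition: given $a_i, a_j, a_k \in \facts(w^p)$ with $a_i = a_j \cdot a_k$, write each in its normal form and show that the concatenation relation among them is captured by a combination of (i) an arithmetic relation $\mathrm{count}_i = \mathrm{count}_j + \mathrm{count}_k + (\text{correction for overhangs})$ that the unary game preserves, and (ii) compatibility of the overhang pairs, which is preserved because they are copied. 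One has to handle the boundary cases where $a_j$ or $a_k$ is itself short (so its ``count'' is $0$ but it still contributes length), and the case where the product $a_j \cdot a_k$ has length exceeding $|w^p|$ and hence is not a factor at all — in that sub-case one argues that the corresponding product on the $v$-side is also too long, using that $p<q$ or the symmetric statement and that overhangs are bounded by $|w|$.

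**The main obstacle**, as I see it, is exactly this bookkeeping of overhangs in the concatenation condition: the correction term in $\mathrm{count}_i = \mathrm{count}_j + \mathrm{count}_k + \delta$ depends in a case-sensitive way on whether the proper suffix $u_1^{(j)}$ of $w$ attached to $a_j$ and the proper prefix $u_2^{(k)}$ of $w$ attached to $a_k$ together ``complete a copy of $w$'' (forcing $\delta=1$) or not (forcing incompatibility of the product with being a factor, unless they are empty), and one must confirm that this determination depends only on data that is transported identically to the $v$-side. A clean way to manage this is to observe that for a primitive $w$, the factor $a_j \cdot a_k$ of $w^\infty$ exists as a factor of $w^p$ precisely when the overhangs align (i.e. $u_1^{(j)}$ is a prefix of $w$'s rotation forced by $u_2^{(k)}$, which for primitive $w$ pins things down) and the total length fits; encoding ``alignment'' as a finite bit that is part of the copied overhang data keeps the unary game only responsible for the length/count arithmetic, which is exactly what $\equiv_{k+3}$ on $\mathtt{a}^p,\mathtt{a}^q$ guarantees. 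I would also use \cref{lemma:prefixSuffix} to ensure that when Spoiler picks a genuine prefix or suffix of $w^p$ (a special, frequent case), Duplicator's response is again a prefix resp.\ suffix of $w^q$ with the matching overhang, which handles several of the boundary sub-cases uniformly.
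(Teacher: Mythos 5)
Your proposal follows essentially the same route as the paper's proof: a $(k+3)$-round look-up game on $\mathtt{a}^p$ versus $\mathtt{a}^q$, translation of factors via the unique decomposition $u = u_1 \cdot w^{\expo_w(u)} \cdot u_2$ of \cref{obs:factorOfRep} with the overhang pair copied verbatim and only the exponent replaced by Duplicator's unary reply, forced identical answers on factors with $\expo_w(u)=0$, and a case analysis of the concatenation condition governed by whether the adjacent overhangs complete a copy of $w$ (the $\delta\in\{0,1\}$ correction, which the paper isolates in \cref{lemma:primitive2} and \cref{lemma:expoIncrease} and discharges using the spare look-up rounds). The paper handles the ``response might not be a factor'' boundary case with a dedicated claim about near-full unary factors rather than your appeal to \cref{lemma:prefixSuffix}, but this is a presentational difference, not a different argument.
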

\begin{sketch}
To prove this lemma, we give a winning strategy for Duplicator for the $k$-round game $\mathcal{G}$ over $w^p$ and $w^q$.
This strategy is defined based on the $k+3$-round look-up game $\mathcal{G}_l$ over $\mathtt{a}^p$ and $\mathtt{a}^q$, where $\mathtt{a}^p \equiv_{k+3} \mathtt{a}^q$.

For each round $p \leq k$ in $\mathcal{G}$, if Spoiler chooses $u$ where $\expo_w(u) = n$, then we let Spoiler choose $\mathtt{a}^n$ in round $p$ of $\mathcal{G}_l$ in the corresponding word.
Duplicator's response in $\mathcal{G}_l$ is some $\mathtt{a}^m$. Then, we form Duplicator's response to Spoiler in $\mathcal{G}$ as follows:
\begin{itemize}
\item If in $\mathcal{G}_l$, Duplicator responds with $\emptyword$, then let Duplicator's response in $\mathcal{G}$ be the same factor that Spoiler chose. This is because it can be shown that Spoiler chose some $u$ where $\expo_w(u) = 0$. 
\item If in $\mathcal{G}_l$, Duplicator responds with $\mathtt{a}^m$ where $m \in \mathbb{N}_+$, then we can show that in $\mathcal{G}$, Spoiler chose $u = u_1 \cdot w^n \cdot u_2$ where $u_1$ and $u_2$ are a unique suffix and prefix of $w$ (see~\cref{obs:factorOfRep}). We define Duplicator's response in $\mathcal{G}$ as $u_1 \cdot w^m \cdot u_2$. See~\cref{fig:smallpowerStrat} for an informal illustration of Duplicator's strategy.
\end{itemize}

\begin{figure}
    \centering
    \begin{tikzpicture}
    \tikzstyle{vertex}=[rectangle,fill=white!35,minimum size=12pt,inner sep=3pt,outer sep=1.5pt]
   \tikzstyle{vertex2}=[rectangle,fill=white!35,draw=gray,rounded corners=.1cm]

    \draw[dotted] (-0.2,1) ellipse (2.3cm and 0.6cm);
    \draw[dotted] (-0.2,-0.9) ellipse (2.3cm and 0.6cm);
    
    \node[vertex] (1) at (-1, 1) {$u_1 \cdot w^n \cdot u_2$};
    \node[vertex] (2) at (1, 1) {$\mathtt{a}^n$};
    
    \node[vertex] (3) at (1,-1) {$\mathtt{a}^m$};
    \node[vertex] (4) at (-1,-1) {$u_1 \cdot w^m \cdot u_2$};
    
    \path[-latex,dashed] (1) edge node [above] {} (2);
    \path[-latex] (2) edge node [right] {$\mathcal{G}_l$} (3);
    \path[-latex,dashed] (3) edge node [below] {} (4);
    \path[-latex] (1) edge node [left] {$\mathcal{G}$} (4);
    \end{tikzpicture}
    \caption{Duplicator's strategy when Spoiler chooses some $u$ where $\mathsf{exp}_w(u) \geq 1$. The top oval illustrates Spoiler's choices in $\mathcal{G}$ and $\mathcal{G}_l$, and the bottom oval illustrates Duplicator's responses in $\mathcal{G}$ and $\mathcal{G}_l$.}
    \label{fig:smallpowerStrat}
\end{figure}
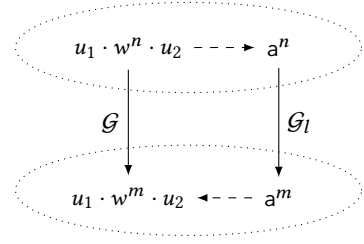

In the full proof, we show that this strategy is indeed a winning strategy for Duplicator by utilizing several lemmas regarding primitive words (such as~\cref{obs:factorOfRep}). 
\end{sketch}

For any $w \in \Sigma^+$, the \emph{primitive root} of $w$ is the unique primitive word $z \in \Sigma^*$ such that $w = z^k$ for some $k \in \mathbb{N}$.
Therefore, almost immediately from~\cref{lemma:primitivePower}, we are able to show:

\begin{proposition}\label{lemma:allWords}
For any $w \in \Sigma^+$ and any $k \in \mathbb{N}$, there exists $p \in \mathbb{N}_+$ and $v \in \Sigma^*$ such that $w^p \equiv_k v$ where $w^p \neq v$.
\end{proposition}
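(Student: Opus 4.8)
The plan is to derive this from the Primitive Power Lemma (\cref{lemma:primitivePower}), first replacing $w$ by its primitive root, since that lemma needs a primitive base word while $w$ itself need not be primitive. So I would write $w = z^m$, where $z \in \Sigma^+$ is the primitive root of $w$ and $m \in \mathbb{N}_+$, and fix $k \in \mathbb{N}$. I would first dispose of $k = 0$: a winning $(k+1)$-round Duplicator strategy also wins the $k$-round game, because running it for one extra round produces a partial isomorphism whose restriction to the first $k$ choices together with the constant vectors is again a partial isomorphism; hence $\equiv_{k+1}$ refines $\equiv_k$ and it suffices to treat $k \geq 1$.

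The heart of the argument is to exhibit two distinct positive multiples of $m$ whose unary powers are $(k+3)$-equivalent. By \cref{thm:ehrenfeucht}, $\mathtt{a}^r \equiv_{k+3} \mathtt{a}^s$ holds exactly when $\mathtt{a}^r$ and $\mathtt{a}^s$ satisfy the same sentences of $\fc(k+3)$; and over the finite signature $\signature_\Sigma$ there are, up to logical equivalence, only finitely many such sentences (a standard fact; see, e.g., \cite{ebbinghaus1999finite}), so $\equiv_{k+3}$ has finitely many classes -- the same finiteness that underlies \cref{lemma:pow2}. Applying the pigeonhole principle to the infinite family $\mathtt{a}^{m}, \mathtt{a}^{2m}, \mathtt{a}^{3m}, \dots$, I would obtain $i, j \in \mathbb{N}_+$ with $i < j$ and $\mathtt{a}^{mi} \equiv_{k+3} \mathtt{a}^{mj}$.

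Then I would invoke \cref{lemma:primitivePower} with the primitive word $z$, the exponents $mi, mj \in \mathbb{N}_+$, and $k \in \mathbb{N}_+$: from $\mathtt{a}^{mi} \equiv_{k+3} \mathtt{a}^{mj}$ it yields $z^{mi} \equiv_k z^{mj}$, that is, $w^{i} \equiv_k w^{j}$. Setting $p \df j \in \mathbb{N}_+$ and $v \df w^{i} \in \Sigma^*$ then gives $w^p \equiv_k v$, and $w^p \neq v$ because $|w^p| = j \cdot |w| \neq i \cdot |w| = |v|$ since $i \neq j$ and $|w| \geq 1$; this is exactly the claim.

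The argument is short once the earlier machinery is available, so there is no single hard step; the one subtlety is that the exponents fed to \cref{lemma:primitivePower} must be positive multiples of $m$, so that $z^{mi} = w^{i}$ is genuinely a power of $w$. This is why I extract the $(k+3)$-equivalent pair from the arithmetic progression $(\mathtt{a}^{mi})_{i \geq 1}$ (using finiteness of $\equiv_{k+3}$) rather than reading it straight off \cref{lemma:pow2}, and why the degenerate case $k = 0$ is handled separately.
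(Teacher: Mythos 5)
Your proposal is correct and takes essentially the same route as the paper: decompose $w = z^m$ into its primitive root and lift a $(k+3)$-equivalence of unary words to powers of $z$ via the Primitive Power Lemma (\cref{lemma:primitivePower}). The only difference is how the equivalent unary pair is produced -- you pigeonhole $\mathtt{a}^{m}, \mathtt{a}^{2m}, \dots$ into the finitely many $\equiv_{k+3}$-classes, whereas the paper notes that $\{ m \cdot 2^n \mid n \in \mathbb{N}_+ \}$ is not semi-linear and hence not $\fc$-definable; both are sound, and your variant additionally makes $v$ a power of $w$ and treats the $k=0$ case explicitly.
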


As an immediate consequence of~\cref{lemma:primitivePower}, it follows that for any primitive word $w \in \Sigma^+$, the language $w^{2^n}$ is not expressible in $\fc$.
While this may seem somewhat restrictive, due to the fact that this result holds for primitive words, any non-primitive word is a repetition of primitive words, hence~\cref{lemma:allWords}.

\subsection{The Fooling Lemma for FC}\label{subsec:pumping}

The Fooling Lemma for $\fc$ combines the Pseudo-Congruence Lemma with the Primitive Power Lemma. 
In order to give a convenient formulation of this result, we first look at \emph{co-primitive} words:
For two words $w, v \in \Sigma^+$, we say that $w$ and $v$ are \emph{conjugate} if there exists $x,y \in \Sigma^*$ such that $w = x \cdot y$ and $v = y \cdot x$.
If $w$ and $v$ are not conjugate, and both $w$ and $v$ are primitive words, then we say that $w$ and $v$ are \emph{co-primitive}.

\begin{example}
It is clear that $u \df \mathtt{aabba}$ and $v \df \mathtt{aaabb}$ are both primitive words.
However, $u$ and $v$ are not co-primitive.
This can easily be observed by considering $x \df \mathtt{aabb}$ and $y \df \mathtt{a}$.
Then, $u = xy$ and $v = yx$.
Now consider $u' \df \mathtt{aba}$ and $v' \df \mathtt{bba}$.
Again, both $u'$ and $v'$ are primitive.
Furthermore, from the fact that $|u'|_\mathtt{a} \neq |v'|_{\mathtt{a}}$, we can also see that $u'$ and $v'$ are co-primitive.
\end{example}

For any two words $w, v \in \Sigma^*$, let $w^\omega = w \cdot w \cdots$ and $v^\omega \df v \cdot v \cdots$ be the one-sided infinite word consisting of continuous repetitions of $w$ and $v$ respectively. 

We now state the \emph{periodicity lemma}.
For our purposes, we use the formulation of this result given in~\cite{hadravova2015equation}.
Further information regarding the periodicity lemma can be found in~\cite{handbookOfFL}.

\begin{lemma}[Periodicity lemma]
Let $w$ and $v$ be primitive words.
If $w^\omega$ and $v^\omega$ have a common factor of length at least $|w|+|v|-1$, then $w$ and $v$ are conjugate.
\end{lemma}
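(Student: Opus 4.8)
The plan is to derive the Periodicity lemma from the classical Fine--Wilf theorem on words carrying two periods. Set $p \df |w|$ and $q \df |v|$, and let $u$ be a common factor of $w^\omega$ and $v^\omega$ with $|u| \geq p+q-1$. Since $u$ occurs inside $w^\omega$ it has period $p$ (it agrees with its own shift by $p$ positions on the overlapping part), and likewise it has period $q$ since it occurs inside $v^\omega$. Because $\gcd(p,q) \geq 1$, the length hypothesis gives $|u| \geq p+q-1 \geq p+q-\gcd(p,q)$, so the Fine--Wilf theorem applies and tells us that $u$ already has period $d \df \gcd(p,q)$.

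The second step converts this period information into conjugacy, and this is the point at which primitivity is used. Since $|u| \geq p$, the prefix $u'$ of $u$ of length $p$ is a length-$|w|$ window of $w^\omega$, hence a cyclic rotation --- that is, a conjugate --- of $w$; in particular $u'$ is primitive. But $u'$ inherits the period $d$ from $u$, and $d$ divides $p$, so $u'$ is the $(p/d)$-th power of its own prefix of length $d$; primitivity of $u'$ then forces $p/d = 1$, i.e. $p = d$. The symmetric argument, applied to the prefix of $u$ of length $q$, gives $q = d$, so $p = q = d$. Consequently $u'$ --- being, since $p = q$, both the length-$p$ prefix and the length-$q$ prefix of $u$ --- is at once a conjugate of $w$ and a conjugate of $v$, and as conjugacy of words is symmetric and transitive, $w$ and $v$ are conjugate, as required.

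The only real friction I anticipate is the bookkeeping in the second step: one must spell out that a window of $w^\omega$ of length $|w|$ is a cyclic rotation of $w$, and that cyclic rotations preserve primitivity (otherwise $w$ itself would be a proper power). The first step may simply be cited as a black box, in the same spirit in which the Periodicity lemma is itself taken from~\cite{hadravova2015equation}. If one instead wants to avoid Fine--Wilf, an elementary Euclid-style induction on $|w|+|v|$ works: fix a common occurrence of $u$, use the overlap of the two periods to extract a shorter common factor of $w'^\omega$ and $v'^\omega$ for a strictly shorter conjugate pair $w',v'$, and recurse to the base case; but the Fine--Wilf route is considerably shorter and is the one I would write up.
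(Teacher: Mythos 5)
Your proof is correct; note, however, that the paper does not prove this lemma at all --- it imports it as a known result, in the formulation of~\cite{hadravova2015equation} (see also~\cite{handbookOfFL}), and only uses it as a black box inside the proof of~\cref{lemma:coprim}. So your Fine--Wilf derivation is not so much a different route as a self-contained substitute for a citation, and it is sound: a factor of $w^\omega$ has period $p=|w|$ and a factor of $v^\omega$ has period $q=|v|$; the bound $|u|\geq p+q-1\geq p+q-\gcd(p,q)$ lets Fine--Wilf yield period $d=\gcd(p,q)$; and your primitivity step is right --- the length-$p$ prefix of $u$ is a length-$|w|$ window of $w^\omega$, hence a rotation of $w$ and therefore primitive, while $d\mid p$ and period $d$ would otherwise make it a proper power of its length-$d$ prefix, forcing $p=d$, and symmetrically (using that $u$ is also a factor of $v^\omega$, so its length-$q$ prefix is a rotation of $v$) $q=d$; the common length-$p$ prefix is then conjugate to both $w$ and $v$, and symmetry and transitivity of conjugacy finish. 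If you write it up, do spell out the three bookkeeping facts you flag (factors inherit periods, a length-$|w|$ window of $w^\omega$ is a cyclic rotation of $w$, and rotations preserve primitivity --- the one place where primitivity of $w$ and $v$ actually enters). What the citation buys the paper is brevity; what your argument buys is that the only external input is the Fine--Wilf theorem, making the development self-contained.
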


The periodicity lemma and some further reasoning gives us:

\begin{lemma}\label{lemma:coprim}
For any primitive words $w,v \in \Sigma^*$, the following are equivalent:
\begin{enumerate}
\item $w$ and $v$ are co-primitive,
\item there exists $n_0, m_0 \in \mathbb{N}$ such that $ \facts(w^{n_0}) \intersect \facts(v^{m_0}) = \facts(w^{n}) \intersect \facts(v^{m}) $ for all $n > n_0$ and all $m > m_0$, and
\item there exists $r \in \mathbb{N}$ such that $r \geq \mathsf{max} \{ |w| \in \mathbb{N} \mid w \in \facts(u^n) \intersect \facts(v^m) \}$ for all $n,m \in \mathbb{N}$.
\end{enumerate}
\end{lemma}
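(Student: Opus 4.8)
The plan is to prove the three equivalences in a cycle $(1)\Rightarrow(2)\Rightarrow(3)\Rightarrow(1)$, using the periodicity lemma as the main combinatorial engine. The intuition throughout is that the common factors of arbitrarily high powers $w^n$ and $v^m$ are exactly the factors shared by $w^\omega$ and $v^\omega$, and the periodicity lemma bounds the length of such shared factors precisely when $w$ and $v$ are not conjugate.

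For $(1)\Rightarrow(2)$, I would first argue that $\facts(w^n)\intersect\facts(v^m)$ is non-decreasing in both $n$ and $m$, so it suffices to show it stabilises. By the periodicity lemma, since $w$ and $v$ are primitive and non-conjugate (co-primitive), $w^\omega$ and $v^\omega$ share no common factor of length $\geq |w|+|v|-1$; hence every word in $\facts(w^n)\intersect\facts(v^m)$ has length at most $|w|+|v|-2$. Since there are only finitely many words of bounded length over $\Sigma$, the increasing sequence of sets $\facts(w^n)\intersect\facts(v^m)$ must stabilise, giving the required $n_0, m_0$. I should be a little careful to phrase the monotonicity in two variables correctly (fixing one argument and increasing the other), and to note that any factor of $w^n$ of length $\le |w|+|v|-2 \le |w|(|v|)$ already appears in $w^{n_0}$ once $n_0$ is large enough, and symmetrically for $v$; this is what forces a single common threshold to work for all larger $n,m$.

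For $(2)\Rightarrow(3)$, this is essentially immediate: take $r \df \mathsf{max}\{|u| \mid u \in \facts(w^{n_0})\intersect\facts(v^{m_0})\}$ (a maximum over a finite set). For $n \le n_0$ or $m \le m_0$, the set $\facts(w^n)\intersect\facts(v^m)$ is contained in $\facts(w^{n_0})\intersect\facts(v^{m_0})$ by monotonicity, and for $n > n_0, m > m_0$ it equals $\facts(w^{n_0})\intersect\facts(v^{m_0})$ by hypothesis; in all cases the maximal length of a common factor is at most $r$. (I would also note the edge case $\emptyword$, which is always common and has length $0$, so the maximum is well-defined.) For $(3)\Rightarrow(1)$, I argue the contrapositive: if $w$ and $v$ are primitive but \emph{not} co-primitive, then they are conjugate, say $w = xy$ and $v = yx$. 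Then $w^\omega$ and $v^\omega$ are ``shifts'' of each other: $x \cdot v^\omega = w^\omega$, so $w^\omega$ and $v^\omega$ share arbitrarily long common factors (for every $\ell$, the length-$\ell$ prefix of $v^\omega$ occurs in $w^\omega$). Concretely, for any $r$, pick $n, m$ large enough that $w^n$ and $v^m$ both contain a common factor of length $> r$ (e.g. take $v^m$ to have length $> r + |x|$ and observe a suffix of $v^m$ of length $> r$ is a factor of $w^{n}$ for suitable $n$), contradicting (3).

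The main obstacle is the bookkeeping in $(1)\Rightarrow(2)$: making precise that a single pair of thresholds $(n_0,m_0)$ works simultaneously for \emph{all} larger $n$ and $m$, rather than just showing stabilisation along one axis. The clean way is to bound all common factors by length $L \df |w|+|v|-2$ via the periodicity lemma, observe that every factor of $w^\omega$ of length $\le L$ already occurs in $w^{n_0}$ for $n_0 \df L+1$ (and symmetrically $v^{m_0}$ for $m_0 \df L+1$), and then conclude $\facts(w^n)\intersect\facts(v^m) = \facts(w^\omega_{\le L})\intersect\facts(v^\omega_{\le L}) = \facts(w^{n_0})\intersect\facts(v^{m_0})$ for all $n>n_0$, $m>m_0$, where by $\facts(u^\omega_{\le L})$ I mean the factors of $u^\omega$ of length at most $L$. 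This reduces everything to the single clean statement that sufficiently high powers of a primitive word already realise all short factors of its infinite repetition.
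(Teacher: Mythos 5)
Your proposal is correct and uses essentially the same ingredients as the paper's proof: the periodicity lemma to bound the length of common factors of non-conjugate primitive words, finiteness of bounded-length words to get stabilisation, and the conjugate-shift observation ($w=xy$, $v=yx$ share arbitrarily long factors) for the converse. The only differences are organisational -- you close a cycle $(1)\Rightarrow(2)\Rightarrow(3)\Rightarrow(1)$ where the paper proves the two biconditionals $(1)\Leftrightarrow(2)$ and $(2)\Leftrightarrow(3)$ separately, and your explicit threshold $n_0 = m_0 = |w|+|v|-1$ is slightly more quantitative than the paper's appeal to the infinitude of $L(w^\omega,v^\omega)$ -- so no further comparison is needed.
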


We are almost ready to give the Fooling Lemma for $\fc$.
Before doing so, the Fooling Lemma is simply a possible way of combining the results we have given so far, and there could be a more general inexpressibility lemma for $\fc$ just using the results from this paper.
However, for our purposes, \cref{lemma:FCpumping} is enough.

\begin{lemma}[Fooling Lemma]\label{lemma:FCpumping}
Let $w_1, w_2, w_3 \in \Sigma^*$, let $u,v \in \Sigma^+$ be two co-primitive words, let $\varphi \in \fc$, and let $f \colon \mathbb{N} \rightarrow \mathbb{N}$ be an injective function.
If $w_1 \cdot u^p \cdot w_2 \cdot v^{f(p)} \cdot w_3 \in \lang(\varphi)$ for all $p \in \mathbb{N}$, then $w_1 \cdot u^s \cdot w_2 \cdot v^t \cdot w_3 \in \lang(\varphi)$ for some $s,t \in \mathbb{N}$, where $f(s) \neq t$.
\end{lemma}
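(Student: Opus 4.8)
The plan is to combine the Pseudo-Congruence Lemma (\cref{lemma:congruence}) with the Primitive Power Lemma (\cref{lemma:primitivePower}), using \cref{lemma:pow2} as the source of an initial pair of indistinguishable unary powers, and \cref{lemma:coprim} to control the overlap between $\facts(u^p)$ and $\facts(v^q)$. Suppose $\varphi \in \fc$ with $k \df \qrank(\varphi)$, and suppose $w_1 \cdot u^p \cdot w_2 \cdot v^{f(p)} \cdot w_3 \in \lang(\varphi)$ for all $p \in \mathbb{N}$. It suffices, by \cref{thm:ehrenfeucht}, to produce $s, t \in \mathbb{N}$ with $f(s) \neq t$ and $w_1 u^s w_2 v^t w_3 \equiv_k w_1 u^{s} w_2 v^{f(s)} w_3$ for a suitable choice; more precisely, I will find $s$ and $t$ so that the word on the left equals in $\equiv_k$ a word of the guaranteed form $w_1 u^{p} w_2 v^{f(p)} w_3$, which lies in $\lang(\varphi)$.

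First I would invoke \cref{lemma:coprim} on the co-primitive pair $u, v$ to fix a constant $r \in \mathbb{N}$ bounding the length of every common factor of $\facts(u^n)$ and $\facts(v^m)$, and also (via part (2)) an $n_0, m_0$ past which these intersections stabilise. Set $K \df k + r + 2$ (the overhead demanded by \cref{lemma:congruence}), and then set $K' \df K + 3$ (the overhead demanded by \cref{lemma:primitivePower}). By \cref{lemma:pow2}, there exist $p \neq q$ with $\mathtt{a}^p \equiv_{K'} \mathtt{a}^q$; by choosing the witnesses of \cref{lemma:pow2} large enough (the lemma holds for every quantifier rank, and the unary non-semilinearity argument provides arbitrarily large such pairs) I may assume $p, q > \max\{n_0, m_0\}$ and $p < q$. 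The Primitive Power Lemma then gives $u^p \equiv_K u^q$ and $v^p \equiv_K v^q$. Trivially $w_1 \equiv_K w_1$, $w_2 \equiv_K w_2$, $w_3 \equiv_K w_3$, and also $v^{f(p)} \equiv_K v^{f(p)}$.

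Next I would apply \cref{lemma:congruence} repeatedly to assemble the full word, working from the right. The key point is that each cut point in $w_1 u^{*} w_2 v^{*} w_3$ has bounded overlap: the overlap of $\facts(u^q)$ (resp. $\facts(u^p)$) with $\facts(w_2 v^{f(p)} w_3)$ is a set of factors of length at most $\max(r, |w_2|, \max_{\mathtt b}\dots)$, which is a constant independent of $p, q$, so after possibly enlarging $r$ to absorb the (finitely many, fixed) words $w_1, w_2, w_3$, the hypothesis $\facts(\cdot)\cap\facts(\cdot)$ being equal on both sides is what \cref{lemma:coprim}(2) and the stabilisation past $n_0, m_0$ guarantee. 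Concretely: from $u^p \equiv_K u^q$ and $w_2 v^{f(p)} w_3 \equiv_K w_2 v^{f(p)} w_3$, \cref{lemma:congruence} yields $u^p w_2 v^{f(p)} w_3 \equiv_k u^q w_2 v^{f(p)} w_3$; prepending $w_1$ similarly gives $w_1 u^p w_2 v^{f(p)} w_3 \equiv_k w_1 u^q w_2 v^{f(p)} w_3$. Since the left side is in $\lang(\varphi)$ and $k = \qrank(\varphi)$, the right side is too. Now set $s \df q$ and $t \df f(p)$. Because $f$ is injective and $p \neq q$, we have $t = f(p) \neq f(q) = f(s)$, so $f(s) \neq t$, and $w_1 u^s w_2 v^t w_3 = w_1 u^q w_2 v^{f(p)} w_3 \in \lang(\varphi)$, which is exactly the conclusion.

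\textbf{Main obstacle.} The delicate part is verifying that the $\facts$-intersection hypotheses of the Pseudo-Congruence Lemma hold uniformly, and that the extra-round overheads can be discharged by a \emph{single} sufficiently large choice of the unary pair $(p,q)$. One has to be careful that the constant $r$ used in \cref{lemma:congruence} must dominate the lengths of common factors across the cut for \emph{all} the values of the exponents in play simultaneously (on both the $u^p/u^q$ side and the fixed $v^{f(p)}$ block, and for the $w_i$), and that $f(p)$ being unbounded does not spoil this — here co-primitivity of $u$ and $v$ is exactly what keeps $\facts(u^\bullet)\cap\facts(v^\bullet)$ bounded (\cref{lemma:coprim}(3)), while the $w_i$ are fixed so contribute only a constant. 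Once $r$ is chosen to absorb all of this, the chain of \cref{lemma:congruence} applications and the one \cref{lemma:pow2} + \cref{lemma:primitivePower} step go through, and the injectivity of $f$ does the rest.
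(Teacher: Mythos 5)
Your proposal is correct and follows essentially the same route as the paper's proof: use \cref{lemma:pow2} plus the Primitive Power Lemma to get $u^p \equiv u^q$ at sufficiently high rank, then apply the Pseudo-Congruence Lemma (with the overlap bounds supplied by \cref{lemma:coprim} and the fixedness of $w_1,w_2,w_3$) to assemble $w_1 u^p w_2 v^{f(p)} w_3 \equiv_k w_1 u^q w_2 v^{f(p)} w_3$, and finish with injectivity of $f$. The only cosmetic difference is the order of assembly (you concatenate from the right, the paper first establishes $w_1 u^p w_2 \equiv_k w_1 u^q w_2$ as a separate claim and then joins on $v^{f(p)} w_3$), which does not change the substance.
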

\begin{sketch}
We use the Primitive Power Lemma along with the Pseudo-Congruence Lemma to determine that for every $k \in \mathbb{N}$ there exists $p, q \in \mathbb{N}$ such that $w_1 \cdot u^p \cdot w_2 \equiv_k w_1 \cdot u^q \cdot w_2$ where $p \neq q$.
Then, we again invoke the Pseudo-Congruence Lemma (using~\cref{lemma:coprim} to ensure that the required prerequisites hold) to conclude that for every $k \in \mathbb{N}$, there exists $p, q \in \mathbb{N}$ where $p \neq q$ and $ w_1 \cdot u^p \cdot w_2 \cdot v^{f(p)} \cdot w_3 \equiv_k w_1 \cdot u^q \cdot w_2 \cdot v^{f(p)} \cdot w_3$.
Since $f$ is injective, $f(p) \neq f(q)$ if $p \neq q$.
\end{sketch}

Immediately from~\cref{lemma:FCpumping}, we infer the following:

\begin{proposition}\label{lemma:pumpingConsequence}
Let $w_1, w_2, w_3 \in \Sigma^*$, and let $u,v \in \Sigma^+$ be two co-primitive words.
Then, for any injective function $f \colon \mathbb{N} \rightarrow \mathbb{N}$, the language $\{ w_1 \cdot u^p \cdot w_2 \cdot v^{f(p)} \cdot w_3 \mid p \in \mathbb{N}  \}$
is not an $\fc$-language.
\end{proposition}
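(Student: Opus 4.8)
The plan is to derive \cref{lemma:pumpingConsequence} as a direct corollary of the Fooling Lemma (\cref{lemma:FCpumping}) via the contrapositive, combined with the injectivity of $f$. Suppose, for contradiction, that the language $L \df \{ w_1 \cdot u^p \cdot w_2 \cdot v^{f(p)} \cdot w_3 \mid p \in \mathbb{N} \}$ were an $\fc$-language, so that $L = \lang(\varphi)$ for some $\varphi \in \fc$. By construction, $w_1 \cdot u^p \cdot w_2 \cdot v^{f(p)} \cdot w_3 \in L = \lang(\varphi)$ for every $p \in \mathbb{N}$, which is precisely the hypothesis of \cref{lemma:FCpumping}. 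Applying the Fooling Lemma then yields $s,t \in \mathbb{N}$ with $f(s) \neq t$ such that $w_1 \cdot u^s \cdot w_2 \cdot v^t \cdot w_3 \in \lang(\varphi) = L$.

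The next step is to show this membership is impossible, i.e. that $w_1 \cdot u^s \cdot w_2 \cdot v^t \cdot w_3 \notin L$ whenever $f(s) \neq t$. By definition of $L$, membership means there is some $p \in \mathbb{N}$ with $w_1 \cdot u^s \cdot w_2 \cdot v^t \cdot w_3 = w_1 \cdot u^p \cdot w_2 \cdot v^{f(p)} \cdot w_3$. The main obstacle, and the only genuinely technical point, is a uniqueness-of-factorization argument: one must show that the decomposition of a word of the shape $w_1 \cdot u^a \cdot w_2 \cdot v^b \cdot w_3$ determines $a$ and $b$ uniquely (given the fixed $w_1, w_2, w_3, u, v$). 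This is where co-primitivity of $u$ and $v$, together with primitivity of each, does the work: since $u$ is primitive, the maximal power of $u$ occurring in a given position is well-defined (cf. the function $\expo_u$ and \cref{obs:factorOfRep}), and since $u$ and $v$ are co-primitive (hence non-conjugate), the periodicity lemma and \cref{lemma:coprim} bound the overlap between $u$-periodic and $v$-periodic regions, so the blocks $u^a$ and $v^b$ cannot "bleed" into one another ambiguously. Concretely, I would argue that for $p$ and $s$ both large enough (and the bounded small cases checked separately or absorbed), $u^s = u^p$ forces $s = p$, and then cancelling gives $v^t = v^{f(p)} = v^{f(s)}$, whence $t = f(s)$ — contradicting $f(s) \neq t$.

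Finally, to close the loop: from $s = p$ and $t = f(s)$ we obtain $f(s) = t$, directly contradicting the conclusion $f(s) \neq t$ supplied by \cref{lemma:FCpumping}. Hence no such $\varphi$ exists and $L$ is not an $\fc$-language. I expect the write-up to be short: the bulk is the observation that the hypothesis of the Fooling Lemma is satisfied by fiat, and the slightly careful part is the unambiguous-parsing claim, which should either be stated as a small auxiliary fact (leaning on \cref{lemma:coprim} and the periodicity lemma for the overlap bound) or cited as a standard consequence of combinatorics on words. Since $f$ is merely assumed injective rather than, say, strictly increasing, I would avoid any growth-rate reasoning and rely purely on injectivity to pass from $s = p$ to the final contradiction via $t = f(s) = f(p)$.
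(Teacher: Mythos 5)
Your proposal is correct and follows exactly the route the paper intends: the paper derives this proposition directly from the Fooling Lemma with no further argument, and your contradiction via the pair $(s,t)$ with $f(s)\neq t$ is that derivation. The non-membership step you rightly flag as the only real content (that $w_1 u^s w_2 v^t w_3 \notin L$ when $f(s)\neq t$) is left implicit in the paper, and your instinct that co-primitivity is what makes it work is sound, though the cleanest execution is not the ``large enough $p,s$'' alignment you sketch but plain cancellation: from a hypothetical equality $w_1 u^s w_2 v^t w_3 = w_1 u^p w_2 v^{f(p)} w_3$ cancel $w_1$, $w_3$, and the shorter of the two $u$-powers and of the two $v$-powers to reach $u^c w_2 = w_2 v^d$ with $c\lvert u\rvert = d\lvert v\rvert$; for $c,d\geq 1$ this word equation forces $u^c$ and $v^d$, hence their primitive roots $u$ and $v$, to be conjugate, contradicting co-primitivity, so $c=d=0$, $s=p$, and $t=f(p)=f(s)$.
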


So far, we have developed several inexpressibility techniques.
Next, we look at using these tools to show that the following languages are not $\fc$ languages.

\begin{lemma}\label{lemma:nonFClangs}
The following languages are not expressible in $\fc$:
\begin{itemize}
\item $L_1 \df \{ \mathtt{a}^n \cdot (\mathtt{ba})^n  \mid n \in \mathbb{N} \}$,
\item $L_2 \df \{ \mathtt{a}^i \cdot (\mathtt{ba})^j \mid 1 \leq i \leq j  \}$,
\item $L_3 \df \{ \mathtt{b}^n \cdot \mathtt{a}^m \cdot \mathtt{b}^{n+m} \mid m,n \in \mathbb{N} \}$,
\item $L_4 \df \{ \mathtt{b}^n \cdot \mathtt{a}^m \cdot \mathtt{b}^{n \cdot m} \mid m,n \in \mathbb{N} \}$,  
\item $L_5 \df \{ (\mathtt{abaabb})^m \cdot (\mathtt{bbaaba})^m \mid m \in \mathbb{N} \}$, and
\item $L_6 \df \{ \mathtt{a}^n \cdot \mathtt{b}^n \cdot (\mathtt{ab})^n \mid n \in \mathbb{N} \}$.
\end{itemize}
\end{lemma}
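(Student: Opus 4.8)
The plan is to derive all six inexpressibility results by instantiating the Fooling Lemma (\cref{lemma:FCpumping}) and its corollary (\cref{lemma:pumpingConsequence}), using the Pseudo-Congruence Lemma (\cref{lemma:congruence}) and the Primitive Power Lemma (\cref{lemma:primitivePower}) directly for the cases that do not quite fit the template. For each language I first pick a decomposition of the form $w_1 \cdot u^p \cdot w_2 \cdot v^{f(p)} \cdot w_3$, verify that $u$ and $v$ are co-primitive (this is the only nontrivial side condition), identify the injective function $f$, and then conclude from \cref{lemma:pumpingConsequence} together with \cref{obs:equivToLang} that the language is not $\fc$-definable.

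Concretely: $L_1$ is already handled by \cref{prop:subword}, so I would just cite it (here $u=\mathtt{a}$, $v=\mathtt{ba}$, $f=\mathrm{id}$, $w_1=w_2=w_3=\emptyword$; the words $\mathtt{a}$ and $\mathtt{ba}$ are primitive and non-conjugate since they have different lengths, hence co-primitive). $L_2$ follows by the same decomposition as $\prop{subword}$'s proof but keeping the inequality $i\le j$ rather than equality, exactly as in \cref{example:anbn}'s treatment of $\{\mathtt{a}^i\mathtt{b}^j \mid i\le j\}$; one shows $\mathtt a^q(\mathtt{ba})^q \equiv_k \mathtt a^p(\mathtt{ba})^q$ with $p<q$, so the language separates a member from a non-member at every quantifier rank. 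For $L_3$, take $u=\mathtt{a}$, $v=\mathtt{b}$, $w_1=\mathtt b^n$ — but $n$ varies too, so instead I would fix $n$ and write $L_3 \supseteq \{\mathtt b^n \mathtt a^m \mathtt b^{n+m}\}$; better, decompose as $w_1=\emptyword$, outer repetition in the exponent $n$ of the first and last blocks and $m$ in the middle — this needs a two-variable version, so I'd instead argue directly: fix $n=0$ to get $\{\mathtt a^m \mathtt b^m\}\subseteq L_3$-slice, or more robustly use \cref{lemma:congruence} twice (as in the Fooling Lemma sketch) to pump $m$ while holding $n$ and adjusting the trailing block. The cleanest route for $L_3$ and $L_4$ is: fix $n$, set $u=\mathtt a$ (primitive) and $v=\mathtt b$ (primitive, non-conjugate to $\mathtt a$), $w_1=\mathtt b^n$, $w_2=w_3=\emptyword$, and $f(m)=n+m$ (resp.\ $f(m)=n\cdot m$), both injective in $m$; then \cref{lemma:pumpingConsequence} applies directly. $L_5$ uses $u=\mathtt{abaabb}$, $v=\mathtt{bbaaba}$, $f=\mathrm{id}$: both are primitive, and they are non-conjugate (a short check: no cyclic rotation of $\mathtt{abaabb}$ equals $\mathtt{bbaaba}$, e.g.\ via the $\mathtt a$/$\mathtt b$ positions), hence co-primitive. $L_6$ needs the Pseudo-Congruence Lemma applied twice in sequence (as in the Fooling Lemma): first pump $\mathtt a^n$ against $\mathtt b^n$ using \cref{lemma:congruence} with $r=0$ since $\facts(\mathtt a^\ast)\cap\facts(\mathtt b^\ast)=\{\emptyword\}$, then glue on $(\mathtt{ab})^n$ using \cref{lemma:coprim} to bound the shared-factor length (as $\mathtt a$ or $\mathtt{ab}$... here one checks $\mathtt{b}$ and $\mathtt{ab}$, or treats the prefix $\mathtt a^n\mathtt b^n$ as an opaque block whose factors meeting $\facts((\mathtt{ab})^m)$ are bounded).

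The main obstacle is $L_6$ (and to a lesser extent $L_3$, $L_4$), because these have three coupled blocks rather than the two-block shape of \cref{lemma:pumpingConsequence}: one must pump one exponent while \emph{simultaneously} keeping two other blocks' exponents fixed, so a naive application breaks down when a factor straddles the middle block. The fix is exactly the iterated composition used in the Fooling Lemma's proof sketch — apply \cref{lemma:congruence} once to pump $u^p$ inside $w_1\cdot u^p\cdot w_2$ (using \cref{lemma:primitivePower} to get $u^p\equiv_{k'}u^q$ from $\mathtt a^p\equiv_{k'+3}\mathtt a^q$ via \cref{lemma:pow2}), then apply \cref{lemma:congruence} a second time to re-attach $v^{f(p)}\cdot w_3$, where \cref{lemma:coprim}(3) supplies the finite bound $r$ on common-factor length that \cref{lemma:congruence} requires. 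The remaining work is then purely the co-primitivity verifications, each of which is a finite combinatorial check on short words and can be stated in a sentence. I would therefore structure the proof as six short paragraphs, one per language, with $L_1$ and $L_2$ citing earlier results, $L_3$–$L_5$ direct instances of \cref{lemma:pumpingConsequence}, and $L_6$ spelled out with the two-step gluing.
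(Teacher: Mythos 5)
Your overall strategy matches the paper's: $L_1$, $L_2$ via \cref{prop:subword}, $L_5$ via \cref{lemma:pumpingConsequence}, and $L_3$, $L_4$, $L_6$ by fixing one parameter and composing the Primitive Power and Pseudo-Congruence Lemmas. Two steps would fail as written, though both are repairable. First, for $L_2$ the pair you name, $\mathtt{a}^q(\mathtt{ba})^q \equiv_k \mathtt{a}^p(\mathtt{ba})^q$ with $p<q$, does not separate a member from a non-member: both words satisfy $i\le j$, so both lie in $L_2$. You need the symmetric application of \cref{lemma:congruence} (enlarge the $\mathtt{a}$-block while holding the $(\mathtt{ba})$-block at the smaller exponent), giving $\mathtt{a}^p(\mathtt{ba})^p \in L_2$ and $\mathtt{a}^q(\mathtt{ba})^p \notin L_2$ --- exactly the pattern of \cref{example:anbn}, where it is the \emph{first} exponent that is made larger.

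Second, for $L_3$ and $L_4$ you fix $n$ and claim that \cref{lemma:pumpingConsequence} "applies directly". That proposition only says the fixed-$n$ slice $\{\mathtt{b}^n\mathtt{a}^m\mathtt{b}^{f(m)}\mid m\in\mathbb{N}\}$ is not an $\fc$-language; since an $\fc$-language can properly contain non-$\fc$-languages, this does not transfer to the supersets $L_3$ and $L_4$. You must instead run \cref{lemma:FCpumping} on a hypothetical $\varphi$ with $\lang(\varphi)=L_3$ (whose language contains the slice) and then verify that the produced word $\mathtt{b}^n\mathtt{a}^s\mathtt{b}^t$ with $t\neq f(s)$ lies outside all of $L_3$, not merely outside the slice --- or, as the paper does, exhibit a member/non-member pair directly from the Pseudo-Congruence Lemma (fixing $n=0$ for $L_3$ via \cref{example:anbn}, and $n=1$ for $L_4$ with $r=1$, using $\facts(\mathtt{b}\cdot\mathtt{a}^m)\intersect\facts(\mathtt{b}^l)=\{\emptyword,\mathtt{b}\}$). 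Your $L_6$ argument coincides with the paper's: glue $(\mathtt{ab})^p$ onto $\mathtt{a}^q\mathtt{b}^p\equiv_k\mathtt{a}^p\mathtt{b}^p$ using \cref{lemma:congruence} with $r=2$, since $\facts(\mathtt{a}^i\mathtt{b}^j)\intersect\facts((\mathtt{ab})^l)=\{\emptyword,\mathtt{a},\mathtt{b},\mathtt{ab}\}$.
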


The authors note that it seems at least possible to show that $L_3$ and $L_4$ in~\cref{lemma:nonFClangs} are not $\fc$ languages by using the Feferman-Vaught Theorem and the logic $\foeq$ (see~\cite{frey2019finite} for more details).
However, in order to do so, a sizeable amount of work would be needed.
In comparison, using the techniques developed in this work, it is rather straightforward to show $L_3, L_4 \notin \lang(\fc)$.

While~\cref{lemma:nonFClangs} seems somewhat ``language theoretic'', in~\cref{thm:relations} we shall use~\cref{lemma:nonFClangs} to show that several relations are not definable in $\fcreg$.

\section{Generalized Core Spanner Inexpressibility}\label{sec:spanners}
Freydenberger and Peterfreund~\cite{frey2019finite} proved that $\fc$, extended with so-called \emph{regular constraints} captures the expressive power of generalized core spanners.
That is, a word relation $R \subseteq (\Sigma^*)^k$ is definable in $\fcreg$ if and only if $R$ is \emph{selectable} by generalized core spanners.
We do not give the definition of generalized core spanners here, since we are able to simply use $\fcreg$ instead.
See~\cite{fag:spa} for the formal definitions of generalized core spanners\footnote{Note that in~\cite{fag:spa} what we call generalized core spanners is called core spanners with difference. The term ``generalized core spanner'' was coined in~\cite{peterfreund2019recursive}.}.

\begin{definition}
A \emph{regular constraint} is an atomic formula $(x \regconst \gamma)$ where $x \in \Xi$ and $\gamma$ is a regular expression.
Let $\fc[\reg]$ be the logic that extends $\fc$ with regular constraints.
The semantics are as follows: For a regular constraint $(x \regconst \gamma)$, and an $\fc$ interpretation $\mathcal{I} \df (\mathfrak{A}_w, \subs)$, where $\mathfrak{A}_w$ is a $\signature_\Sigma$-structure that represents $w \in \Sigma^*$, we write $\mathcal{I} \models (x \regconst \gamma)$ if $\subs(x) \sqsubseteq w$, and $\subs(x) \in \lang(\gamma)$.
\end{definition}

Unfortunately, the introduction of regular constraints also introduces some difficulties when dealing with the quantifier rank of a formula.
Whilst one could easily adapt the definition of quantifier rank to $\fc[\reg]$, we run into the issue that there are infinite $\fc[\reg]$ formulas of quantifier rank one.
Consider $\exists x \colon (x \regconst \gamma)$, along with the infinite number of regular languages.
This would mean that we cannot use~\cref{thm:ehrenfeucht} for $\fcreg$, which requires (up to logical equivalence) a finite number of formulas of quantifier rank $k$ for any $k \in \mathbb{N}$ (see, e.g.,~\cite{libkin2004elements} for more details).

We now consider a way to handle the regular constraints in order to derive inexpressibility results for generalized core spanners.

\begin{definition}
A language $L \subseteq \Sigma^*$ is \emph{bounded} if it is a subset of a language $w_1^* \cdot w_2^* \cdots w_n^*$ where $n \in \mathbb{N}$ and $w_i \in \Sigma^*$ for all $i \in [n]$.
\end{definition}

A \emph{bounded regular language} is simply a language that is both bounded, and regular.
From previous literature (see~\cite{fre:splog,frey2019finite} for example), we know that every bounded regular language can be expressed in $\fc$.
Thus, using a very similar proof idea to Theorem 6.2 from~\cite{fre:splog}, we are able to show the following:

\begin{lemma}\label{lemma:bounded}
If $L \subseteq \Sigma^*$ is a Boolean combination of bounded languages, then $L \in \lang(\fc)$ if and only if $L \in \lang(\fc[\reg])$.
\end{lemma}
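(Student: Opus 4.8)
The plan is to leverage the fact that the regular constraints appearing in $L$ can, without loss of generality, be restricted so that all of their witnesses lie in a fixed bounded language, and then to argue that over a bounded language every regular constraint is definable in $\fc$ (since intersecting a regular language with a bounded language yields a bounded regular language, which we may assume is $\fc$-definable). Formally, suppose $L$ is a Boolean combination of bounded languages $B_1, \dots, B_m$, and let $\varphi \in \fc[\reg]$ with $\lang(\varphi) = L$. Each $B_i$ is contained in some $w_{i,1}^* \cdots w_{i,n_i}^*$; let $B$ be the (bounded) union of these, so that $L \subseteq B$ and $B$ is $\fc$-definable by some sentence $\beta$. The only direction requiring work is left-to-right, since $\lang(\fc) \subseteq \lang(\fc[\reg])$ trivially; so assume $L \in \lang(\fc[\reg])$ via $\varphi$.

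First I would establish a normal-form step: every regular constraint $(x \regconst \gamma)$ occurring in $\varphi$ can be replaced, without changing the language defined \emph{relative to inputs in $B$}, by a constraint whose language is $\lang(\gamma) \cap \facts(B)$. The point is that quantified variables of $\varphi$ always range over $\facts(w)$ for the input word $w$, and for $w \in B$ we have $\facts(w) \subseteq \facts(B)$; outside $B$ the value of $\varphi$ is irrelevant because $L \subseteq B$, so we can guard $\varphi$ by $\beta$ at the top level and freely modify regular constraints on the assumption that $w \in B$. Next I would argue that $\facts(B)$ is itself bounded: the set of factors of a bounded language $w_1^* \cdots w_n^*$ is bounded (each factor is a suffix of some $w_i^{e}$ followed by $w_{i+1}^{*}\cdots w_{j}^{*}$ followed by a prefix of some $w_{j+1}^{e'}$, which is again captured by finitely many starred words). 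Consequently $\lang(\gamma) \cap \facts(B)$ is a bounded regular language, hence — by the cited result that every bounded regular language is $\fc$-definable — there is an $\fc$-formula $\psi_\gamma(x)$ defining the relation $\{u \mid u \in \lang(\gamma) \cap \facts(B)\}$, i.e., $(\mathfrak{A}_w, \sigma) \models \psi_\gamma(x)$ iff $\sigma(x) \in \lang(\gamma) \cap \facts(B)$. Replacing each $(x \regconst \gamma)$ in $\varphi$ by $\psi_\gamma(x)$ and conjoining $\beta$ yields a pure $\fc$-sentence $\varphi'$; I would then verify $\lang(\varphi') = L$ by a straightforward induction on the structure of $\varphi$, using that for $w \in B$ the substitution preserves satisfaction and that for $w \notin B$ both sides are false.

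The main obstacle I expect is making the ``relative to $B$'' reasoning fully rigorous, in particular verifying that $\facts(B)$ is bounded and that the replacement of regular constraints is sound even when they appear under negations and nested quantifiers: one must be careful that $\psi_\gamma$ genuinely captures membership in $\lang(\gamma)$ \emph{for the factors that can actually occur}, not just on $B$ as a whole, so the argument hinges on the invariant that all variable values are in $\facts(w) \subseteq \facts(B)$ whenever $w \in B$. This is precisely the structure of the proof of Theorem~6.2 in~\cite{fre:splog}, which is why I would model the argument on it; the adaptation to the present signature $\signature_\Sigma$ and to $\fc$-formulas (rather than \splog) is routine once the boundedness-of-factors fact and the $\fc$-definability of bounded regular languages are in hand.
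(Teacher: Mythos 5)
Your core argument for the bounded case is essentially the paper's proof: both restrict each regular constraint $(x \regconst \gamma)$ to the (bounded, hence bounded-regular) set of factors of the bounding language $B \supseteq L$, invoke the $\fc$-definability of bounded regular languages, and substitute; the paper's appendix additionally re-derives that definability from the Ginsburg--Spanier characterization (finite languages, $w^*$ via the commutation argument $x = wz = zw \Rightarrow x \in w^*$, closure under finite union and concatenation), whereas you cite it, which is fine since the main text presents it as known.

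There is, however, one genuine gap: your claim that a Boolean combination $L$ of bounded languages $B_1,\dots,B_m$ satisfies $L \subseteq B$ for a bounded $B$ is false once complements are involved. For instance $L = \Sigma^* \setminus \mathtt{a}^*$ over $\Sigma = \{\mathtt{a},\mathtt{b}\}$ is a Boolean combination of bounded languages but is contained in no bounded language, so your guard ``for $w \notin B$ both sides are false'' breaks down, and with it the soundness of replacing $(x \regconst \gamma)$ by its restriction to $\facts(B)$. The paper avoids this by proving the equivalence only for bounded $L$ (where $L \subseteq w_1^*\cdots w_n^*$ genuinely holds) and then lifting to Boolean combinations separately, using that $\fc$ and $\fc[\reg]$ are closed under $\land$, $\lor$, and $\neg$. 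To repair your write-up you should make that two-stage structure explicit rather than trying to bound $L$ itself.
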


From Lemma 5.5 in~\cite{frey2019finite}, we know that if every regular constraint in an $\fcreg$-formula is a so-called \emph{simple regular expression}, then we can rewrite the $\fcreg$-formula into an $\fc$-formula.
We can therefore replace every constraint that is a Boolean combination of simple regular expressions with an $\fc$-formula.
We do not go further into this, as~\cref{lemma:bounded} is all that we require.

Before looking at relations that are not definable by $\fcreg$, we first define some concepts.

For $x,y \in \Sigma^*$, we say that $x$ is a \emph{scattered subword} of $y$ (denoted by $x \sqsubseteq_\mathsf{scatt} y$) if for some $n \geq 1$, there exists $x_1, x_2, \dots x_n \in \Sigma^*$ and $y_0, y_1, \dots, y_n \in \Sigma^*$ such that $x = x_1 x_2 \cdots x_n$ and $y = y_0 x_1 y_1 x_2 y_2 \cdots x_n y_n$.
 
A similar concept to scattered subwords is the \emph{shuffle product}.
For $x, y \in \Sigma^*$, we define the shuffle product of $x$ and $y$, denoted $x \shuffle y$, as the set of all words $x_1 y_1 \cdot x_2 y_2 \cdots x_n y_n$ where $x = x_1 \cdots x_n$ and $y = y_1 \cdots y_n$ for $x_i, y_i \in \Sigma^*$ for all $i,j \in [n]$.

\begin{example}
Consider the words $u \df \mathtt{abba}$ and $v \df \mathtt{aa}$.
It is clear that $v \sqsubseteq_\mathsf{scatt} u$, and $\mathtt{ababaa} \in u \shuffle v$.
\end{example}

We are now ready to use~\cref{lemma:nonFClangs} to show that several relations are not selectable by generalized core spanners.
 
\begin{theorem}\label{thm:relations}
The following are not definable in $\fcreg$:
\begin{itemize}
\item $\mathsf{Num}_\mathtt{a} \df \{ (x,y) \in (\Sigma^*)^2 \mid |x|_\mathtt{a} = |y|_\mathtt{a} \}$ for any $\mathtt{a} \in \Sigma$,
\item $\mathsf{Add} \df \{ (x,y,z) \in (\Sigma^*)^3 \mid |z| = |x|+|y| \}$,
\item $\mathsf{Mult} \df \{ (x,y,z) \in (\Sigma^*)^3 \mid |z| = |x| \cdot |y| \}$,
\item $\mathsf{Scatt} \df \{ (x,y) \in (\Sigma^*)^2 \mid x \sqsubseteq_{\mathsf{scatt}} y \}$, 
\item $\mathsf{Perm} \df \{ (x,y) \in (\Sigma^*)^2 \mid x \text{ is a permutation of } y \}$, 
\item $\mathsf{Rev} \df \{ (x,y) \in (\Sigma^*)^2 \mid x \text{ is the reverse of } y \}$, 
\item $\mathsf{Shuff} \df \{ (x,y,z) \in (\Sigma^*)^3 \mid z \in x \shuffle y \}$,
\item $\mathsf{Morph}_h \df \{ (x, y) \in (\Sigma^*)^2 \mid y=h(x) \}$ for a given morphism $h \colon \Sigma^* \rightarrow \Sigma^* $.
\end{itemize}
A \emph{morphism} is a function $h \colon \Sigma^* \rightarrow \Sigma^*$ with $h(xy) = h(x) \cdot h(y)$ for all $x, y \in \Sigma^*$.
\end{theorem}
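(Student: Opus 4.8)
The plan is to reduce each relation in the list to one of the non-$\fc$ languages from \cref{lemma:nonFClangs} (plus the basic inexpressibility results \cref{lemma:pow2}, \cref{lemma:allWords}, and \cref{lemma:pumpingConsequence}), using the fact that $\fcreg$ is closed under the relational-algebra operations it models. The key observation is a ``specialization'' principle: if $R \subseteq (\Sigma^*)^k$ were definable by some $\varphi_R \in \fcreg$, then for any fixed words $c_1,\dots,c_k$ we could substitute regular constraints $x_i \regconst c_i$ (or $x_i \regconst c_i^*$, which is a bounded regular language) and project, obtaining an $\fcreg$-sentence whose language is a slice of $R$. If that slice is a Boolean combination of bounded languages, \cref{lemma:bounded} lets us conclude it is actually an $\fc$-language, contradicting \cref{lemma:nonFClangs}. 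So the whole theorem is a catalogue of ``pick the right slice'' arguments.

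Concretely, I would proceed relation by relation. For $\mathsf{Num}_\mathtt{a}$: fixing $x\in \mathtt{a}^*$ and $y \in (\mathtt{ba})^*$ via bounded regular constraints and concatenating $x\cdot y$ (expressible in $\fcreg$ since concatenation is an atom) gives, after projection, the language $\{\mathtt{a}^n(\mathtt{ba})^n\}= L_1$; closure of $\fcreg$ under these operations plus \cref{lemma:bounded} then contradicts $L_1 \notin \lang(\fc)$. For $\mathsf{Add}$ and $\mathsf{Mult}$: constrain $x\in\mathtt{b}^*$, $y\in\mathtt{a}^*$, $z\in\mathtt{b}^*$ and build $x\cdot y\cdot z$; since $|z|=|x|+|y|$ (resp. $|x|\cdot|y|$), the resulting slice language is exactly $L_3$ (resp. $L_4$). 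For $\mathsf{Rev}$: constrain $x \in (\mathtt{abaabb})^*$ and $y\in(\mathtt{bbaaba})^*$ (noting $\mathtt{bbaaba}$ is the reversal of $\mathtt{abaabb}$), concatenate, and get $L_5$. For $\mathsf{Perm}$ and $\mathsf{Scatt}$ and $\mathsf{Shuff}$: these are the trickier ones — for $\mathsf{Perm}$ I would fix $x\in \mathtt{a}^n\mathtt{b}^n(\mathtt{ab})^n$-shaped slices by constraining, say, $y$ into an appropriate bounded regular language and $x$ into another so that ``$x$ is a permutation of $y$'' collapses to a counting constraint yielding $L_6$ or $L_1$; for $\mathsf{Scatt}$ one wants $x \sqsubseteq_\mathsf{scatt} y$ with $x\in\mathtt{a}^*$ and $y\in\mathtt{a}^i(\mathtt{ba})^j$-shaped words, recovering $L_2$ (this is why $L_2$ with the $1\le i\le j$ inequality is in the list). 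For $\mathsf{Morph}_h$: if $h$ is length-multiplying by a constant $c$, then constraining $x\in \mathtt{a}^*$ and $y\in\mathtt{b}^*$ and concatenating yields $\{\mathtt{a}^n\mathtt{b}^{cn}\}$, a variant of $L_3$/$L_4$; if $h$ is trivial on lengths one uses a different slice, and the degenerate cases ($h$ erasing, or $h$ constant) are handled directly. I would present a clean general lemma first — ``if $R$ is $\fcreg$-definable then every slice of $R$ obtained by bounded-regular substitution and projection is in $\lang(\fc)$'' — and then dispatch each bullet in one or two lines.

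The main obstacle is making sure the reductions are genuinely \emph{within} $\fcreg$ and land inside the scope of \cref{lemma:bounded}. Two subtleties: first, after substituting the bounded constraints, the free variables must be eliminated by projection so we get a genuine sentence, and we must check that the resulting language really is a Boolean combination of bounded languages (it is, because each slice sits inside a product $w_1^*\cdots w_n^*$ by construction); second, for $\mathsf{Scatt}$, $\mathsf{Perm}$, and $\mathsf{Shuff}$ the relations are defined semantically rather than via concatenation, so I must verify that under the chosen bounded constraints the defining condition becomes expressible and collapses to the intended counting language — e.g. for $\mathsf{Scatt}$ with $x\in\mathtt{a}^*$, $x\sqsubseteq_\mathsf{scatt}\mathtt{a}^i(\mathtt{ba})^j$ holds iff $i + j \ge |x|_\mathtt{a}$, i.e. iff $|x| \le i + j$, which needs a little care but reduces to $L_2$ after the right choice of which variables to fix versus quantify. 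I expect $\mathsf{Perm}$ and $\mathsf{Shuff}$ to require the most thought in choosing the slice, but once the slice is fixed the contradiction is immediate from \cref{lemma:nonFClangs} and \cref{lemma:bounded}. Everything else is bookkeeping: the relational operations ($\pi$, $\bowtie$, $\cup$, $\setminus$, $\zeta^=$, and the regular-constraint selection) are exactly what $\fcreg$ provides, so closure is automatic.
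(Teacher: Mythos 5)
Your proposal is correct and follows essentially the same route as the paper: assume $\varphi_R \in \fcreg$ defines $R$, restrict each argument to a bounded regular language via regular constraints, force the concatenation of the arguments to be the whole input word, and observe that the resulting sentence defines one of the bounded languages of \cref{lemma:nonFClangs}, contradicting \cref{lemma:bounded}. The only slices you leave open are resolved in the paper exactly in the spirit you describe: for $\mathsf{Perm}$ the same pair $(\mathtt{abaabb})^*$, $(\mathtt{bbaaba})^*$ used for $\mathsf{Rev}$ works (each block is a permutation of the other, so the slice is $L_5$), for $\mathsf{Shuff}$ the slice $x \regconst \mathtt{a}^+$, $y \regconst \mathtt{b}^+$ with $z$ the remaining suffix of the word yields $L_6$, and for $\mathsf{Morph}_h$ the paper simply exhibits the morphism $h(\mathtt{a})=h(\mathtt{b})=\mathtt{b}$ to recover $\{\mathtt{a}^n\mathtt{b}^n \mid n\in\mathbb{N}\}$.
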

\begin{sketch}
Assuming some relation $R$ from theorem statement is definable in $\fcreg$, we can easily construct $\varphi \in \fcreg$ such that $\lang(\varphi)$ is a language given in~\cref{lemma:nonFClangs}.
This results in a contradiction due to~\cref{lemma:bounded}.
A similar proof idea was used in Proposition 6.7 of~\cite{fre:splog} to show that several relations are not definable by core spanners.
\end{sketch}

Immediately from~\cite{frey2019finite}, all the relations given in~\cref{thm:relations} are not selectable by generalized core spanners.
Consequently, if one needed to express a query from~\cref{thm:relations}, they would have to add a specialized operator to obtain this functionality.

For any relation $R \subseteq {(\Sigma^*)}^k$, where $k \in \mathbb{N}$, we have that $R$ is definable in $\fcreg$ if and only if ${(\Sigma^*)}^k \setminus R$ is definable by $\fcreg$.
This is simply due to the fact that $\fc$ is closed under complementation.
Thus, the complement of all relations given in~\cref{thm:relations} are also not selectable by generalized core spanners.

\paragraph{Core Spanners} Core spanners are a subclass of generalized core spanners (see~\cite{fag:spa}).
While there are known inexpressibility results for core spanners (for example, $\mathsf{Num}_\mathtt{a}$, $\mathsf{Scatt}$, $\mathsf{Perm}$, $\mathsf{Rev}$, and $R_< \df \{ (u,v) \mid |u| < |v| \}$ were shown to be not selectable by core spanners in~\cite{fre:splog}, and~\cite{fag:spa} proved that length equality is not selectable by core spanners), our results extend what is known about the expressive power of core spanners.
For example, since the complement of the relations given in~\cref{thm:relations} are not selectable by generalized core spanners, they are also not selectable by core spanners.
Furthermore, we also show that $\mathsf{Add}$, $\mathsf{Mult}$, $\mathsf{Shuff}$ and $\mathsf{Morph}_h$ are not selectable by generalized core spanners, and thus are not selectable by core spanners.

\section{Conclusions}
In~\cref{sec:EFgames}, we established Ehrenfreucht-\fraisse games for $\fc$.
Then, in~\cref{sec:pumping}, we gave two technical lemmas -- the Pseudo-Congruence Lemma, and the Primitive Power Lemma -- which serve the basis of the Fooling Lemma.
These lemmas are the main contributions of this work; however, in~\cref{sec:spanners}, we lifted our inexpressibility results to $\fcreg$ which immediately from~\cite{frey2019finite} yields inexpressibility results for generalized core spanners.

The main limitation of our results is that our generalized core spanner inexpressibility results require Boolean combinations of bounded languages.
It is possible to use closure properties in order to partially overcome this issue. 
For example, consider the language $L \df \{ w \in \Sigma^* \mid |w|_{\mathtt{a}} = |w|_{\mathtt{b}} \}$.
Since $\fcreg$ is closed under intersection with regular languages, $L \in \lang(\fcreg)$ if and only if $L \intersect \mathtt{a}^* \mathtt{b}^* \in \lang(\fcreg)$.
It is clear that $L \intersect \mathtt{a}^* \mathtt{b}^*$ is the language $\mathtt{a}^n \mathtt{b}^n$ and thus $L \notin \lang(\fcreg)$.

Such an approach can only get us so far, and therefore going beyond Boolean combinations of bounded languages is an important area for future research.
We note that it is still open as to whether $\fc$ and $\fcreg$ have the same expressive power. 
The authors believe that this is unlikely, although proving that $\fc$ and $\fcreg$ do indeed have equivalent expressive power would remove the restriction of Boolean combinations of bounded languages.

As an intermediate step, one could look at more general regular languages that can be expressed by $\fc$.
This would close the gap between what we know can be expressed by $\fc$ and what can be expressed by $\fcreg$.
Then, a similar result to~\cref{lemma:bounded} could gain further inexpressibility results for generalized core spanners.
A starting point for this direction would be to see whether regular constraints beyond the Boolean closure of \emph{simple regular expressions} are achievable in $\fc$ (see Lemma 5.5 in~\cite{frey2019finite}).

Another promising area of future work, is looking at related two player games.
For example, pebble games could be used to derive insights on $\fc$-formulas with a finite number of variables (for example, see Chapter 11 of~\cite{libkin2004elements}).
Alternatively, the restriction to existential Ehrenfreucht-\fraisse games could yield further results regarding the inexpressibility of core spanners.

We note that there are many further open problems regarding $\fc$.
We refer to~\cite{frey2019finite} for a more comprehensive list of future research directions.

\bibliographystyle{ACM-Reference-Format}
\bibliography{main}

\appendix
\clearpage
\onecolumn

\section{Appendix for Section~\ref{sec:EFgames}}

\subsection{Proof of Lemma~\ref{obs:equivToLang}}
\begin{proof}
We prove this observation by a contradiction.
To that end, assume there exists $\varphi \in \fc$ such that $\lang(\varphi) = L$, and let $\qrank(\varphi) = n$ for some $n \in \mathbb{N}$.
Further assume that we have $w \equiv_n u$, where $w \in L$ and $u \notin L$.
Since $w \in L$ and $\lang(\varphi) = L$, we know that $\mathfrak{A}_w \models \varphi$ where $\mathfrak{A}_w$ is the $\signature_\Sigma$-structure that represents $w$.
Furthermore, as $w \equiv_n u$, we know from~\cref{thm:ehrenfeucht} that $\mathfrak{B}_u \models \varphi$ where $\mathfrak{B}_u$ is the $\signature_\Sigma$-structure that represents $u$.
Therefore, $u \in \lang(\varphi)$ which is a contradiction.
\end{proof}

\subsection{Proof of Lemma~\ref{lemma:pow2}}
\begin{proof}
Working towards a contradiction, assume that there exists some $k \in \mathbb{N}$ such that for all $p, q \in \mathbb{N}$ where $q$ is not a power of two we have that $\mathtt{a}^{2^p} \not\equiv_k \mathtt{a}^{q}$.
Observing~\cref{obs:equivToLang}, there exists $\varphi \in \fc(k)$ such that $\mathfrak{A}_w \models \varphi$ if and only if $w = \mathtt{a}^{2^n}$ for some $n \in \mathbb{N}$.
As $L_\mathsf{pow} \df \{ \mathtt{a}^{2^n} \mid n \in \mathbb{N} \}$ is not semi-linear, it is not definable in $\fc$. Thus, we have reached a contradiction.
\end{proof}

\subsection{Proof of Lemma~\ref{lemma:notCongruent}}
\begin{proof}
For every $k$, there exists $p, q \in \mathbb{N}$ where $p \neq q$ and $\mathtt{a}^p  \equiv_k \mathtt{a}^q $, see~\cref{lemma:pow2}.
Trivially, $\mathtt{b} \cdot \mathtt{a}^p \equiv_k \mathtt{b} \cdot \mathtt{a}^p$.
But $\mathtt{a}^p \cdot \mathtt{b} \cdot \mathtt{a}^p \not\equiv_k \mathtt{a}^q \cdot \mathtt{b} \cdot \mathtt{a}^p$.
To show that this does indeed hold, we give an $\fc$-formula that accepts words of the form $v \cdot \mathtt{b} \cdot v$ for $v \in \Sigma^*$, and thus can distinguish between $\mathtt{a}^p \cdot \mathtt{b} \cdot \mathtt{a}^p $ and $\mathtt{a}^q \cdot \mathtt{b} \cdot \mathtt{a}^p$ where $p \neq q$.
\[
 \varphi \df \exists x, y, z  \colon \Bigl( (y \logeq x \cdot z) \land (z \logeq \mathtt{b} \cdot x) \land 
\neg \exists z_1,z_2 \colon \bigl( ( (z_1 \logeq z_2 \cdot y) \lor (z_1 \logeq y \cdot z_2)) \land \neg (z_2 \logeq \emptyword) \bigr) \Bigr). 
\]
For any $\signature_\Sigma$ structure $\mathfrak{A}_w$ that represents some word $w \in \Sigma^*$, the subformula 
\[ \neg \exists z_1,z_2 \colon \bigl( ( (z_1 \logeq z_2 \cdot y) \lor (z_1 \logeq y \cdot z_2)) \land \neg (z_2 \logeq \emptyword) \bigr) \]
 states that $\subs(y) = w$ for any substitution $\subs$ where $(\mathfrak{A}_w, \subs) \models \varphi$.
Thus, $\varphi$ states that for $w \in \lang(\varphi)$, we have that $w = x \mathtt{b} x$.
Consequently, $\lang(\varphi) = \{ v \cdot \mathtt{b} \cdot v \mid v \in \Sigma^* \}$ and therefore $\mathtt{a}^p \cdot \mathtt{b} \cdot \mathtt{a}^p \not\equiv_k \mathtt{a}^q \cdot \mathtt{b} \cdot \mathtt{a}^p$ for any $k \geq \qr(\varphi)$.
Since $\qr(\varphi) = 5$, the stated lemma holds.
\end{proof}

\section{Proof of Proposition~\ref{prop:fib}}
\begin{proof}
Recall that for every $n \in \mathbb{N}$, we define $F_n \in \{ \mathtt{a}, \mathtt{b} \}^*$ recursively as follows: $F_0 \df \mathtt{a}$, $F_1 \df \mathtt{ab}$, and $F_i \df F_{i-1} \cdot F_{i-2}$ for all $i \geq 2$.

In this proof, we show that $L_\mathsf{fib} \df \{ \mathtt{c}  F_0  \mathtt{c}  F_1  \mathtt{c} \cdots \mathtt{c} F_n \mathtt{c} \mid n \in \mathbb{N} \}$ over $\Sigma \df \{ \mathtt{a}, \mathtt{b}, \mathtt{c} \}$ is an $\fc$ language.
For this proof, we use arbitrary concatenation as a shorthand for binary concatenation.

First, consider
\[ \varphi_w(x) \df \neg \exists z_1, z_2 \colon \Bigl( \neg (z_1 \logeq \emptyword) \land \bigl( (z_2 \logeq z_1 \cdot x) \lor (z_2 \logeq x \cdot z_1) \bigr)  \Bigr). \]

This states that for any interpretation $(\mathfrak{A}_w, \subs)$ where $\mathfrak{A}_w$ represents $w \in \Sigma^*$, and $(\mathfrak{A}_w, \subs) \models \varphi_w(x)$, it holds that $\subs(x) = w$.

Now consider:
\[
\varphi_{\mathsf{struc}} \df \exists x_1, \strucvar \colon \bigl( \varphi_w(\strucvar) \land (\strucvar \logeq \mathtt{c a c ab c} x_1 \mathtt{c}) \land \neg \exists x_2 \colon (x_2 \logeq \mathtt{cc}).  \bigr)
\]

The formula $\varphi_\mathsf{struc}$ states that any $w \in \lang(\varphi_\mathsf{struc})$ is of the form $\mathtt{c} \cdot \mathtt{a} \cdot \mathtt{c} \cdot \mathtt{ab} \cdot \mathtt{c} \cdot \bigl( \{ \mathtt{a}, \mathtt{b} \}^+ \cdot \mathtt{c} \bigr)^+$.

Next, consider:
\[ \varphi_\mathsf{fib} \df \exists \strucvar \colon \Bigl( \varphi_\mathsf{struc} \land \forall x, y_1, y_2, y_3 \colon \bigl( (x \logeq \mathtt{c} y_1 \mathtt{c} y_2 \mathtt{c} y_3 \mathtt{c}) \rightarrow \bigl( \varphi_\mathtt{c}(y_1) \lor \varphi_\mathtt{c}(y_2) \lor \varphi_\mathtt{c}(y_3) \lor (y_3 \logeq y_2 \cdot y_1)  \bigr)  \bigr) \Bigr), \]
where $\varphi_\mathtt{c}(x) \df \exists y, z \colon ( x \logeq y \cdot \mathtt{c} \cdot z)$, and $\rightarrow$ is logical implication.

The formula $\varphi_\mathsf{fib}$ accepts those words that are in $\lang(\varphi_\mathsf{struc})$, and for any factor of the form $\mathtt{c} y_1 \mathtt{c} y_2 \mathtt{c} y_3 \mathtt{c}$ for $y_1, y_2, y_3 \in \Sigma^*$, either 
\begin{itemize}
\item for some $i \in [3]$, the word $y_i$ contains a $\mathtt{c}$, or 
\item $y_3 = y_2 \cdot y_1$.
\end{itemize}
Thus, for all factors of the form $\mathtt{c} y_1 \mathtt{c} y_2 \mathtt{c} y_3 \mathtt{c}$ for $y_1, y_2, y_3 \in \{ \mathtt{a}, \mathtt{b} \}^*$, we have that $y_3 = y_2 \cdot y_1$.

Consequently, $\lang(\varphi_\mathsf{fib}) = L_\mathsf{fib}$.
\end{proof}

\section{Appendix for Section~\ref{sec:congruence}}

Recall that~\cref{sec:congruence} is dedicated to the Pseudo-Congruence Lemma.
Before giving the proof of this result, we first consider some lemmas and definitions that are required for the proof.

\subsection{Proof of Lemma~\ref{lemma:consistentStrats}}
\begin{proof}
Let $\mathfrak{A}_w$ and $\mathfrak{B}_v$ be two $\signature_\Sigma$-structures that represent $w \in \Sigma^*$ and $v \in \Sigma^*$ respectively, where $\mathfrak{A}_w \equiv_k \mathfrak{B}_v$.
We denote the universe of $\mathfrak{A}_w$ and $\mathfrak{B}_v$ by $A$ and $B$ respectively.
Let $(\vec a, \vec b)$ be the $k+|\Sigma|+1$-tuples resulting from a $k$-round game over $\mathfrak{A}_w$ and $\mathfrak{B}_v$ where Duplicator plays their winning strategy.
Thus, we know that $(\vec a, \vec b)$ forms a partial isomorphism.
To distinguish individual components of $\vec a$ and $\vec b$, let $\vec a = (a_1, a_2, \dots, a_{k+|\Sigma|+1})$ and let $\vec b = (b_1, b_2, \dots, b_{k+|\Sigma|+1})$, where for each $i \in [k+|\Sigma|+1]$, we have that $a_i \in A$ and $b_i \in B$.

Working towards a contradiction, assume that $r + |a_r| - 1 < k$ and $a_r \neq b_r$ for some $r \in [k]$.
Then, we can define a winning strategy for Spoiler as follows:
Let $a_r = \mathtt{a}_1 \cdot \mathtt{a}_2 \cdots \mathtt{a}_l$, where $\mathtt{a}_i \in \Sigma$ for $i \in [l]$.
\begin{itemize}
\item For round $r+1$, Spoiler chooses $\mathfrak{A}_w$, and $\mathtt{a}_1 \cdot \mathtt{a}_2$,
\item for round $r+2$, Spoiler chooses $\mathfrak{A}_w$, and $\mathtt{a}_1 \cdot \mathtt{a}_2 \cdot \mathtt{a}_3$,
\item $\dots$
\item for round $r+l-1$, Spoiler chooses $\mathfrak{A}_w$, and $\mathtt{a}_1 \cdot \mathtt{a}_2 \cdots \mathtt{a}_l$.
\end{itemize}

For $r+1$, Duplicator must respond with $b_{r+1} \df \mathtt{a}_1 \cdot \mathtt{a}_2$ since $(\vec a, \vec b)$ forms a partial isomorphism and $a_{r+1}$ is the concatenation of two constant symbols.
Therefore, $a_{r+1} = b_{r+1}$.
Then, for each $j \in \{r+2, \dots, r+l-1 \}$, we have that $a_j = a_{j-1} \cdot \mathtt{a}$ for some $\mathtt{a} \in \Sigma$. 
Again, since $(\vec a, \vec b)$ forms a partial isomorphism, $b_j = b_{j-1} \cdot \mathtt{a}$.

Therefore, $a_{r+l-1} = a_r$ and $b_{r+l-1} = a_r$.
However, $b_{r+l-1} \neq b_r$ since $a_r \neq b_r$.
Hence, $a_{r+l-1} = a_r$ and $b_{r+l-1} \neq b_r$.
Thus, Spoiler has won $\mathcal{G}$.
This is a contradiction, since $(\vec a,\vec b)$ is a partial isomorphism.
Consequently, our assumption that $a_r \neq b_r$ cannot hold.

Analogously, if $r + |b_r| - 1 < k$, then it must hold that $a_r = b_r$.
\end{proof}

\subsection{Proof of Lemma~\ref{lemma:prefixSuffix}}
\begin{proof}
Let $\mathfrak{A}_w$ and $\mathfrak{B}_v$ be two $\signature_\Sigma$-structures that represent $w \in \Sigma^*$ and $v \in \Sigma^*$ respectively, where $\mathfrak{A}_w \equiv_k \mathfrak{B}_v$.
We denote the universe of $\mathfrak{A}_w$ and $\mathfrak{B}_v$ by $A$ and $B$ respectively.
Let $(\vec a, \vec b)$ be the $k+|\Sigma|+1$-tuples resulting from a $k$-round game over $\mathfrak{A}_w$ and $\mathfrak{B}_v$ where Duplicator plays their winning strategy.
Thus, we know that $(\vec a, \vec b)$ forms a partial isomorphism.
To distinguish individual components of $\vec a$ and $\vec b$, let $\vec a = (a_1, a_2, \dots, a_{k+|\Sigma|+1})$ and let $\vec b = (b_1, b_2, \dots, b_{k+|\Sigma|+1})$, where for each $i \in [k+|\Sigma|+1]$, we have that $a_i \in A$ and $b_i \in B$.

Working towards a contradiction, assume that $a_r$ is a suffix of $w$ and $b_r$ is not a suffix of $v$ for some $r \leq k-2$.
We now show that Spoiler can use rounds $k-1$ and $k$ to win the $k$-round game over $\mathfrak{A}_w$ and $\mathfrak{B}_v$.

For round $k-1$, let Spoiler choose $\mathfrak{A}_w$ and $w$. 
Duplicator responds with some $u \sqsubseteq v$. We look at two separate cases based on whether $u = v$, or not.

\proofsubparagraph{Case 1, $u \neq v$:}
For round $k$, let Spoiler choose $\mathfrak{B}_v$ and $b_{k} \df u \cdot \mathtt{a} $ such that $\mathtt{a} \in \Sigma$ and $u \mathtt{a} \sqsubseteq v$.
It follows that $b_{k} = b_{k-1} \cdot \mathtt{a}$ yet there does not exist $u' \sqsubseteq w$ such that $u' = a_r \cdot \mathtt{a}$.
Consequently, Spoiler has won and we have thus reached a contradiction.

\proofsubparagraph{Case 2, $u=v$:}
For round $k$, let Spoiler choose $\mathfrak{A}_w$ and $a_k$ such that $a_{k-1} = a_k \cdot a_r$.
Duplicator must choose $b_k$ such that $b_{k-1} = b_k \cdot b_r$.
Thus it follows that $b_r$ is a suffix of $v$ and consequently, we have reached a contradiction.

\proofsubparagraph{Concluding the proof.}
Note that we have not considered the case where $a_r$ is a prefix of $w$, however this follows using the analogous reasoning.
Thus, so far we have shown that if $r \leq k-2$ and $a_r$ is a prefix (or suffix) of $w$, then $b_r$ is a prefix (or suffix respectively) of $v$.
However, $\mathfrak{A}_w$ and $\mathfrak{B}_v$ are arbitrarily named, and hence it immediately follows that if $r \leq k-2$ and $b_r$ is a prefix (or suffix) of $v$, then $a_r$ is a prefix (or suffix respectively) of $w$.
\end{proof}

\subsection{Proof of Lemma~\ref{lemma:congruence}}
We first give a necessary definition.

\begin{definition}
Let $\mathfrak{A}_{w}$ be a $\signature_\Sigma$-structures with the universe $A$ .
For any $A' \subset A$, let $\mathfrak{A}_w|_{A'}$ be the $\signature_\Sigma$-structure 
\[ (A' \union \{ \perp \}, \concrel', \mathtt{a}_1^{\mathfrak{A}_w|_{A'}}, \dots, \mathtt{a}_n^{\mathfrak{A}_w|_{A'}}, \emptyword^{\mathfrak{A}_w|_{A'}}) \]
where $\concrel' \subseteq (A')^3$ denotes concatenation restricted to elements of $A'$, and $\Sigma = \{ \mathtt{a}_1, \dots, \mathtt{a}_n \}$, and where the constants are interpreted as usual.
\end{definition}

In other words, $\mathfrak{A}_w |_{A'}$ is the restriction of the structure $\mathfrak{A}_w$ to the sub-universe $A' \union \{ \perp \}$.

We are now ready for the proof of~\cref{lemma:congruence}.

\begin{proof}
Let $w_1, w_2, v_1, v_2 \in \Sigma^*$ where $\facts(w_1) \intersect \facts(w_2) = \facts(v_1) \intersect \facts(v_1)$, and let $r \df \mathsf{max} \{|u| \in \mathbb{N} \mid u \in \facts(w_1) \intersect \facts(w_2) \}$.
Furthermore, assume that $w_1 \equiv_{k+ r + 2} v_1$ and $w_2 \equiv_{k+r + 2} v_2$ for some $k \in \mathbb{N}_+$.

Let $\mathfrak{A}_w$ be a $\signature_\Sigma$-structure for $w \df w_1 \cdot w_2$ and let $\mathfrak{B}_v$ be a $\signature_\Sigma$-structure for $v \df v_1 \cdot v_2$.
Let $A$ and $B$ be the universes for $\mathfrak{A}_w$ and $\mathfrak{B}_v$ respectively.

Consider the following subsets of $A$ and $B$:
\begin{align*}
A_\mathsf{other} \df & A \setminus \bigl( \facts(w_1) \union \facts(w_2) \union \{ \perp \} \bigr) \text{ and} \\
B_\mathsf{other} \df & B \setminus \bigl( \facts(v_1) \union \facts(v_2) \union \{ \perp \} \bigr).
\end{align*}
That is, $A_\mathsf{other}$ contains those factors of $w_1 \cdot w_2$ that are not factors of either $w_1$ or $w_2$.
Likewise, $B_\mathsf{other}$ contains those factors of $v_1 \cdot v_2$ that are not factors of either $v_1$ or $v_2$. 

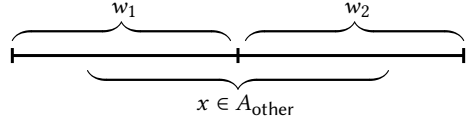
\begin{figure}
\centering
\begin{tikzpicture}[fill=white]
\tikzset{
    position label/.style={
       below = 3pt, 
       text height = 1.5ex,
       text depth = 1ex
    },
   brace/.style={
     decoration={brace, mirror},
     decorate
   }
}

\draw [-, line width=0.4mm] (-3,0) -- (3,0);
\draw [-, line width=0.4mm] (0, -.1) -- (0,.1);
\draw [-, line width=0.4mm] (3, -.1) -- (3,.1);
\draw [-, line width=0.4mm] (-3, -.1) -- (-3,.1);
\draw [decorate,decoration = {calligraphic brace, amplitude=7}, line width=0.3mm] (-3, 0.2) -- (-0.1, 0.2) node [above, xshift= -40pt, yshift=5.5pt] {$w_1$};
\draw [decorate,decoration = {calligraphic brace, amplitude=7}, line width=0.3mm] (0.1, 0.2) -- (3, 0.2) node [above, xshift= -40pt, yshift=5.5pt] {$w_2$};
\draw [decorate,decoration = {calligraphic brace, amplitude=7}, line width=0.3mm] (2, -0.2) -- (-2, -0.2) node [below, xshift= 60pt, yshift= -5.5pt] {$x \in A_\mathsf{other}$};

\end{tikzpicture}
\caption{Illustration of factors in $A_\mathsf{other}$. The factors of $B_\mathsf{other}$ are analogously illustrated by replacing $w_1$ and $w_2$ with $v_1$ and $v_2$ respectively.}
\label{fig:other}
\end{figure}

Notice that for any $x \in A_\mathsf{other}$, we have that $x = x_1 \cdot x_2$, where $x_1$ is a suffix of $w_1$ and $x_2$ is a prefix of $w_2$.
Analogously, for any $y \in B_\mathsf{other}$, we have that $y = y_1 \cdot y_2$, where $y_1$ is a suffix of $v_1$ and $y_2$ is a prefix of $v_2$.
For an illustration, see~\cref{fig:other}.

We now define two functions 
\begin{align*}
f_\mathsf{split} \colon & A_\mathsf{other} \rightarrow \facts(w_1) \times \facts(w_2) \text{ and} \\
g_\mathsf{split} \colon & B_\mathsf{other} \rightarrow \facts(v_1) \times \facts(v_2).
\end{align*}
Let $f_\mathsf{split}(x) = (x_1, x_2)$ where $x = x_1 \cdot x_2$ and where $x_1 \in \facts(w_1)$ and $x_2 \in \facts(w_2)$.
Let $g_\mathsf{split}(y) = (y_1, y_2)$ where $y = y_1 \cdot y_2$ and where $y_1 \in \facts(v_1)$ and $y_2 \in \facts(v_2)$.
The precise definition of these functions is not important, and these functions are only used to get a unique pair from $\facts(w_1)$ and $\facts(w_2)$ (or $\facts(v_1)$ and $\facts(v_2)$) from a given element of $A_\mathsf{other}$ (or $B_\mathsf{other}$).

Note that since $w_1 \equiv_{k+r+2} v_1$ and $w_2 \equiv_{k+r+2} v_2$ for some $k \in \mathbb{N}_+$, we have that $\mathfrak{A}_w |_{\facts(w_1)} \equiv_{k+r+2} \mathfrak{B}_v |_{\facts(v_1)}$, and $\mathfrak{A}_w |_{\facts(w_2)} \equiv_{k+r+2} \mathfrak{B}_v |_{\facts(v_2)}$.
Let $\mathcal{G}$ be a $k$-round game over $\mathfrak{A}_w$ and $\mathfrak{B}_v$.
We assume that for each round of $\mathcal{G}$, Spoiler does not choose $\perp$ since Duplicator can always respond with $\perp$.
Therefore, we only look at Duplicator's strategy for the case where Spoiler chooses some factor of $w$ or $v$.
Duplicator's strategy to win $\mathcal{G}$ is based upon two look-up games.

\proofsubparagraph{The look-up games.}
Let $\mathcal{G}_1$ be a $k+r+2$ round look-up game over $\mathfrak{A}_w |_{\facts(w_1)}$ and $\mathfrak{B}_v |_{\facts(v_1)}$.
Let $\mathcal{G}_2$ be a $k+r+2$ round look-up game over $\mathfrak{A}_w |_{\facts(w_2)}$ and $\mathfrak{B}_v |_{\facts(v_2)}$.
To help streamline the proof of correctness, we shall assume that Spoiler can skip certain rounds in $\mathcal{G}_1$ and $\mathcal{G}_2$.
That is, Spoiler can decide not to make a move for some round, and Duplicator therefore does not need to respond.
In reality, this can be easily done by Spoiler choosing $\perp$ for example; however, we assume Spoiler can skip as it is simpler for us to deal with.

\emph{Spoiler's Choice in $\mathcal{G}_1$ and $\mathcal{G}_2$:}
We now give Spoiler's choice in $\mathcal{G}_\iota$ for each $\iota \in \{ 1, 2 \}$.
For every $p \in [k]$, Spoiler's choice for round $p$ in $\mathcal{G}_\iota$ is uniquely determined from Spoiler's choice in $\mathcal{G}$.
Recall that $\mathcal{G}_\iota$ is a $k+r+2$ round game over $\mathfrak{A}_{w} |_{\facts(w_\iota)}$ and $\mathfrak{B}_v |_{\facts(v_\iota)}$, for $\iota \in \{ 1, 2 \}$.
Spoiler's choice in round $p \in [k]$ of $\mathcal{G}_\iota$ is defined as follows:
First, we consider the case where in round $p$ of $\mathcal{G}$, Spoiler chooses $\mathfrak{A}_w$.
Then, in round $p$ of $\mathcal{G}_\iota$, Spoiler chooses the structure $\mathfrak{A}_w |_{\facts(w_\iota)}$ and the factor Spoiler chooses is defined as follows:
\begin{itemize}
\item If in $\mathcal{G}$ Spoiler chooses some $u \in \facts(w_\iota)$, then Spoiler chooses $u$ in round $p$ of $\mathcal{G}_\iota$,
\item if in $\mathcal{G}$ Spoiler chooses some $u \in A_\mathsf{other}$ where $f_\mathsf{split}(u) = (u_1,u_2)$, then 
\begin{itemize}
\item Spoiler chooses $u_1$ in $\mathcal{G}_\iota$ if $\iota = 1$,
\item Spoiler chooses $u_2$ in $\mathcal{G}_\iota$ if $\iota = 2$,
\end{itemize}
\item if in $\mathcal{G}$ Spoiler chooses some $u \in \facts(w_\varrho) \setminus \facts(w_\iota)$ where $\varrho \in [2] \setminus \{ \iota \}$, then Spoiler chooses to skip round $p$ in $\mathcal{G}_\iota$.
\end{itemize}

Next, we consider the case where Spoiler chooses $\mathfrak{B}_v$ in round $p$ of $\mathcal{G}$, where $r \leq k$.
For this case, we have that Spoiler chooses $\mathfrak{B}_v |_{\facts(v_\iota)}$ in round $p$ of $\mathcal{G}$.
\begin{itemize}
\item If in $\mathcal{G}$ Spoiler chooses some $u \in \facts(v_\iota)$, then Spoiler chooses $u$ in round $p$ of $\mathcal{G}_\iota$,
\item if in $\mathcal{G}$ Spoiler chooses some $u \in B_\mathsf{other}$ where $g_\mathsf{split}(u) = (u_1,u_2)$, then 
\begin{itemize}
\item Spoiler chooses $u_1$ in $\mathcal{G}_\iota$ if $\iota = 1$,
\item Spoiler chooses $u_2$ in $\mathcal{G}_\iota$ if $\iota = 2$,
\end{itemize}
\item if in $\mathcal{G}$ Spoiler chooses some $u \in \facts(v_\varrho) \setminus \facts(v_\iota)$ where $\varrho \in [2] \setminus \{ \iota \}$, then Spoiler chooses to skip round $p$ in $\mathcal{G}_\iota$.
\end{itemize}

\emph{A Short Discussion on $\mathcal{G}_1$ and $\mathcal{G}_2$:}
For rounds $k+1$ to $k+r+2$, we assume Spoiler chooses any structure and any factor in both $\mathcal{G}_1$ and $\mathcal{G}_2$.
This ensures that in any round $p \in [k]$ of $\mathcal{G}_1$ and $\mathcal{G}_2$, if Spoiler chooses some $u \in \facts(w_1) \intersect \facts(w_2)$ (or $u \in \facts(v_1) \intersect \facts(v_2)$), we have that Duplicator responds with $u$.
This is because $|u| \leq r$, and Duplicator has a winning strategy for the $k+r+2$-round games $\mathcal{G}_1$ and $\mathcal{G}_2$, see~\cref{lemma:consistentStrats}.
After Spoiler has made their choice in $\mathcal{G}_1$ and $\mathcal{G}_2$, Duplicator responds in both games using their winning strategy.

Let $\vec a_1 = (a_{1,1}, a_{1,2}, \dots, a_{1,k+|\Sigma|+1})$ and let $\vec b_1 = (b_{1,1}, b_{1,2}, \dots, b_{1,k+|\Sigma|+1})$ be the tuples generated from the first $k$-rounds of $\mathcal{G}_1$, along with the interpreted constants.
Let $\vec a_2 = (a_{2,1}, a_{2,2}, \dots, a_{2,k+|\Sigma|+1})$ and let $\vec b_2 = (b_{2,1}, b_{2,2}, \dots, b_{2,k+|\Sigma|+1})$ be the tuples generated from the first $k$-rounds of $\mathcal{G}_2$, along with the interpreted constants.
If Spoiler chose to skip round $p \in [k]$ in $\mathcal{G}_1$ (or $\mathcal{G}_2$), then we say that $a_{1,p} = b_{1,p} = \perp$ (or $a_{2,p} = b_{2,p} = \perp$).
Note that skipping a round has no bearing on whether the generated tuples are a partial isomorphism.
Therefore, since $\mathfrak{A}_w |_{\facts(w_1)} \equiv_{k+r+2} \mathfrak{B}_v |_{\facts(v_1)}$, and $\mathfrak{A}_w |_{\facts(w_2)} \equiv_{k+r+2} \mathfrak{B}_v |_{\facts(w_2)}$, it follows that $( \vec a_1, \vec b_1)$ and $( \vec a_2, \vec b_2)$ both form a partial isomorphism. 

Before moving on to define Duplicator's strategy in $\mathcal{G}$, we make some remarks about the tuples generated from $\mathcal{G}_1$ and $\mathcal{G}_2$.
For each $p \in [k]$, let $d_{1,p}$ be the factor that Duplicator chose in round $p$ of $\mathcal{G}_1$, and let $d_{2,p}$ be the factor that Duplicator chose in round $p$ of $\mathcal{G}_2$.
From the definition of Spoiler's choice in $\mathcal{G}_1$ and $\mathcal{G}_2$ we have that:
\begin{itemize}
\item If Spoiler skipped round $p$ in $\mathcal{G}_1$, then for round $p$ in $\mathcal{G}$, Spoiler chose some $u \in \facts(w_2) \setminus \facts(w_1)$, or some $u \in \facts(v_2) \setminus \facts(v_1)$. Analogously, if Spoiler skipped round $p$ in $\mathcal{G}_2$, then for round $p$ in $\mathcal{G}$, Spoiler chose some $u \in \facts(w_1) \setminus \facts(w_2)$, or some $u \in \facts(v_1) \setminus \facts(v_2)$.
\item If in round $p$ of $\mathcal{G}$, Spoiler chose some $u \in \facts(w_1) \intersect \facts(w_2)$, or some $u \in \facts(v_1) \intersect \facts(v_2)$, then it follows that $d_{1,p} = d_{2,p} = u$. This is because we know that $|u| \leq r$, and Duplicator must be able to win after $k+r+2$-rounds. Therefore, we can use~\cref{lemma:consistentStrats} to determine that $d_{1,p} = d_{2,p} = u$ must hold.
\item The final case is when Spoiler chose some $u \in A_\mathsf{other}$, or $u \in B_\mathsf{other}$. Note that this is the only case where Spoiler does not skip round $p$ in both $\mathcal{G}_1$ and $\mathcal{G}_2$, and where $d_{1,p} \neq d_{2,p}$.
\end{itemize}

\proofsubparagraph{Duplicator's strategy.}
Duplicator derives their response in $\mathcal{G}$ using their responses in $\mathcal{G}_1$ and $\mathcal{G}_2$.
For each round $p \in [k]$, the structure Duplicator chooses is always the opposite structure to what Spoiler has chosen, therefore, we only look at which factor Duplicator chooses.
We use $d_{\iota,p}$ for $\iota \in \{ 1,2 \}$ to denote Duplicator's response to Spoiler in round $p$ of $\mathcal{G}_\iota$.
Duplicator response to round $p \in [k]$ in $\mathcal{G}$ is as follows:
\begin{itemize}
\item If Spoiler skipped round $p$ in $\mathcal{G}_1$, then Duplicator's choice in round $p$ of $\mathcal{G}$ is $d_{2,p}$;
\item if Spoiler skipped round $p$ in $\mathcal{G}_2$, then Duplicator's choice in round $p$ of $\mathcal{G}$ is $d_{1,p}$;
\item if Spoiler chose some $u \in \facts(w_1) \intersect \facts(w_2)$, or some $u \in \facts(v_1) \intersect \facts(v_2)$ in $\mathcal{G}$, then Duplicator responds with $d_{1,p}$. Recall that $d_{1,p} = d_{2,p} = u$ for this case; and
\item if Spoiler chose some $u \in A_\mathsf{other}$, or $u \in B_\mathsf{other}$, then Duplicator responds with $d_{1,p} \cdot d_{2,p}$. Note that due to~\cref{lemma:prefixSuffix}, it must hold that $d_{1,p}$ is a suffix of $w_1$ (or $v_1$), and $d_{2,p}$ is a prefix of $w_2$ (or $v_2$ respectively). Refer back to~\cref{fig:other} for an illustration of elements in $A_\mathsf{other}$ and $B_\mathsf{other}$. Therefore $d_{1,p} \cdot d_{2,p}$ is always an element of the structure in which Duplicator plays round $p$.
\end{itemize}
This gives a complete strategy for Duplicator in $\mathcal{G}$.

To sum up Duplicator's strategy for $\mathcal{G}$ informally: If Spoiler chooses an element from either $\facts(w_1)$ or $\facts(v_1)$, then Duplicator responds with their winning strategy for $\mathcal{G}_1$.
If Spoiler chooses from either $\facts(w_2)$ or $\facts(v_2)$, then Duplicator responds with their winning strategy for $\mathcal{G}_2$.
Finally, if Spoiler chooses from $A_\mathsf{other}$ or $B_\mathsf{other}$, then we split the factor Spoiler chose into two factors from $\facts(w_1) \times \facts(w_2)$, or $\facts(v_1) \times \facts(v_2)$, and Duplicator responds with the concatenation of their winning strategy from $\mathcal{G}_1$ and $\mathcal{G}_2$.

\proofsubparagraph{Correctness.}
Our next focus is showing that the strategy just defined for Duplicator is indeed a winning strategy for the game $\mathcal{G}$.
To that end, let $\vec a \df (a_1, a_2, \dots, a_{k+|\Sigma|+1})$ and $\vec b \df (b_1, b_2, \dots, b_{k+|\Sigma|+1})$, where $a_i \in A$ and $b_i \in B$ for all $i \in [k+|\Sigma|+1]$, be the tuples that result from $\mathcal{G}$ along with the interpreted constants, where Duplicator plays the strategy we have just defined.

Recall that $\vec a_1 = (a_{1,1}, a_{1,2}, \dots, a_{1,k+|\Sigma|+1})$ and $\vec b_1 = (b_{1,1}, b_{1,2}, \dots, b_{1,k+|\Sigma|+1})$ are the tuples generated from the first $k$ rounds of $\mathcal{G}_1$, and $\vec a_2 = (a_{2,1}, a_{2,2}, \dots, a_{2,k+|\Sigma|+1})$ and $\vec b_2 = (b_{2,1}, b_{2,2}, \dots, b_{2,k+|\Sigma|+1})$ are the tuples generated from the first $k$ rounds of $\mathcal{G}_2$.
Also recall, that the last $|\Sigma|+1$ components of these tuples are the interpretations of the constant symbols.
We know that $(\vec a_1, \vec b_1)$ and $(\vec a_2, \vec b_2)$ both form a partial isomorphism.

While Duplicator's strategy for $\mathcal{G}$ is based upon their strategy for the first $k$-rounds of $\mathcal{G}_1$ and $\mathcal{G}_2$, Duplicator can survive $k+r+2$-rounds of $\mathcal{G}_1$ and $\mathcal{G}_2$.
This ensures that Duplicator makes certain decisions in $\mathcal{G}_1$ and $\mathcal{G}_2$ which ensures that Duplicator's strategy for $\mathcal{G}$ is indeed a winning strategy.
Thus, to prove that Duplicator's strategy for $\mathcal{G}$ is a winning strategy, we consider Spoiler's choices in $\mathcal{G}_1$ and $\mathcal{G}_2$ for rounds $k+1$ up to $k+r+2$.
In fact, we only look at rounds $k+1$, $k+2$, and $k+3$, since we leave the last $r-1$ rounds to ensure that if Spoiler chooses some $u$ such that $|u| \leq r$, then Duplicator responds with $u$.

Let $\kappa \df k+|\Sigma|+1$.
Then, we use $a_{1,\kappa +i}$ for $i \in [r+2]$ to denote Spoiler's choice in round $k+i$ (assuming Spoiler chooses the structure $\mathfrak{A}_w |_{\facts(w_1)}$). 
Thus, after all $k+r+2$ rounds, we have that the first $k$-rounds of $\mathcal{G}_1$ and $\mathcal{G}_2$ are in the first $k$-components of the corresponding tuples, the next $|\Sigma|+1$ components are then the interpreted constants, and finally, the last $r+2$ components are the chosen elements for the last $r+2$-rounds.
This is simply a permutation of the resulting tuples, and therefore $(\vec a_1, \vec b_1)$ and $(\vec a_2, \vec b_2)$ still form a partial isomorphism.

To help with some subsequent reasoning, we now prove the following claim:
\begin{claim}\label{claim:Aother}
For every $p \in [k]$, we have that $a_p \in A_\mathsf{other}$ if and only if $b_p \in B_\mathsf{other}$.
\end{claim}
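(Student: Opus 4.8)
The plan is to fix a round $p\in[k]$ and exploit the symmetry of the construction: the two look-up games $\mathcal{G}_1,\mathcal{G}_2$ play dual roles, and swapping $\mathfrak{A}_w\leftrightarrow\mathfrak{B}_v$ (together with $w_\iota\leftrightarrow v_\iota$) is a symmetry of the whole setup. So it suffices to treat the case where Spoiler moved in $\mathfrak{A}_w$ in round $p$, so that $a_p$ is Spoiler's choice and $b_p$ is Duplicator's reply. One implication is then immediate from the definition of Duplicator's strategy: if $a_p\notin A_\mathsf{other}$, then $a_p\in\facts(w_\iota)$ for some $\iota\in\{1,2\}$, Duplicator answers with $d_{\iota,p}\in\facts(v_\iota)$, and hence $b_p\notin B_\mathsf{other}$. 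Thus the real content of the claim is: \emph{if $a_p\in A_\mathsf{other}$ then $b_p\in B_\mathsf{other}$.}

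So assume $a_p\in A_\mathsf{other}$ and let $f_\mathsf{split}(a_p)=(u_1,u_2)$, so $u_1$ is a non-empty suffix of $w_1$, $u_2$ a non-empty prefix of $w_2$, and in $\mathcal{G}_\iota$ Spoiler plays $u_\iota$ in round $p$. Then $b_p=d_{1,p}\cdot d_{2,p}$, and since $p\le k\le (k+r+2)-2$, \cref{lemma:prefixSuffix} applied to $\mathcal{G}_1$ and $\mathcal{G}_2$ gives that $d_{1,p}$ is a suffix of $v_1$ and $d_{2,p}$ a prefix of $v_2$; in particular $b_p\in\facts(v)$, so it only remains to exclude $b_p\in\facts(v_1)$ and $b_p\in\facts(v_2)$. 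I will describe the first exclusion; the second is symmetric, carried out in $\mathcal{G}_2$ instead of $\mathcal{G}_1$.

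Suppose for contradiction $b_p\in\facts(v_1)$. As $d_{2,p}$ is a suffix of $b_p$ it lies in $\facts(v_1)$, and being a prefix of $v_2$ it also lies in $\facts(v_2)$; hence $d_{2,p}\in\facts(v_1)\intersect\facts(v_2)=\facts(w_1)\intersect\facts(w_2)$, so $|d_{2,p}|\le r$. Applying \cref{lemma:consistentStrats} to round $p$ of the $(k+r+2)$-round game $\mathcal{G}_2$ (where $p+|d_{2,p}|-1\le k+r-1<k+r+2$) forces $d_{2,p}=u_2$; in particular $u_2\in\facts(w_1)$. Now I use two of the surplus rounds of $\mathcal{G}_1$: let Spoiler play, in the $v_1$-structure, the element $u_2$ in round $k+1$ (legal since $u_2\in\facts(v_1)$; by \cref{lemma:consistentStrats} Duplicator must answer $u_2$) and the element $b_p$ in round $k+2$ (legal since $b_p\in\facts(v_1)$; Duplicator answers some $c\in\facts(w_1)$). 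On the $v_1$-side the elements chosen in rounds $k+2$, $p$, $k+1$ satisfy $b_p=d_{1,p}\cdot u_2$, so the partial isomorphism guaranteed by Duplicator's winning strategy for $\mathcal{G}_1$ forces $c=u_1\cdot u_2$ with $c\in\facts(w_1)$. But $c=u_1\cdot u_2=a_p$, so $a_p\in\facts(w_1)$, contradicting $a_p\in A_\mathsf{other}$.

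The step I expect to be the crux is exactly this last reduction: knowing only that $d_{1,p}$ is a suffix of $v_1$ and $d_{2,p}$ a prefix of $v_2$ does \emph{not} prevent $d_{1,p}\cdot d_{2,p}$ from being a factor of $v_1$ on its own (periodicity of $v_1$ could arrange this), so one genuinely needs the spare $r+2$ rounds of the look-up games twice over---first, via \cref{lemma:consistentStrats}, to pin $d_{2,p}$ down to $u_2$, and second to re-probe $b_p$ and $u_2$ so that the partial isomorphism of $\mathcal{G}_1$ can transport the factorization $b_p=d_{1,p}\cdot u_2$ back into $w_1$ as $a_p=u_1\cdot u_2$. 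The remaining cases---$b_p\in\facts(v_2)$, handled the same way inside $\mathcal{G}_2$ (forcing $d_{1,p}=u_1$ and transporting $b_p=u_1\cdot d_{2,p}$ into $w_2$), and Spoiler moving in $\mathfrak{B}_v$, handled by the symmetry above---follow by the identical forcing argument with the obvious relabelling.
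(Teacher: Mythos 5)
Your proof is correct and follows essentially the same route as the paper's: you pin down $d_{2,p}=u_2$ (the paper's $a_{2,p}=b_{2,p}$) via \cref{lemma:consistentStrats}, then spend two of the surplus rounds of $\mathcal{G}_1$ re-probing $u_2$ and $b_p$ so that the partial isomorphism transports the factorization back into $\facts(w_1)$, contradicting $a_p\in A_\mathsf{other}$. The only differences are presentational (you spell out the easy direction and the $b_p\in\facts(v_2)$ case, which the paper dispatches by symmetry).
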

\begin{claimproof}
Without loss of generality, assume that in round $p$ of $\mathcal{G}$, Spoiler chose $a_p \in A_\mathsf{other}$.
Therefore, we have that $a_p = a_{1,p} \cdot a_{2,p}$ where $a_{1,p} \in \facts(w_1)$ and $a_{2,p} \in \facts(w_2)$.
It follows that Duplicator responds with $b_p = b_{1,p} \cdot b_{2,p}$, where $b_{1,p} \in \facts(v_1)$ and $b_{2,p} \in \facts(v_2)$.
Working towards a contradiction, assume (without loss of generality) that $b_p \in \facts(v_1)$.
It follows that $b_{2,p} \in \facts(v_1)$ holds, and therefore $b_{2,p} \in \facts(w_1) \intersect \facts(w_2)$.
Since it therefore holds that $|b_{2,p}| \leq r$, we have that $a_{2,p} = b_{2,p}$, see~\cref{lemma:consistentStrats}.
Hence, $a_{2,p} \in \facts(w_1) \intersect \facts(w_2)$ holds.

We now look at the game $\mathcal{G}_1$ for rounds $k+1$ to $k+r+2$.
Recall that Duplicator plays $\mathcal{G}_1$ with their winning strategy.
\begin{itemize}
\item In round $k+1$, Spoiler can choose $\mathfrak{B}_w |_{\facts(v_1)}$ and $b_{1,\kappa+1} \df a_{2,p}$. 
Duplicator must respond with $a_{1,\kappa +1} = a_{2,p}$.
This is because $| a_{2,p} | \leq r$, since $a_{2,p} \in \facts(w_1) \intersect \facts(w_2)$, and Duplicator must be able to survive the next $r$ rounds, see~\cref{lemma:consistentStrats}..
\item In round $k+2$, Spoiler can choose $\mathfrak{B}_w |_{\facts(v_1)}$ and $b_{1,\kappa+2} \df b_{1,p} \cdot b_{1,\kappa+1}$.
We note that Spoiler can choose $b_{1,\kappa+2}$ since $b_{1,\kappa+2} = b_p$ and we have assumed that $b_p \in \facts(v_1)$.
Duplicator must respond with $a_{1,\kappa+2} = a_{1,p} \cdot a_{1,\kappa+1}$.
\end{itemize}
However, $a_{1,\kappa+2} = a_{1,p} \cdot a_{2,p}$ and therefore $a_{1,\kappa+2} = a_p$, which implies $a_{1,\kappa+2} \in \facts(w_1)$.
This is a contradiction since we have assumed $a_p \in A_\mathsf{other}$, and therefore $a_p \notin \facts(w_1)$.
\end{claimproof}

To show that $(\vec a, \vec b)$ is indeed a partial isomorphism, we show that the following conditions hold:
\begin{itemize}
\item For every $i \in [k+|\Sigma|+1]$, and constant symbol $c \in \signature_\Sigma$, we have $a_i = c^{\mathfrak{A}_w}$ if and only if $b_i = c^{\mathfrak{B}_v}$,
\item for every $i,j \in [k+|\Sigma|+1]$, we have that $a_i = a_j$ if and only if $b_i = b_j$, and
\item for every $i,j,l \in [k+|\Sigma|+1]$, we have that $a_l = a_i \cdot a_j$ if and only if $b_l = b_i \cdot b_j$.
\end{itemize}

First note that directly from the definition of Duplicator's strategy, we have that $a_i \in \Sigma$ if and only if $b_i \in \Sigma$ and $a_i = b_i$.
Furthermore, for every $i,j \in [k+|\Sigma|+1]$, we have that $a_i = a_j$ if and only if $a_i = a_j \cdot \emptyword$.
Symmetrically, for every $i,j \in [k+|\Sigma|+1]$, we have that $b_i = b_j$ if and only if $b_i = b_j \cdot \emptyword$.

Thus, if $a_i \cdot a_j = a_l$ if and only if  $b_i \cdot b_j = b_l$ for all $a_i, a_j, a_l \in \vec a$ and for all $b_i,b_j,b_l \in \vec b$, then $(\vec a, \vec b)$ forms a partial isomorphism.
Hence, the following claim completes the proof of correctness for Duplicator's strategy:

\begin{claim}
For every $i,j,l \in [k+|\Sigma|+1]$, we have that $a_i \cdot a_j = a_l$ if and only if $b_i \cdot b_j = b_l$.
\end{claim}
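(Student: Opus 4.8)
The plan is to finish the verification that $(\vec a,\vec b)$ is a partial isomorphism: together with the facts already observed just before the claim --- that $a_p\in\Sigma$ iff $b_p\in\Sigma$ (with $a_p=b_p$), and that the equality condition reduces to concatenation with $\emptyword$ --- this claim is the only thing left, and with it the Pseudo-Congruence Lemma is done. By the symmetry of partial isomorphisms and of the construction of Duplicator's strategy in $\mathfrak{A}_w$ and $\mathfrak{B}_v$ (swap $w_\iota\leftrightarrow v_\iota$, $A_\mathsf{other}\leftrightarrow B_\mathsf{other}$, $\vec a_\iota\leftrightarrow\vec b_\iota$), it suffices to prove that $a_i\cdot a_j=a_l$ in $\mathfrak{A}_w$ implies $b_i\cdot b_j=b_l$ in $\mathfrak{B}_v$.

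First I would isolate a \emph{typing} observation: for every index $p$, exactly one of the following holds, and it holds simultaneously for $a_p$ and $b_p$: (i) $a_p\in\facts(w_1)\intersect\facts(w_2)$, and then $a_p=b_p=a_{1,p}=a_{2,p}=b_{1,p}=b_{2,p}$; (ii) $a_p\in\facts(w_1)\setminus\facts(w_2)$, $b_p\in\facts(v_1)\setminus\facts(v_2)$, $a_{1,p}=a_p$, $b_{1,p}=b_p$, $a_{2,p}=b_{2,p}=\perp$; (iii) the mirror image of (ii); (iv) $a_p\in A_\mathsf{other}$, $b_p\in B_\mathsf{other}$, $a_p=a_{1,p}\cdot a_{2,p}$, $b_p=b_{1,p}\cdot b_{2,p}$, with $a_{1,p},b_{1,p}$ nonempty suffixes of $w_1,v_1$ and $a_{2,p},b_{2,p}$ nonempty prefixes of $w_2,v_2$. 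This follows directly from the definition of Duplicator's strategy, \cref{claim:Aother}, \cref{lemma:consistentStrats} (factors of length at most $r$ lying in $\facts(w_1)\intersect\facts(w_2)$ are answered by themselves in $\mathcal{G}_1$ and $\mathcal{G}_2$), and \cref{lemma:prefixSuffix} (prefixes and suffixes are answered by prefixes and suffixes); constant and $\perp$ components fit into (i)--(iv) or are immediate.

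Next comes a case analysis on the type of $a_l$. If $a_l\in\facts(w_1)$, then $a_i$ and $a_j$, being a prefix and a suffix of $a_l$, also lie in $\facts(w_1)$, so each of $i,j,l$ is of type (i) or (ii); hence $a_p=a_{1,p}$ and $b_p=b_{1,p}$ for $p\in\{i,j,l\}$, the equation $a_{1,l}=a_{1,i}\cdot a_{1,j}$ holds in $\mathfrak{A}_w|_{\facts(w_1)}$, and since $(\vec a_1,\vec b_1)$ is a partial isomorphism we get $b_{1,l}=b_{1,i}\cdot b_{1,j}$, i.e. $b_l=b_i\cdot b_j$; the case $a_l\in\facts(w_2)$ is symmetric via $\mathcal{G}_2$. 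The remaining case $a_l\in A_\mathsf{other}$ is the heart of the argument. Here $a_l=a_{1,l}\cdot a_{2,l}$ splits at the $w_1/w_2$ boundary, and by \cref{claim:Aother} $b_l=b_{1,l}\cdot b_{2,l}$ splits likewise; I would compare the position $|a_i|$ of the seam against $|a_{1,l}|$. If $|a_i|=|a_{1,l}|$, then $a_i=a_{1,l}\in\facts(w_1)$ and $a_j=a_{2,l}\in\facts(w_2)$, and since the round of $a_i$ and the round of $a_l$ contribute the same first coordinate $a_{1,l}$ to the trace of $\mathcal{G}_1$, the partial isomorphism $(\vec a_1,\vec b_1)$ forces $b_i=b_{1,l}$, and symmetrically $b_j=b_{2,l}$, so $b_l=b_i\cdot b_j$. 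If $|a_i|<|a_{1,l}|$ (so $a_i\in\facts(w_1)$ and $a_j=s\cdot a_{2,l}$ straddles, where $a_{1,l}=a_i\cdot s$) I would play the seam pieces $s$ and $a_{2,l}$ --- and, if $a_j\in A_\mathsf{other}$, the short overlap of the two decompositions of $a_j$ --- in the spare rounds $k{+}1,\dots,k{+}r{+}2$ of $\mathcal{G}_1$ and $\mathcal{G}_2$; the short pieces (those landing in $\facts(w_1)\intersect\facts(w_2)$, which always include $a_{2,l}$ when $a_j\notin A_\mathsf{other}$, since then $a_{2,l}$ is a suffix of $a_j$) are copied verbatim by \cref{lemma:consistentStrats}, and chaining the partial-isomorphism equations coming from $a_{1,l}=a_i\cdot s$ and $a_j=s\cdot a_{2,l}$ yields $b_i\cdot b_j=b_{1,l}\cdot b_{2,l}=b_l$. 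The subcase $|a_i|>|a_{1,l}|$ is the mirror image.

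The main obstacle is precisely this last case: matching the pieces of $a_l=a_i\cdot a_j$ against the look-up-game decompositions is awkward because the splitting of an element of $A_\mathsf{other}$ into a suffix of $w_1$ times a prefix of $w_2$ need not be unique, so one cannot simply read it off from $f_\mathsf{split}$. The resolution --- and the reason for the extra $r{+}2$ rounds in $\mathcal{G}_1$ and $\mathcal{G}_2$ --- is that every ``seam'' piece that is forced to lie in $\facts(w_1)\intersect\facts(w_2)$ has length at most $r$ and is therefore copied verbatim by Duplicator (\cref{lemma:consistentStrats}), which is exactly what lets the pieces be threaded consistently through both traces. By comparison, the constant and $\perp$ components, and the $\emptyword$-reduction of the equality condition, are routine.
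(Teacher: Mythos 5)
Your proposal is correct and takes essentially the same approach as the paper's own proof: reduce to one implication by symmetry, use the typing facts together with \cref{claim:Aother}, and, in the critical case $a_l \in A_\mathsf{other}$, play the seam/overlap pieces in the spare rounds of $\mathcal{G}_1$ and $\mathcal{G}_2$ so that the pieces forced into $\facts(w_1) \intersect \facts(w_2)$ (hence of length at most $r$) are copied verbatim by \cref{lemma:consistentStrats}, after which the partial-isomorphism equations from the two look-up games chain to $b_l = b_i \cdot b_j$. The only deviations are organizational (you split on the seam position $|a_i|$ versus $|a_{1,l}|$ rather than on the membership types of $a_i, a_j$ as in the paper's Cases 3.1--3.3, which incidentally also covers the configuration $a_j \in \facts(w_1) \setminus \facts(w_2)$) and one harmless imprecision: when $a_j \in \facts(w_2) \setminus \facts(w_1)$ the piece guaranteed to lie in the intersection is $s$ rather than $a_{2,l}$, but since you play both pieces the chaining still closes.
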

\begin{claimproof}
Without loss of generality, assume that $a_i \cdot a_j = a_l$.
We consider three cases based on $a_l$, and we shall prove that each case yields $b_i \cdot b_j = b_l$.
Before doing so, we state some facts.

From the definition of Spoiler's and Duplicator's choices in $\mathcal{G}$, $\mathcal{G}_1$ and $\mathcal{G}_2$, we have that the following conditions hold for any $p \in [k]$ and any $\iota \in \{ 1, 2\}$:
\begin{itemize}
\item $a_p \in \facts(w_\iota)$ if and only if $b_p \in \facts(v_\iota)$,
\item if $a_p \in \facts(w_\iota)$, then $a_p  = a_{\iota,p}$, and
\item if $b_p \in \facts(v_\iota)$, then $b_p  = b_{\iota,p}$.
\end{itemize}

Likewise, for any $p \in [k]$ we know that
\begin{itemize}
\item $a_p \in A_\mathsf{other}$ if and only if $b_p \in B_\mathsf{other}$ (see~\cref{claim:Aother}),
\item if $a_p \in A_\mathsf{other}$, then $a_p = a_{1,p} \cdot a_{2,p}$, and
\item if $b_p \in B_\mathsf{other}$, then $b_p = b_{1,p} \cdot b_{2,p}$. 
\end{itemize}

\proofsubparagraph{Case 1, $a_l \in \facts(w_1)$:}
If $a_l \in \facts(w_1)$, then we know that $a_i, a_j \in \facts(w_1)$, and from the definition of Duplicator's strategy, we know that $b_l, b_i, b_j \in \facts(v_1)$.
Therefore, it follows that $a_s = a_{1,s}$ and $b_s = b_{1,s}$ for $s \in \{ i,j,l \}$.
Since $(\vec a_1, \vec b_1)$ forms a partial isomorphism, and $a_{1,l} = a_{1,i} \cdot a_{1,j}$, we know that $b_{1,l} = b_{1,i} \cdot b_{1,j}$.

\proofsubparagraph{Case 2, $a_l \in \facts(w_2)$:}
This case is symmetric to Case 1. 

\proofsubparagraph{Case 3, $a_l \in A_\mathsf{other}$:} 
Note that $a_i, a_j \in A_\mathsf{other}$ cannot hold.
This is because $a_l \in A_\mathsf{other}$, and therefore, there exists $x \in \facts(w_1)$ and $y \in \facts(w_2)$ such that $a_l = x \cdot y$.
Thus, from the fact that $a_l = a_i \cdot a_j = x \cdot y$, using a length argument we can determine that $a_i \in \facts(w_1)$ or $a_j \in \facts(w_2)$.
Thus, we split Case 3 into the following subcases:
\begin{description}
\item[Case 3.1,] $a_i \in \facts(w_1)$ and $a_j \in \facts(w_2)$.
\item[Case 3.2,] $a_i \in A_\mathsf{other}$ and $a_j \in \facts(w_2)$.
\item[Case 3.3,] $a_i \in \facts(w_1)$ and $a_j \in A_\mathsf{other}$.
\end{description}

We now prove that for each of the above mentioned subcases, our assumption that $a_i \cdot a_j = a_l$ yields $b_i \cdot b_j = b_l$.

\proofsubparagraph{Case 3.1, $a_i \in \facts(w_1)$ and $a_j \in \facts(w_2)$:}
As $a_i \in \facts(w_1)$, it follows that $a_i = a_{1,i}$ and consequently, $b_i \in \facts(v_1)$ which implies that $b_i = b_{1,i}$.
Analogously, since $a_j \in \facts(w_2)$, we know that $a_j = a_{2,j}$ and consequently, $b_j \in \facts(v_2)$ which implies that $b_j = b_{2,j}$.

As per the definition of Duplicator's strategy and the look-up games $\mathcal{G}_1$ and $\mathcal{G}_2$, we know that $a_l = a_{1,l} \cdot a_{2,l}$ and $b_l = b_{1,l} \cdot b_{2,l}$.
Since $a_l = a_i \cdot a_j$, it follows that $a_{1,l} \cdot a_{2,l} = a_{1,i} \cdot a_{2,j}$.

Due to the fact that $a_{1,l} \cdot a_{2,l} = a_{1,i} \cdot a_{2,j}$, we have that either $a_{1,l}$ is a (not necessarily strict) prefix of $a_{1,i}$, or $a_{1,i}$ is a prefix of $a_{1,l}$.
This gives us two further subcases to look at.

\subparagraph*{Case 3.1.1, $a_{1,l}$ is a prefix of $a_{1,i}$:}
For this case, we can write $a_{1,i} = a_{1,l} \cdot x$ for some $x \in \Sigma^*$. 
Therefore
$  a_{1,l} \cdot a_{2,l} = a_{1,l} \cdot x \cdot a_{2,j} $
and hence $a_{2,l} = x \cdot a_{2,j}$. 
Since $a_{1,i} \in \facts(w_1)$ and $x \sqsubseteq a_{1,i}$, we know that $x \in \facts(w_1)$.
Furthermore, $a_{2,l} \in \facts(w_2)$, and $x \sqsubseteq a_{2,l}$, and consequently $x \in \facts(w_2)$ also holds.
Thus $x \in \facts(w_1) \intersect \facts(w_2)$.

We now show that (for Case 3.1.1) if Duplicator plays their winning strategy for $\mathcal{G}_1$ and $\mathcal{G}_2$, and $a_l = a_i \cdot a_j$, then $b_l = b_i \cdot b_j$ must hold.
To show this, we look at possible choices for Spoiler in round $k+1$ up to round $k+r+2$ for $\mathcal{G}_1$ and $\mathcal{G}_2$.
Recall that $\kappa \df k+|\Sigma|+1$, and we use the last $r+2$ components of the tuples $\vec a_1$, $\vec a_2$, $\vec b_1$, and $\vec b_2$ to denote the choices made in the last $r+2$ rounds of $\mathcal{G}_1$ and $\mathcal{G}_2$ respectively.

\begin{itemize}
\item For round $k+1$ of $\mathcal{G}_1$, Spoiler can choose $\mathfrak{A}_w |_{\facts(w_1)}$ and $a_{1, \kappa+1} \df x$.
Since $x \in \facts(w_1) \intersect \facts(w_2)$, we know that $|x| \leq r$, and therefore Duplicator must respond with $\mathfrak{B}_v |_{\facts(v_1)}$ and $b_{1,\kappa+1} \df x$.
This is because Duplicator must be able to survive the next $r$ rounds, see~\cref{lemma:consistentStrats}.
Since we know that $a_{1,i} = a_{1,l} \cdot a_{1,\kappa+1}$, and Duplicator plays their winning strategy for the $k+r+2$ round game $\mathcal{G}_1$, we know that $b_{1,i} = b_{1,l} \cdot b_{1,\kappa+1}$ must also hold.
\item For round $k+1$ of $\mathcal{G}_2$, Spoiler can choose $\mathfrak{A}_w |_{\facts(w_2)}$ and $a_{2,\kappa+1} \df x$.
Duplicator must respond with $\mathfrak{B}_v |_{\facts(v_1)}$ and $b_{2,\kappa+1} \df x$.
Since $a_{2,l} = a_{2,\kappa+1} \cdot a_{2,j}$, we know that $b_{2,l} = b_{2,\kappa+1} \cdot b_{2,j}$ must hold.
\end{itemize}

Since $b_{2,\kappa+1} = b_{1,\kappa+1} = x$, we can write:
\[  b_{1,l} \cdot b_{2,\kappa+1} \cdot b_{2,j}= b_{1,l} \cdot b_{1,\kappa+1} \cdot b_{2,j} . \]
We know that $b_{2,l} = b_{2,\kappa+1} \cdot b_{2,j}$ and $b_{1,i} = b_{1,l} \cdot b_{1,\kappa+1}$, we have that
\[  b_{1,l} \cdot \underbrace{b_{2,\kappa+1} \cdot b_{2,j}}_{b_{2,l}} = \underbrace{b_{1,l} \cdot b_{1,\kappa+1}}_{b_{1,i}} \cdot b_{2,j} . \]
Furthermore, $b_l = b_{1,l} \cdot b_{2,l}$ and $b_i \cdot b_j = b_{1,i} \cdot b_{2,j}$.
Consequently, $b_l = b_i \cdot b_j$.

All of the subsequent cases have a similar proof to the proof of Case 3.1.2.
Unfortunately, they are not similar enough to hand-wave away.
If the reader is convinced that the proof of Case of 3.1.2 can be adapted to the other various cases, they are invited to skip the rest of the proof.
However, we include proofs of all the cases for completeness sake (and to avoid any pitfalls).

\subparagraph*{Case 3.1.2, $a_{1,i}$ is a prefix of $a_{1,l}$:}
We can write $a_{1,l} = a_{1,i} \cdot x$ for some $x \in \Sigma^*$.
Therefore
$ a_{1,i} \cdot x \cdot a_{2,l} = a_{1,i} \cdot a_{2,j}$,
and hence $a_{2,j} = x \cdot a_{2,l}$.
Since $x \sqsubseteq a_{2,j}$, and $a_{2,j} \in \facts(w_2)$, we have that $x \in \facts(w_2)$.
Furthermore, $x \sqsubseteq a_{1,l}$ and $a_{1,l} \in \facts(w_1)$.
Consequently, $x \in \facts(w_1) \intersect \facts(w_2)$.

Like Case 3.1.1, we shall prove that if Duplicator plays their winning strategy for $\mathcal{G}_1$ and $\mathcal{G}_2$, then $b_l = b_i \cdot b_j$.
To show this, we look at possible choices for Spoiler in rounds $k+1$ up to round $k+r+2$  for $\mathcal{G}_1$ and $\mathcal{G}_2$.

\begin{itemize}
\item For round $k+1$ of $\mathcal{G}_1$, Spoiler can choose $\mathfrak{A}_w |_{\facts(w_1)}$ and $a_{1, \kappa+1} \df x$.
Since $x \in \facts(w_1) \intersect \facts(w_2)$, we know that $|x| \leq r$, and therefore Duplicator must respond with $\mathfrak{B}_v |_{\facts(v_1)}$ and $b_{1,\kappa+1} \df x$.
Since we know that $a_{1,l} = a_{1,i} \cdot a_{1,\kappa+1}$, and Duplicator plays their winning strategy for the $k+r+2$ round game $\mathcal{G}_1$, we know that $b_{1,l} = b_{1,i} \cdot b_{1,\kappa+1}$ must also hold.
\item For round $k+1$ of $\mathcal{G}_2$, Spoiler can choose $\mathfrak{A}_w |_{\facts(w_2)}$ and $a_{2,\kappa+1} \df x$.
Duplicator must respond with $\mathfrak{B}_v |_{\facts(v_1)}$ and $b_{2,\kappa+1} \df x$.
Since $a_{2,j} = a_{2,\kappa+1} \cdot a_{2,l}$, we know that $b_{2,j} = b_{2,\kappa+1} \cdot b_{2,l}$ must hold.
\end{itemize}

Since $b_{1,\kappa+1} = b_{2,\kappa+1} = x$, we have:
\[  b_{1,i} \cdot b_{1,\kappa+1} \cdot b_{2,l} = b_{1,i} \cdot b_{2,\kappa+1} \cdot b_{2,l} . \]

Furthermore, we know $b_{1,l} = b_{1,i} \cdot b_{1,\kappa+1}$ and $b_{2,j} = b_{2,\kappa+1} \cdot b_{2,l}$.
Thus
\[  \underbrace{b_{1,i} \cdot b_{1,\kappa+1}}_{b_{1,l}} \cdot b_{2,l} = b_{1,i} \cdot \underbrace{b_{2,\kappa+1} \cdot b_{2,l}}_{b_{2,j}} . \]
Since $b_l = b_{1,l} \cdot b_{2,l}$ and $b_i \cdot b_j = b_{1,i} \cdot b_{2,j}$, we have that $b_l = b_i \cdot b_j$.

\proofsubparagraph{Case 3.2, $a_i \in A_\mathsf{other}$ and $a_j \in \facts(w_2)$:}
Since $a_i \in A_\mathsf{other}$, we have that $a_i = a_{1,i} \cdot a_{2,i}$ where $a_{1,i} \in \facts(w_1)$ and $a_{2,i} \in \facts(w_2)$.
We also have that $a_j = a_{2,j}$ as $a_j \in \facts(w_2)$.
Therefore, $a_l = a_{1,i} \cdot a_{2,i} \cdot a_{2,j}$.
Furthermore, as $a_l = a_{l,1} \cdot a_{l,2}$, where $a_{l,1} \in \facts(w_1)$ and $a_{l,2} \in \facts(w_2)$, we can write that 
\[ a_{1,l} \cdot a_{2,l} = \underbrace{a_{1,i} \cdot a_{2,i}}_{a_i} \cdot \underbrace{a_{2,j}}_{a_j}.\]
It follows that either $a_{1,l}$ is a prefix of $a_{1,i}$, or $a_{1,i}$ is a prefix of $a_{1,l}$.
This gives us two subcases to consider.

\subparagraph*{Case 3.2.1, $a_{1,l}$ is a prefix of $a_{1,i}$:}
It follows that $a_{1,i} = a_{1,l} \cdot x$ for some $x \in \Sigma^*$. 
We therefore can write that
\[  a_{1,l} \cdot a_{2,l} = a_{1,l} \cdot x \cdot a_{2,i} \cdot  a_{2,j} \]
and hence $a_{2,l} = x \cdot a_{2,i} \cdot a_{2,j}$. 
We have that $x \sqsubseteq a_{1,i}$ and $x \sqsubseteq a_{2,l}$, and thus, $x \in \facts(w_1) \intersect \facts(w_2)$.

We now look at round $k+1$ to round $k+r+2$ of $\mathcal{G}_1$ and $\mathcal{G}_2$.
\begin{itemize}
\item For round $k+1$ of $\mathcal{G}_1$, Spoiler can choose $\mathfrak{A}_w |_{\facts(w_1)}$ and $a_{1,\kappa+1} \df x$.
Since $x \in \facts(w_1) \intersect \facts(w_2)$, we know that $|x| \leq r$, and therefore Duplicator must respond with $\mathfrak{B}_v |_{\facts(v_1)}$ and $b_{1,\kappa+1} \df x$, see~\cref{lemma:consistentStrats}.
Since we know that $a_{1,i} = a_{1,l} \cdot a_{1,\kappa+1}$, and Duplicator plays their winning strategy for the $k+r+2$ round game $\mathcal{G}_1$, we know that $b_{1,i} = b_{1,l} \cdot b_{1,\kappa+1}$ must also hold.
\item For round $k+1$ of $\mathcal{G}_2$, Spoiler can choose $\mathfrak{A}_w |_{\facts(w_2)}$ and $a_{2,\kappa+1} \df x$.
Duplicator must respond with $\mathfrak{B}_v |_{\facts(v_1)}$ and $b_{2,\kappa+1} \df x$.
\item For round $k+2$ of $\mathcal{G}_2$, Spoiler can choose $\mathfrak{A}_w |_{\facts(w_2)}$ and $a_{2,\kappa+2} \df a_{2,\kappa+1} \cdot a_{2,i}$.
Note that $a_{2,\kappa+1} \cdot a_{2,i} \in \facts(w_2)$, since $x \cdot a_{2,i} \sqsubseteq a_{2,l}$, and $a_{2,l} \in\facts(w_2)$.
Duplicator therefore must respond with $\mathfrak{B}_v |_{\facts(v_1)}$ and $b_{2,\kappa+2} \df b_{2,\kappa+1} \cdot b_{2,i}$.
Since $a_{2,l} = a_{2,\kappa+2} \cdot a_{2,j}$, and Duplicator plays $\mathcal{G}_2$ with their winning strategy, we know that $b_{2,l} = b_{2,\kappa+2} \cdot b_{2,j}$ must hold.
\end{itemize}

Since $b_{2,\kappa+1} = b_{1,\kappa+1} = x$, we can write 
\[ b_{1,l} \cdot  b_{2,\kappa+1} \cdot b_{2,i} \cdot b_{2,j} = b_{1,l} \cdot b_{1,\kappa+1} \cdot b_{2,i} \cdot b_{2,j}.  \]
We also know that $b_{2,\kappa+2} = b_{2, \kappa+1} \cdot b_{2,i}$ and $b_{1,i} = b_{1,l} \cdot b_{1,\kappa+1}$, therefore:
\[ b_{1,l} \cdot b_{2,\kappa+2} \cdot b_{2,j} = b_{1,i} \cdot b_{2,i} \cdot b_{2,j}.  \]
Now, since $b_{2,l} = b_{2,\kappa+2} \cdot b_{2,j}$ we have
\[ b_{1,l} \cdot b_{2,l} = b_{1,i} \cdot b_{2,i} \cdot b_{2,j}.  \]
Since $b_l = b_{1,l} \cdot b_{2,l}$ and $b_i \cdot b_j = b_{1,i} \cdot b_{2,i} \cdot b_{2,j}$, we have that $b_l = b_i \cdot b_j$.

\subparagraph*{Case 3.2.2, $a_{1,i}$ is a prefix of $a_{1,l}$:}
It follows that $a_{1,l} = a_{1,i} \cdot x$ for some $x \in \Sigma^*$, where we know that $x \in \facts(w_1)$.
We therefore can write that 
\[  \underbrace{a_{1,i} \cdot x}_{a_{1,l}} \cdot a_{2,l} = a_{1,i} \cdot a_{2,i} \cdot  a_{2,j} \]
and hence $x \cdot a_{2,l} = a_{2,i} \cdot a_{2,j}$. 
With apologies to the reader, we split Case 3.2.2 into two further subcases:
\begin{description}
\item[Case 3.2.2.1,] $x$ is a prefix of $a_{2,i}$, and
\item[Case 3.2.2.2,] $a_{2,i}$ is a strict prefix of $x$.
\end{description}

Let us now consider the first of these two subcases.

\medskip
\emph{Case 3.2.2.1, $x$ is a prefix of $a_{2,i}$:}
Since $a_{2,i} \in \facts(w_2)$, we know that $x \in \facts(w_2)$, and therefore $x \in \facts(w_1) \intersect \facts(w_2)$.
Furthermore, as $x$ is a prefix of $a_{2,i}$, there exists some $y \in \Sigma^*$ such that $a_{2,i} = x \cdot y$.
Therefore, 
\begin{align*}
a_{1,l} \cdot a_{2,l} = & a_{1,i} \cdot a_{2,i} \cdot a_{2,j}, \\
\underbrace{a_{1,i} \cdot x}_{a_{1,l}} \cdot a_{2,l} = & a_{1,i} \cdot \underbrace{x \cdot y}_{a_{2,i}} \cdot a_{2,j}.
\end{align*}
Hence, $a_{2,l} = y \cdot a_{2,j}$, which in turn implies that:
\[ \underbrace{a_{1,i} \cdot x}_{a_{1,l}} \cdot \underbrace{y \cdot a_{2,j}}_{a_{2,l}} = a_{1,i} \cdot \underbrace{x \cdot y}_{a_{2,i}} \cdot a_{2,j}.  \]

We now look at round $k+1$ to round $k+r+2$ of $\mathcal{G}_1$ and $\mathcal{G}_2$.
\begin{itemize}
\item For round $k+1$ of $\mathcal{G}_1$, Spoiler can choose $\mathfrak{A}_w |_{\facts(w_1)}$ and $a_{1,\kappa+1} \df x$.
Since $x \in \facts(w_1) \intersect \facts(w_2)$, it follows that Duplicator must respond with $b_{1,\kappa+1} \df x$. 
Furthermore, as $a_{1,l} = a_{1,i} \cdot a_{1,\kappa+1}$, it must follow that $b_{1,l} = b_{1,i} \cdot b_{1,\kappa+1}$.
\item For round $k+1$ of $\mathcal{G}_2$, Spoiler can choose $\mathfrak{A}_w |_{\facts(w_1)}$ and $a_{2,\kappa+1} \df x$.
Duplicator must respond with $b_{2,\kappa+1} \df x$, since $x \in \facts(w_1) \intersect \facts(w_2)$.
\item For round $k+2$ of $\mathcal{G}_2$, Spoiler can choose $\mathfrak{A}_w |_{\facts(w_1)}$ and $a_{2,\kappa+2} \df y$.
Duplicator responds with some $b_{2,\kappa+2} \in \facts(v_2)$. Since $a_{2,i} = a_{2,\kappa+1} \cdot a_{2,\kappa+2}$, it follows that $b_{2,i} = b_{2,\kappa+1} \cdot b_{2,\kappa+2}$ must hold.
Furthermore, $a_{2,l} = a_{2,\kappa+2} \cdot a_{2,j}$ and therefore $b_{2,l} = b_{2,\kappa+2} \cdot b_{2,j}$.
\end{itemize}

Since $b_{1,\kappa+1} = b_{2,\kappa+1}$, we have
\[ b_{1,i} \cdot b_{1,\kappa+1} \cdot b_{2,\kappa+2} \cdot b_{2,j} = b_{1,i} \cdot b_{2,\kappa+1} \cdot b_{2,\kappa+2} \cdot b_{2,j}. \]

However, because we know that:
\begin{itemize}
\item $b_{1,l} = b_{1,i} \cdot b_{1,\kappa+1}$, 
\item $b_{2,i} = b_{2,\kappa+1} \cdot b_{2,\kappa+2}$,
\item $b_{2,l} = b_{2,\kappa+2} \cdot b_{2,j}$,
\end{itemize}
we can conclude that:
\[  b_{1,l} \cdot b_{2,l} = b_{1,i} \cdot b_{2,i} \cdot b_{2,j}. \] 
Thus, we obtain $b_i \cdot b_j = b_l$.

\medskip
\emph{Case 3.2.2.2, $a_{2,i}$ is a strict prefix of $x$:}
It follows that $x = a_{2,i} \cdot y$ for some $y \in \Sigma^+$. Due to the fact that
\[ a_{1,l} \cdot a_{2,l} = a_{1,i} \cdot a_{2,i} \cdot a_{2,j}, \]
and we know that $a_{1,l} = a_{1,i} \cdot x$ and $x = a_{2,i} \cdot y$ holds, we can write that:
\begin{equation}
\underbrace{a_{1,i} \cdot \overbrace{a_{2,i} \cdot y}^x}_{a_{1,l}} \cdot a_{2,l} = a_{1,i} \cdot a_{2,i} \cdot a_{2,j}. \label{eq:1}
\end{equation}
Thus, $y \cdot a_{2,l} = a_{2,j}$.
Substituting $y \cdot a_{2,l}$ for $a_{2,j}$ into \eqref{eq:1}, we have:
\[ \underbrace{a_{1,i} \cdot a_{2,i} \cdot y}_{a_{1,l}} \cdot a_{2,l} = a_{1,i} \cdot a_{2,i} \cdot \underbrace{y \cdot a_{2,l}}_{a_{2,j}}. \]

Note that this then implies that $a_{2,i} \in \facts(w_1) \intersect \facts(w_2)$, since $a_{2,i} \sqsubseteq a_{1,l}$.
It follows from the fact that $a_{2,i} \in \facts(w_1) \intersect \facts(w_2)$, that $b_{2,i} = a_{2,i}$.
Likewise $y \in \facts(w_1) \intersect \facts(w_2)$, since $y \sqsubseteq a_{1,l}$, and $y \sqsubseteq a_{2,j}$.

We now look at round $k+1$ to round $k+r+2$ of $\mathcal{G}_1$ and $\mathcal{G}_2$.
\begin{itemize}
\item For round $k+1$ of $\mathcal{G}_1$, Spoiler can choose $\mathfrak{A}_w |_{\facts(w_1)}$ and $a_{1,\kappa+1} \df y$.
Since $y \in \facts(w_1) \intersect \facts(w_2)$, it follows that Duplicator must respond with $b_{1,\kappa+1} \df y$. 
\item For round $k+2$ of $\mathcal{G}_1$, Spoiler can choose $\mathfrak{A}_w |_{\facts(w_1)}$ and $a_{1,\kappa+2} \df a_{2,i}$.
Since $a_{2,i} \in \facts(w_1) \intersect \facts(w_2)$, Duplicator responds with $b_{1,\kappa+2}$, where $b_{1,\kappa+2} = a_{2,i}$
\item For round $k+3$ of $\mathcal{G}_1$, Spoiler can choose $\mathfrak{A}_w |_{\facts(w_1)}$ and $a_{1,\kappa+3} \df a_{1,\kappa+2} \cdot a_{1,\kappa+1}$.	
We note that $a_{1,\kappa+3} \in \facts(w_1)$, since $a_{1,\kappa+3} = a_{2,i} \cdot y$, and $a_{2,i} \cdot y \sqsubseteq a_{1,l}$.
Duplicator then responds with some $b_{1,\kappa+3}$, where $b_{1,\kappa+3} = b_{1,\kappa+2} \cdot b_{1,\kappa+1}$ must hold.
Furthermore, as $a_{1,l} = a_{1,i} \cdot a_{1,\kappa+3}$, we have that $b_{1,l} = b_{1,i} \cdot b_{1,\kappa+3}$.
\item For round $k+1$ of $\mathcal{G}_2$, Spoiler can choose $\mathfrak{A}_w |_{\facts(w_2)}$ and $a_{2,\kappa+1} \df y$.
Therefore, Duplicator must respond with $b_{2,\kappa+1} \df y$.
We know that $a_{2,j} = a_{2,\kappa+1} \cdot a_{2,l}$, and thus $b_{2,j} = b_{2,\kappa+1} \cdot b_{2,l}$ must hold.
\end{itemize}

Since $b_{1,\kappa+1} = b_{2,\kappa+1} = y$ and $b_{1,\kappa+2} = b_{2,i} = a_{2,i}$, we can write:
\[ b_{1,i} \cdot b_{1,\kappa+2} \cdot b_{1,\kappa+1} \cdot b_{2,l} = b_{1,i} \cdot b_{2,i} \cdot b_{2,\kappa+1} \cdot b_{2,l}. \]

We know that $b_{1,\kappa+3} = b_{1,\kappa+2} \cdot b_{1,\kappa+1}$ and therefore:
\[ b_{1,i} \cdot b_{1,\kappa+3} \cdot b_{2,l} = b_{1,i} \cdot b_{2,i} \cdot b_{2,\kappa+1} \cdot b_{2,l}. \]

Furthermore, we know that
\begin{itemize}
\item $b_{1,l} = b_{1,i} \cdot b_{1,\kappa+3}$, and
\item $b_{2,j} = b_{2,\kappa+1} \cdot b_{2,l}$.
\end{itemize}
Thus,
\[ b_{1,l} \cdot b_{2,l} = b_{1,i} \cdot b_{2,i} \cdot b_{2,j}. \]
Hence, we have that $b_l = b_i \cdot b_j$.

\proofsubparagraph{Case 3.3, $a_i \in \facts(w_1)$ and $a_j \in A_\mathsf{other}$:}
This case is symmetric to Case 3.2.
We therefore omit a proof.

Thus, for all possible cases, our assumption that $a_l = a_i \cdot a_j$ yields $b_l = b_i \cdot b_j$. 
\end{claimproof}

It therefore follows that the strategy we have given for Duplicator always results in a partial isomorphism.
Thus, Duplicator has a winning strategy for the $k$-round game over $w_1 \cdot w_2$ and $v_1 \cdot v_2$. 
Consequently, $w_1 \cdot w_2 \equiv_k v_1 \cdot v_2$.
\end{proof}

\section{Appendix for Section~\ref{sec:primitivepower}}

\cref{sec:primitivepower} is dedicated to the Primitive Power Lemma.
Before giving the proof of this result, we first prove some lemmas regarding primitive words.
Most of these subsequent lemmas follow from the fact that a primitive word $w \in \Sigma^+$ cannot be a non-trivial factor of $ww$.
That is, if $ww = p w s$, then either $p = \emptyword$ or $s = \emptyword$; for more details, see Chapter 6 of~\cite{handbookOfFL}.
For our purposes, we look at a slight generalization of this idea:

\begin{lemma}\label{obs:primitive}
A word $w \in \Sigma^+$ is primitive if and only if for all $m \in \mathbb{N}$, we have that $w^m = u \cdot w \cdot v$ for any $u, v \in \Sigma^+$, implies that $u = w^n$ and $v = w^{n'}$ for some $n, n' < m$. 
\end{lemma}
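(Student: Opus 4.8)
The plan is to prove the two implications separately, reducing everything to the fact recalled just above (see Chapter~6 of~\cite{handbookOfFL}) that a primitive word is not a proper internal factor of its own square, i.e.\ $ww = p\,w\,s$ with $p,s\in\Sigma^*$ forces $p=\emptyword$ or $s=\emptyword$. The ``only if'' direction is where the real work lies; the ``if'' direction is a one-line witness argument.

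For the ``only if'' direction, assume $w$ is primitive and $w^m = u \cdot w \cdot v$ with $u,v \in \Sigma^+$. The key step is to show that $|w|$ divides $|u|$. Once this is done, $u$ is the prefix of $w^m$ of length $n\,|w|$ for $n := |u|/|w|$, so $u = w^n$; cancelling a prefix $w^{n+1}$ gives $v = w^{m-n-1}$, and since $u$ and $v$ are non-empty we get $1 \le n$ and $1 \le n' := m - n - 1$, in particular $n, n' < m$. To establish $|w| \mid |u|$, suppose not and write $|u| = n\,|w| + j$ with $0 < j < |w|$ and $n \ge 0$. Because $v \neq \emptyword$, the occurrence of $w$ inside $w^m$ starting at position $|u|$ ends strictly before position $m\,|w|$, which forces $n + 2 \le m$; hence positions $n\,|w|, \dots, (n+2)\,|w| - 1$ of $w^m$ form an isolated block equal to $w^2$, and the occurrence of $w$ at position $|u|$ lies entirely inside this block, beginning at its offset $j$ with $0 < j < |w|$. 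Reading off the prefix $p$ and suffix $s$ of this $w^2$-block on either side of that occurrence yields non-empty $p,s$ with $w^2 = p\,w\,s$, contradicting primitivity.

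For the ``if'' direction I argue by contraposition: if $w$ is imprimitive, write $w = z^k$ with $k \ge 2$ and $z \in \Sigma^+$ (note $z \neq \emptyword$ since $w \in \Sigma^+$). Taking $m = 2$, $u := z$ and $v := z^{k-1}$ gives $u, v \in \Sigma^+$ and $w^2 = z^{2k} = u \cdot w \cdot v$, yet $u = z$ is not a power of $w = z^k$ (that would require $1 = k\,n$ for some $n \in \mathbb{N}$), so the right-hand condition of the lemma fails; equivalently, the right-hand condition implies primitivity. Combining the two directions gives the stated equivalence. The only genuinely delicate point is the position bookkeeping in the ``only if'' direction --- checking that a misaligned occurrence of $w$ in $w^m$ really does fit inside a single length-$2|w|$ window of the power, which is exactly where non-emptiness of both $u$ and $v$ (via the bound $n+2 \le m$) is used; after that the reduction to the square is immediate, and everything else is routine length arithmetic.
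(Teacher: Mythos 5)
Your proof is correct, and the harder (``only if'') direction takes a genuinely different route from the paper's. The paper first normalizes $u$ to a strict non-empty prefix of $w$ by stripping off leading copies of $w$, writes $w = u\cdot u'$, and manipulates the word equation $u\cdot w\cdot v = (uu')^m$ until it yields $uu' = u'u$; it then invokes the commutation lemma (Proposition~1.3.2 of~\cite{lothaire1997combinatorics}) to conclude $w$ is imprimitive. You instead do index arithmetic: a misaligned occurrence of $w$ at offset $j$ with $0<j<|w|$ is shown (using $v\neq\emptyword$ to get $n+2\le m$) to sit entirely inside a single window of length $2|w|$, producing a proper interior occurrence of $w$ in $w^2$, which contradicts the square fact the paper cites from~\cite{handbookOfFL} immediately before the lemma. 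Your reduction makes literal the paper's remark that the lemma is ``a slight generalization'' of the square fact, and it is arguably cleaner, at the cost of treating that fact as a black box; the paper's argument is self-contained modulo the commutation lemma, which it reuses elsewhere anyway. Your bookkeeping checks out ($(n+1)|w|+j\le m|w|-1$ with $j\ge 1$ forces $n+2\le m$, and non-emptiness of $u,v$ gives $1\le n$ and $1\le n'=m-n-1$, hence $n,n'<m$), and the ``if'' direction via the witness $w^2=z\cdot w\cdot z^{k-1}$ is essentially identical to the paper's.
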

\begin{proof}
For this proof, we only show that $w \in \Sigma^+$ is primitive if and only if for any $m \in \mathbb{N}$, we have that $w^m = u \cdot w \cdot v$ for $u,v \in \Sigma^+$ implies that $u = w^n$ for some $n < m$. 
The fact that $v = w^{n'}$ for some $n' < n$ also holds, follows immediately from a length argument.
Furthermore, since $u, v \in \Sigma^+$, it follows that $m \geq 2$.

\proofsubparagraph{If direction.}
We show that if $w^m = u \cdot w \cdot v$ implies that $u = w^n$ for some $n < m$, then $w \in \Sigma^+$ is primitive by proving the contraposition. 
That is, $w \in \Sigma^+$ is imprimitive implies that there exists $u,v \in \Sigma^+$ such that $w^m = u \cdot w \cdot v$ and $u \neq w^n$ for some $n < m$.
Let $w \in \Sigma^+$ and let $z \in \Sigma^+$ where $w = z^k$ for some $k > 1$.
Thus, $w^m = z^{km}$, and therefore we can write $w^m = z \cdot w \cdot z^{km-k-1}$.
Since $z \neq w$, we have proven the contraposition.

\proofsubparagraph{Only if direction.}
For this direction, we show that if $w \in \Sigma^*$ is primitive, then for all $m \in \mathbb{N}$, we have that $w^m = u \cdot w \cdot v$ for any $u,v \in \Sigma^+$ implies that $u = w^n$ for some $n < m$. 
Working towards a contradiction, let $m \in \mathbb{N}$ and assume that $w \in \Sigma^*$ is primitive, but $w^m = u \cdot w \cdot v$ where $u \neq w^n$ for any $n \in \mathbb{N}$.
Trivially, $u \neq \emptyword$ must hold since $\emptyword = w^0$.
Furthermore, since $u, v \in \Sigma^+$, it must hold that $m \geq 2$ for $w^m = u \cdot w \cdot v$ to hold.

First, we show that we can assume $u$ is a strict prefix of $w$. Note that $|u| \neq w^k$ for any $k \in \mathbb{N}$.
If $|u| > |w|$, then because $w^m = u \cdot w \cdot v$, it follows that $u = w^s \cdot \bar u$ for some $s \in \mathbb{N}_+$ and a proper prefix $\bar u \in \Sigma^+$ of $w$.
Thus, we can consider the new equality
\[ w^{m-s} = \bar u \cdot w \cdot v. \ \] 
For the rest of the proof, we simply use $u$ to denote a strict prefix of $w$ such that the equality $w^m = u \cdot w \cdot v$ holds.

Since $u$ is a strict prefix of $w$, we can write $w = u \cdot u'$ for some $u' \in \Sigma^+$. 
Furthermore, since $u \cdot w \cdot v = w^m$ and $w^m = (u \cdot u')^m$ , we have that 
\[ u \cdot w \cdot v = (u \cdot u')^m, \]
which implies
\begin{align*}
w \cdot v &= u' \cdot (u \cdot u')^{m-1}, \\
w \cdot v &= u' \cdot u \cdot u' \cdot (u \cdot u')^{m-2}.
\end{align*}
We know that $|w| = |u| + |u'|$, and therefore $w = u \cdot u'$ and $w = u' \cdot u$.
Proposition 1.3.2 in Lothaire~\cite{lothaire1997combinatorics} states that if $vu = uv$ for $u,v \in \Sigma^*$, then there is some $z \in \Sigma^*$ and $k_1,k_2 \in \mathbb{N}$ such that $u = z^{k_1}$ and $v = z^{k_2}$.
It follows that $w = z^k$ for some $k \in \mathbb{N}$, and since $\emptyword \sqsubset u \sqsubset w$ and $\emptyword \sqsubset u' \sqsubset w $, we know that $k > 1$.
Therefore $w$ is imprimitive, which is a contradiction.
\end{proof}

\subsection{Proof of Lemma~\ref{obs:factorOfRep}}
\begin{proof}
The fact that every $u \sqsubseteq w^m$ where $\expo_w(u) > 0$ can be factorized as $u = u_1 \cdot w^{\expo_w(u)} \cdot u_2$ where $u_1$ is a proper suffix of $w$ and $u_2$ is a proper prefix of $w$ is obvious.
Therefore, this proof focuses on showing that $u_1$ and $u_2$ are unique.
Working towards a contradiction assume that
\begin{align*}
u & = u_1 \cdot w^r \cdot u_2, \text{ and} \\
u & = u_3 \cdot w^r \cdot u_4,
\end{align*}
where $r = \expo_w(u)$, and where $u_1 \neq u_3$ and $u_2 \neq u_4$.
Without loss of generality, assume that $|u_1| < |u_3|$.
Since $u_1 \cdot w^r \cdot u_2 = u_3 \cdot w^r \cdot u_4$, it follows that there exists $u_1' \in \Sigma^+$ such that $u_3 =  u_1 \cdot u_1'$.
Thus,
\[ u_1 \cdot w^r \cdot u_2 = u_1 \cdot u_1' \cdot w^r \cdot u_4. \]
Which implies that
\[ w^r \cdot u_2 = u_1' \cdot w^r \cdot u_4. \] 
Since $u_2$ is a proper prefix of $w$, there exists $\bar u_2 \in \Sigma^*$ such that $u_2 \cdot \bar u_2 = w$.
Thus, 
\begin{align*}
 w^r \cdot u_2 \cdot \bar u_2 &= u_1' \cdot w^r \cdot u_4 \cdot \bar u_2, \\
 w^r \cdot w & = u_1' \cdot w^r \cdot u_4 \cdot \bar u_2, \\
 w^{r+1} & = u_1' \cdot w^r \cdot u_4 \cdot \bar u_2.
\end{align*}
Since, we know that $r > 0$, we can invoke~\cref{obs:primitive} to conclude that $u_1' = w^k$ for some $k \in \mathbb{N}$.
Therefore, $u_3  = u_1 \cdot w^k$, which is a contradiction, since either $k=0$ and then $u_3 = u_1$, or $k > 0$, and $\expo_w(u) > r$ which cannot hold since $r$ is defined to be $\expo_w(u)$.

Consequently, our assumption that $u_1 \neq u_3$ cannot hold.
Furthermore, using a simple length argument, we can conclude that $u_2 = u_4$ must also hold.
\end{proof}

\begin{lemma}\label{lemma:primitive2}
Let $w \in \Sigma^+$ be a primitive word and let $u, v \sqsubseteq w^m$ for some $m \in \mathbb{N}$, where $\expo_w(u) > 0$ and $\expo_w(v)> 0$.
If $u \cdot v \sqsubseteq w^n$ for some $n \in \mathbb{N}$, then there exists $u_1, u_2, v_1, v_2 \sqsubset w$ such that $u = u_1 \cdot w^s \cdot u_2$ and $v = v_1 \cdot w^t \cdot v_2$ where $s,t \in \mathbb{N}_+$ and either $u_2 \cdot v_1 = \emptyword$ or $u_2 \cdot v_1 = w$.
\end{lemma}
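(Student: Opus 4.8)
The plan is to put $u$ and $v$ into the canonical shape provided by \cref{obs:factorOfRep}, then isolate the single ``junction'' word $u_2 \cdot v_1$ sitting between the two $w$-power blocks of $u$ and of $v$, and show via \cref{obs:primitive} that this junction word can only be $\emptyword$ or $w$.

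First I would apply \cref{obs:factorOfRep} to $u$: since $\expo_w(u) > 0$, there are a unique proper suffix $u_1 \sqsubset w$ and a unique proper prefix $u_2 \sqsubset w$ with $u = u_1 \cdot w^{s} \cdot u_2$, where $s \df \expo_w(u) \in \mathbb{N}_+$. Doing the same for $v$ gives $v = v_1 \cdot w^{t} \cdot v_2$ with $v_1 \sqsubset w$ a proper suffix, $v_2 \sqsubset w$ a proper prefix, and $t \df \expo_w(v) \in \mathbb{N}_+$. This already supplies all of $u_1, u_2, v_1, v_2 \sqsubset w$ and $s, t \in \mathbb{N}_+$ demanded by the statement, so the only remaining task is to prove $u_2 \cdot v_1 \in \{\emptyword, w\}$.

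Set $x \df u_2 \cdot v_1$. Because $s \geq 1$ and $t \geq 1$, the word $w \cdot x \cdot w$ is a factor of $w^{s} \cdot u_2 v_1 \cdot w^{t}$, hence a factor of $u \cdot v$, hence a factor of $w^n$. Write $w^n = \gamma_1 \cdot (w x w) \cdot \gamma_2$ and pad with one copy of $w$ on each side to obtain
\[ w^{n+2} \;=\; (w\gamma_1)\cdot w \cdot x \cdot w \cdot(\gamma_2 w), \]
where $\alpha \df w\gamma_1$ and $\beta \df \gamma_2 w$ are now both non-empty. Reading this equation as $w^{n+2} = \alpha \cdot w \cdot (x w \beta)$ with $\alpha, x w \beta \in \Sigma^+$, \cref{obs:primitive} forces $\alpha = w^{a}$ for some $a$; reading it as $w^{n+2} = (\alpha w x)\cdot w \cdot \beta$ with $\alpha w x, \beta \in \Sigma^+$, \cref{obs:primitive} forces $\beta = w^{c}$ for some $c$. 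Substituting back gives $w^{n+2} = w^{a+1}\cdot x \cdot w^{c+1}$, so $x \cdot w^{c+1} = w^{\,n+1-a}$; thus $|x| = |w|\cdot(n-a-c)$ is a non-negative multiple of $|w|$ and $x$ is a prefix of $w^{\,n+1-a}$, whence $x = w^{\,n-a-c}$. Finally, since $u_2$ and $v_1$ are proper prefix/suffix of $w$ we have $|x| \leq 2(|w|-1) < 2|w|$, so $n-a-c \in \{0,1\}$ and therefore $u_2 \cdot v_1 = x \in \{\emptyword, w\}$, as required.

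The only genuinely delicate point is the double application of \cref{obs:primitive}: that lemma needs both factors flanking the middle occurrence of $w$ to be non-empty, which is precisely why the padding step from $w^n$ to $w^{n+2}$ is needed; after that, combining the extracted powers $w^a$, $w^c$ with the length bound $|x| < 2|w|$ leaves no possibility other than $x = \emptyword$ or $x = w$. Everything else is routine bookkeeping with the canonical decompositions from \cref{obs:factorOfRep}.
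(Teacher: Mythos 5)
Your proof is correct and follows essentially the same route as the paper's: decompose $u$ and $v$ via \cref{obs:factorOfRep}, pad so that both flanking factors are non-empty, apply \cref{obs:primitive} twice to conclude that $u_2 \cdot v_1$ is a power of $w$, and finish with the length bound $|u_2 \cdot v_1| < 2|w|$. The only (harmless) difference is that you argue directly on the factor $w \cdot u_2 v_1 \cdot w$ rather than proving the contrapositive and applying \cref{obs:factorOfRep} a third time to $u \cdot v$, which slightly streamlines the bookkeeping.
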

\begin{proof}
Let $w \in \Sigma^+$ be a primitive word and let $u, v \sqsubseteq w^m$ for some $m \in \mathbb{N}$.
Due to the fact that $u, v \sqsubseteq w^m$ where $\expo_w(u) > 0$ and $\expo_w(v)> 0$ it follows from~\cref{obs:factorOfRep} that we can write $u = u_1 \cdot w^s \cdot u_2$ and $u  = v_1 \cdot w^t \cdot v_2$ where $u_1, u_2, v_1, v_2 \sqsubset w$ and where $s = \expo_w(u)$ and $t = \expo_w(v)$.

To prove this lemma, we prove the contrapositive.
That is, we shall prove that if $u_2 \cdot v_1 \neq \emptyword$ and $u_2 \cdot v_1 \neq w$, then $u \cdot v \not\sqsubseteq w^n$ for any $n \in \mathbb{N}$.

To prove the contrapositive of the stated lemma, we work towards a contradiction.
Assume that $u_2 \cdot v_1 \neq \emptyword$ and $u_2 \cdot v_1 \neq w$ and $u \cdot v \sqsubseteq w^n$ for some $n \in \mathbb{N}$.
From what we have defined, we can write
\[ u \cdot v = u_1 \cdot w^s \cdot u_2 \cdot v_1 \cdot w^t \cdot v_2. \]
Since $u \cdot v \sqsubseteq w^n$, there exists $w_1, w_2 \sqsubset w$ such that $u \cdot v = w_1 \cdot w^r \cdot w_2$ for some $r \in \mathbb{N}$, see~\cref{obs:factorOfRep}.
Hence,
\[  u_1 \cdot w^s \cdot u_2 \cdot v_1 \cdot w^t \cdot v_2 = w_1 \cdot w^r \cdot w_2 . \]
Furthermore, it follows that $w_1$ is a proper suffix of $w$ and $w_2$ is a proper prefix of $w$, see~\cref{obs:factorOfRep}.
Thus, there exists $\bar w_1, \bar w_2 \in \Sigma^+$ such that $\bar w_1 \cdot w_1 = w$ and $w_2 \cdot \bar w_2 = w$.
Therefore, 
\begin{align*}
\bar w_1 \cdot u_1 \cdot w^s \cdot u_2 \cdot v_1 \cdot w^t \cdot v_2 \cdot \bar w_2 &= \bar w_1 \cdot w_1 \cdot w^r \cdot w_2 \cdot \bar w_2, \\
&= w \cdot w^r \cdot w , \\
&= w^{r+2}.
\end{align*}
Invoking~\cref{obs:primitive}, we know that 
\[ \bar w_1 \cdot u_1 \cdot w^s \cdot u_2 \cdot v_1 = w^p, \]
for some $p \in \mathbb{N}$.
Furthermore, since $s = \expo_w(u)$, we know that $p > 0$.
Thus, we can again invoke~\cref{obs:primitive}, which implies that $u_2 \cdot v_1 = w^q$ for some $q \in \mathbb{N}$.
Since $u_2, v_1 \sqsubset w$, we have that $|u_2| < |w|$ and $|v_1| < |w|$ and therefore, $|u_2 \cdot v_1| < 2|w|$.
Thus, either $u_2 \cdot v_1 = w^1$ or $u_2 \cdot v_1 = w^0$ which is a contradiction.
\end{proof}

\subsection{Proof of Lemma~\ref{lemma:primitivePower}}
Before giving the actual proof of~\cref{lemma:primitivePower}, we first give a proof for the following lemma:
\begin{lemma}\label{lemma:expoIncrease}
If $u \cdot v \sqsubseteq w^m$ for some primitive word $w \in \Sigma^*$, some $m \in \mathbb{N}$, and some $u,v \in \Sigma^*$, then either 
\begin{enumerate}
\item $\expo_w(u \cdot v) = \expo_w(u) + \expo_w(v)$, or
\item $\expo_w(u \cdot v) = \expo_w(u) + \expo_w(v) + 1$.
\end{enumerate}
\end{lemma}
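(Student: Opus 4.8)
The plan is to reduce the lemma to one elementary fact about powers of a primitive word. First I would record the following consequence of \cref{obs:primitive}: if $w \in \Sigma^{+}$ is primitive and $x \sqsubseteq w^{n}$, fix one occurrence of $x$ inside $w^{n}$, say starting at position $i$; then $w$ occurs in $x$ precisely at the positions $j$ with $0 \le j \le |x|-|w|$ and $j \equiv -i \pmod{|w|}$. Indeed an occurrence of $w$ in $w^{n}$ at a position that is not a multiple of $|w|$ would exhibit $ww$ as $\alpha\,w\,\beta$ with $\alpha,\beta \in \Sigma^{+}$, contradicting \cref{obs:primitive}, while every block-aligned position of $w^{n}$ does read $w$; shifting by $i$ gives the congruence. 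Hence the set $S_{x}$ of positions of $w$ inside $x$ is an arithmetic progression of common difference $|w|$, and therefore $\expo_{w}(x) = |S_{x}|$: a run $w^{k} \sqsubseteq x$ corresponds to $k$ consecutive terms of $S_{x}$, and since $S_{x}$ is such a progression this is possible for $k = |S_{x}|$ and for no larger $k$ (and for the empty word, or any $x$ with $S_{x} = \emptyset$, both sides are $0$).

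Next, given $u, v \in \Sigma^{*}$ with $uv \sqsubseteq w^{m}$, I would fix one occurrence of $uv$ inside $w^{m}$ starting at position $i$ and put $L \df |w|$ and $\rho \df (-i) \bmod L$; using the same occurrence, $u$ sits at position $i$ and $v$ at position $i + |u|$, so by the first paragraph $\expo_{w}(u)$, $\expo_{w}(v)$, and $\expo_{w}(uv)$ equal the number of integers $j \equiv \rho \pmod{L}$ in the ranges $0 \le j \le |u|-L$, in $|u| \le j \le |uv|-L$, and in $0 \le j \le |uv|-L$, respectively (a range with negative upper bound being empty, which makes the corresponding value $0$). The set $\{0,\dots,|uv|-L\}$ is the disjoint union of $\{0,\dots,|u|-L\}$, the border block $Z \df \{|u|-L+1,\dots,|u|-1\} \cap \mathbb{N}$, and $\{|u|,\dots,|uv|-L\}$ (with the evident degenerations when $|u| < L$ or $|uv| < L$), so $\expo_{w}(uv) = \expo_{w}(u) + \expo_{w}(v) + c$, where $c$ counts the integers $\equiv \rho \pmod{L}$ in $Z$. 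As $Z$ consists of at most $L-1$ consecutive integers, $c \in \{0,1\}$, which is exactly the dichotomy claimed.

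The substantive step is the block-alignment observation; everything after it is bookkeeping, and crucially this bookkeeping dispatches all the awkward small-word cases uniformly (one or both of $u, v$ shorter than $w$ or empty, so that some $\expo_{w}$-value is $0$), which is why I would lead with it rather than with a case analysis. For readers wanting a more ``structural'' argument one can, when $\expo_{w}(u), \expo_{w}(v) \ge 1$, use \cref{obs:factorOfRep} to write $u = u_{1} w^{\expo_{w}(u)} u_{2}$ and $v = v_{1} w^{\expo_{w}(v)} v_{2}$ with $u_{1}, v_{1}$ proper suffixes and $u_{2}, v_{2}$ proper prefixes of $w$, then \cref{lemma:primitive2} to force $u_{2} v_{1} \in \{\emptyword, w\}$, whence $uv = u_{1} w^{\expo_{w}(u) + \expo_{w}(v)} v_{2}$ or $uv = u_{1} w^{\expo_{w}(u) + \expo_{w}(v) + 1} v_{2}$; but pinning $\expo_{w}(uv)$ down exactly still needs the block-alignment fact (equivalently, pad $u_{1}$ and $v_{2}$ up to full copies of $w$ and apply \cref{obs:primitive}), and the cases with $\expo_{w}(u) = 0$ or $\expo_{w}(v) = 0$ then need the extra remark that a factor $ww$ of $uv$ touching both $u$ and $v$ already lies inside $u$ or inside $v$ after deleting a prefix or suffix shorter than $w$. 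I do not anticipate a real obstacle; the only thing to be careful about is handling those boundary cases correctly, which the counting proof simply sidesteps.
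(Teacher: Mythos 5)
Your proof is correct, but it takes a genuinely different route from the paper. The paper proves \cref{lemma:expoIncrease} by a four-way case analysis on whether $\expo_w(u)$ and $\expo_w(v)$ are zero or positive: the doubly-positive case is dispatched via \cref{lemma:primitive2} (the dichotomy $u_2\cdot v_1\in\{\emptyword,w\}$), and the remaining cases are handled by separate contradiction and length arguments built on \cref{obs:factorOfRep}. You instead extract a single synchronization fact from \cref{obs:primitive} -- occurrences of a primitive $w$ inside $w^n$ are exactly the block-aligned positions -- so that for any factor $x\sqsubseteq w^n$ the occurrence positions of $w$ in $x$ form an arithmetic progression of difference $|w|$ and $\expo_w(x)$ is their count; the lemma then reduces to splitting the position interval for $uv$ into the $u$-part, the $v$-part, and a boundary block of at most $|w|-1$ consecutive integers, which contains at most one admissible residue, giving the $+0/+1$ dichotomy. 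The counting argument is more uniform: it handles all the degenerate situations (one of $u,v$ empty or shorter than $w$, $\expo$-values equal to $0$) without separate cases, and it also pins down $\expo_w(uv)$ exactly rather than only up to the contribution of the bordering proper prefix/suffix, a point the paper's Case 1 passes over quickly (it is covered there by the uniqueness in \cref{obs:factorOfRep}). What the paper's route buys in exchange is that it reuses exactly the machinery it has already set up (\cref{obs:primitive}, \cref{obs:factorOfRep}, \cref{lemma:primitive2}) and stays in the same factorization language used in the proof of the Primitive Power Lemma, whereas your argument introduces an (easy but new) positional lemma; either version would serve the downstream proofs equally well.
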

\begin{proof}
We split this proof into four cases.

\proofsubparagraph{Case 1, $\expo_w(u) \geq 1$ and $\expo_w(v) \geq 1$:}
This case follows immediately follows from~\cref{lemma:primitive2}.
That is, we know $u = u_1 \cdot w^{\expo_w(u)} \cdot u_2$ and $v = v_1 \cdot w^{\expo_w(v)} \cdot v_2$ where $u_1, v_2 \sqsubset w$ and $u_2 \cdot v_1 \in \{ \emptyword, w \}$.
If $u_2 \cdot v_1 = \emptyword$, then 
\[ u \cdot v = u_1 \cdot w^{\expo_w(u)} \cdot w^{\expo_w(v)} \cdot v_2, \]
and hence $\expo_w(u \cdot v) = \expo_w(u)+ \expo_w(v)$.
If $u_2 \cdot v_1 = w$, then 
\[ u \cdot v = u_1 \cdot w^{\expo_w(u)} \cdot w \cdot w^{\expo_w(v)} \cdot v_2, \]
and hence $\expo_w(u \cdot v) = \expo_w(u)+ \expo_w(v) + 1$.

\proofsubparagraph{Case 2, $\expo_w(u) = 0$ and $\expo_w(v) = 0$:}
Working towards a contradiction, assume that $\expo_w(u\cdot v) = r$ where $r \geq 2$.
From~\cref{obs:factorOfRep}, we know that $u \cdot v = w_1 \cdot w^r \cdot w_2$ for some $r \geq 2$ where $w_1,w_2 \sqsubset w$.
Now, we have two cases: First case is if $|u| < |w_1| + |w|$. However, it then follows that $|v| > |w_2| + |w|$ and therefore $w \sqsubseteq v$.
This is a contradiction, since then $\expo_w(v) \geq 1$.
The second case is if $|u| \geq |w_1| + |w|$. However, then $w \sqsubseteq u$.
This is a contradiction, since then $\expo_w(u) \geq 1$.

\proofsubparagraph{Case 3, $\expo_w(u) = 0$ and $\expo_w(v) \geq 1$:}
Working towards a contradiction, assume that 
\begin{enumerate}
\item $\expo_w(u \cdot v) < \expo_w(u) + \expo_w(v)$ or
\item $\expo_w(u \cdot v) > \expo_w(u) + \expo_w(v) + 1$.
\end{enumerate}

Since $\expo_w(u) = 0$, it follows that $\expo_w(u) + \expo_w(v) = \expo_w(v)$.
Therefore, since $w^r \sqsubseteq u \cdot v$ where $r = \expo_w(v)$, it follows that $\expo_w(u \cdot v) < \expo_w(u) + \expo_w(v)$ cannot hold.
Thus, $\expo_w(u \cdot v) > \expo_w(u) + \expo_w(v) + 1$ must hold.

Due to the fact that $\expo_w(v) \geq 1$, we can write $v = v_1 \cdot w^r \cdot v_2$, where $v_1$ is a proper suffix of $w$ and $v_2$ is a proper prefix of $w$, and where $r = \expo_w(v)$.
Furthermore, since $u \cdot v \sqsubseteq w^m$ and $\expo_w(u \cdot v) \geq 1$, we know from~\cref{obs:factorOfRep} that $u \cdot v = w_1 \cdot w^{\expo_w(u \cdot v)} \cdot w_2$ where $w_1$ is a proper suffix of $w$ and $w_2$ is a proper prefix of $w$.
Therefore, 
\begin{align*}
u \cdot v  & = w_1 \cdot w^{\expo_w(u \cdot v)} \cdot w_2, \\
		    & = u \cdot v_1 \cdot w^r \cdot v_2,
\end{align*}
and since $w_2$ and $v_2$ are proper prefixes, it follows that $w_2 = v_2$, see~\cref{obs:factorOfRep}.
We can thus write $u \cdot v_1 \cdot w^r = w_1 \cdot w^{\expo_w(u \cdot v)}$ which implies $\expo_w(u \cdot v) = \expo_w(u \cdot v_1) + r$.
Since $u \sqsubset w$ and $v_1 \sqsubset w$, it follows that $|u \cdot v_1| < 2 \cdot |w|$.
Therefore, $\expo_w(u \cdot v_1) \leq 1$.
Thus, $\expo_w(u \cdot v) > \expo_w(u) + \expo_w(v) + 1$ cannot hold.

\proofsubparagraph{Case 4, $\expo_w(u) \geq 1$ and $\expo_w(v) = 0$:}
This follows analogously to Case 3.
Working towards a contradiction, assume that
\begin{enumerate}
\item $\expo_w(u \cdot v) < \expo_w(u) + \expo_w(v)$ or
\item $\expo_w(u \cdot v) > \expo_w(u) + \expo_w(v) + 1$.
\end{enumerate}
Since $\expo_w(v) = 0$, it follows that $\expo_w(u) + \expo_w(v) = \expo_w(u)$. 
Therefore, $\expo_w(u \cdot v) < \expo_w(u) + \expo_w(v)$ cannot hold.
Thus, $\expo_w(u \cdot v) > \expo_w(u) + \expo_w(v) + 1$ must hold.

Since $\expo_w(u) \geq 1$, we can write $u = u_1 \cdot w^r \cdot u_2$, where $u_1$ is a proper suffix of $w$, and $u_2$ is a proper prefix of $w$, and where $r = \expo_w(u)$.
Therefore, $u \cdot v = u_1 \cdot w^r \cdot u_2 \cdot v$.
We can thus write $\expo_w(u \cdot v) = r + \expo_w(u_2 \cdot v)$.
Since $|u_2| < |w|$ and $|v|< |w|$, we know that $|u_2 \cdot v| < 2 \cdot |w|$.
Therefore, $\expo_w(u_2 \cdot v) \leq 1$.
Consequently, $\expo_w(u \cdot v) > \expo_w(u) + \expo_w(v) + 1$ cannot hold.
\end{proof}

\subsubsection{Actual Proof of Lemma~\ref{lemma:primitivePower}}
\begin{proof}
Let $\mathtt{a} \in \Sigma$ and let $w \in \Sigma^+$ be a primitive word.
Let $p,q,k \in \mathbb{N}_+$ such that $\mathtt{a}^p \equiv_{k+3} \mathtt{a}^q$.
Now, let $\mathfrak{A}_q$ and $\mathfrak{B}_p$ be two $\signature_\Sigma$ structures that represent $w^q \in \Sigma^+$ and $w^p \in \Sigma^+$ respectively, where $w \in \Sigma^+$ is a primitive word. 
Let $A \df \facts(w^q) \union \{ \perp \}$ and $B \df \facts(w^p) \union \{ \perp \}$ be the universes for $\mathfrak{A}_q$ and $\mathfrak{B}_p$ respectively.
Let $\mathcal{G}$ be a $k$-round game over $\mathfrak{A}_q$ and $\mathfrak{B}_p$. 
We assume that Spoiler never chooses $\perp$ in $\mathcal{G}$ since Duplicator simply responds with $\perp$.
 
\proofsubparagraph{The look-up game.}
Let $\mathfrak{C}_q$ and $\mathfrak{D}_p$ be $\signature_{\{ \mathtt{a} \}}$ structures that represent $\mathtt{a}^q$ and $\mathtt{a}^p$ respectively.
Let $\mathcal{G}_l$ be a $k+3$-round look-up game over $\mathfrak{C}_q$ and $\mathfrak{D}_p$.
For each choice Spoiler makes in $\mathcal{G}$, we add a choice for Spoiler in $\mathcal{G}_l$ as follows:
\begin{itemize}
\item If Spoiler chooses $\mathfrak{A}_q$ and $u \in A$, then choose $\mathfrak{C}_q$ and $\mathtt{a}^{\expo_w (u)}$, for Spoiler in $\mathcal{G}_l$, and
\item if Spoiler chooses $\mathfrak{B}_p$ and $v \in B$, then choose $\mathfrak{D}_p$ and $\mathtt{a}^{\expo_w (v)}$, for Spoiler in $\mathcal{G}_l$.
\end{itemize}
After adding the choice for Spoiler in the look-up game $\mathcal{G}_l$, Duplicator then responds with their winning strategy for the $k+3$-round game.
For rounds $k+1$ to $k+3$, we assume that Spoiler chooses an arbitrary structure and factor, and we know that Duplicator can respond using their winning strategy since $\mathfrak{C}_q \equiv_{k+3} \mathfrak{D}_p$.

Before moving on to a winning strategy for Duplicator in $\mathcal{G}$, we prove two small claims about the tuples that result from the look-up game $\mathcal{G}_l$.
\begin{claim}\label{claim:onemore}
Let $\vec a' = (a_1' , a_2', \dots, a_{k+3}', \mathtt{a}, \emptyword)$ where $a_i' \sqsubseteq \mathtt{a}^q$ for all $i \in [k+3]$ and $\vec b' = (b_1', b_2', \dots, b_{k+3}', \mathtt{a}, \emptyword)$ where $b_i' \sqsubseteq \mathtt{a}^p$ for all $i \in [k+3]$ be the resulting tuples of the look-up game $\mathcal{G}_l$.
Then, for any $l,i,j \in [k]$, we have that $a_l' = a_i' \cdot a_j' \cdot \mathtt{a}$ if and only if $b_l' = b_i' \cdot b_j' \cdot \mathtt{a}$.
\end{claim}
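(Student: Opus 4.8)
The plan is to read the claim off the partial isomorphism produced by Duplicator's winning strategy, using one of the three spare rounds $k+1,\dots,k+3$ of $\mathcal{G}_l$ to supply a missing intermediate element. Since $\mathcal{G}_l$ is played over the $\signature_{\{\mathtt{a}\}}$-structures representing $\mathtt{a}^q$ and $\mathtt{a}^p$, every component of $\vec a'$ has the form $\mathtt{a}^{\alpha_s}$ with $\alpha_s\le q$ and every component of $\vec b'$ has the form $\mathtt{a}^{\beta_s}$ with $\beta_s\le p$; in this notation $a_l' = a_i'\cdot a_j'\cdot\mathtt{a}$ just says $\alpha_l = \alpha_i+\alpha_j+1$, and analogously on the $\vec b'$-side. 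Because $\mathfrak{C}_q \equiv_{k+3} \mathfrak{D}_p$, the tuple obtained after the full $(k+3)$-round play (together with the interpreted constants) is a partial isomorphism no matter how Spoiler fills in rounds $k+1,\dots,k+3$; the point is that partial isomorphism only certifies \emph{binary} concatenations, so to handle the ternary identity $a_l' = a_i'\cdot a_j'\cdot\mathtt{a}$ we first let Spoiler name the element $a_i'\cdot a_j'$.

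Concretely, assume $a_l' = a_i'\cdot a_j'\cdot\mathtt{a}$, i.e.\ $\alpha_l = \alpha_i+\alpha_j+1$. Then $\mathtt{a}^{\alpha_i+\alpha_j}$ is a proper factor of $a_l'$, hence a factor of $\mathtt{a}^q$, so in round $k+1$ Spoiler may play $\mathfrak{C}_q$ together with $\mathtt{a}^{\alpha_i+\alpha_j}$; let Duplicator's winning response be $\mathtt{a}^{\gamma}\sqsubseteq\mathtt{a}^p$. Adding this pair to the tuple keeps it a partial isomorphism, so from $\mathtt{a}^{\alpha_i+\alpha_j} = a_i'\cdot a_j'$ we obtain $\mathtt{a}^{\gamma} = b_i'\cdot b_j'$, i.e.\ $\gamma=\beta_i+\beta_j$, and from $a_l' = \mathtt{a}^{\alpha_i+\alpha_j}\cdot\mathtt{a}$ (a binary concatenation with the constant $\mathtt{a}=\mathtt{a}^1$, which is legitimate since $q\ge 1$) we obtain $b_l' = \mathtt{a}^{\gamma}\cdot\mathtt{a} = b_i'\cdot b_j'\cdot\mathtt{a}$. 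This gives the ``only if'' direction; the ``if'' direction follows by the symmetric argument, with the roles of $\mathfrak{C}_q$ and $\mathfrak{D}_p$ (equivalently, of $\vec a'$ and $\vec b'$) exchanged.

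The remaining bookkeeping is routine: the probing element is always a legal choice because $\alpha_i+\alpha_j = \alpha_l-1 < \alpha_l \le q$, and the degenerate cases $\alpha_i=0$ or $\alpha_j=0$ cause no trouble, since then the probing element equals $\emptyword$ or an element already present in the tuple, and the partial-isomorphism conditions simply force Duplicator to repeat the corresponding earlier response. I expect the only genuinely conceptual step to be recognizing that the ``$+1$'' contributed by the constant $\mathtt{a}$ is exactly what makes the identity ternary rather than binary, and hence why a spare round is needed at all; once that is seen, what is left is a short computation with exponents of $\mathtt{a}$.
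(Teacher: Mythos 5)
Your proposal is correct and follows essentially the same route as the paper: both arguments have Spoiler spend round $k+1$ of $\mathcal{G}_l$ on the intermediate element $a_i'\cdot a_j'$, then read off first $b_{k+1}'=b_i'\cdot b_j'$ and then $b_l'=b_{k+1}'\cdot\mathtt{a}$ from the partial isomorphism (the paper phrases this as a contradiction, you phrase it directly, but the content is identical). Your additional remarks on the legality of the probing element and the role of the constant $\mathtt{a}$ are fine and consistent with what the paper leaves implicit.
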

\begin{claimproof}
To prove this claim we only prove the only if direction since the if direction follows immediately from symmetric reasoning.
To that end, working towards a contradiction let $i,j,l \leq k$, and assume that $a_l' = a_i' \cdot a_j' \cdot \mathtt{a}$ and $b_l' \neq b_i' \cdot b_j' \cdot \mathtt{a}$.

We know that the result $(\vec a', \vec b')$ of the $k+3$ round look-up game $\mathcal{G}_l$ forms a partial isomorphism.
We know show that if $a_l' = a_i' \cdot a_j' \cdot \mathtt{a}$ and $b_l' \neq b_i' \cdot b_j'  \cdot \mathtt{a}$, then Spoiler can win the $k+3$-round game $\mathcal{G}_l$ by choosing $a_{k+1}' \df a_i' \cdot a_j'$ in round $k+1$.
Since Duplicator plays with their winning strategy, we know that $b_{k+1}' = b_i' \cdot b_j'$.

It follows that that $a_l' = a_{k+1}' \cdot \mathtt{a}$, yet we have assumed $b_l' \neq b_{k+1}' \cdot \mathtt{a}$.
Consequently, Duplicator has lost which is a contradiction, since Duplicator plays with a known winning strategy.
\end{claimproof}

\begin{claim}\label{claim:almostFull}
Let $\vec a' = (a_1' , a_2', \dots, a_{k+3}', \mathtt{a}, \emptyword)$ where $a_i' \sqsubseteq \mathtt{a}^q$ for all $i \in [k+3]$ and $\vec b' = (b_1', b_2', \dots, b_{k+3}', \mathtt{a}, \emptyword)$ where $b_i' \sqsubseteq \mathtt{a}^p$ for all $i \in [k+3]$ be the resulting tuples of the look-up game $\mathcal{G}_l$.
Then, for any $r \leq k$, we have that $\mathtt{a}^q = a_r \cdot x$ if and only if $\mathtt{a}^p = b_r \cdot x$, where $x \in \{ \emptyword, \mathtt{a}, \mathtt{aa} \}$.
\end{claim}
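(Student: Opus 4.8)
The plan is to treat the three possible values of $x$ separately and, in each case, argue by contradiction, exploiting the three spare rounds $k+1,k+2,k+3$ of the look-up game $\mathcal{G}_l$, in which (by assumption) Duplicator still follows a winning strategy. The single underlying fact is this: in $\mathfrak{C}_q$ the element $\mathtt{a}^q$ is the unique one with no $\mathtt{a}$-successor (no factor of the form $(\cdot)\cdot\mathtt{a}$), while every proper factor $\mathtt{a}^j$ with $j<q$ does have one, namely $\mathtt{a}^{j+1}$; likewise for $\mathfrak{D}_p$ and $\mathtt{a}^p$. Since $\mathtt{a}$ occurs as a constant symbol in the tuples, whenever Spoiler plays $u\cdot\mathtt{a}$ on one side, the partial-isomorphism conditions force Duplicator's answer to be $(\text{response})\cdot\mathtt{a}$, and if that element is missing Duplicator loses, contradicting the winning strategy. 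This is exactly the kind of reasoning already used for Claim~\ref{claim:onemore}, just iterated a bounded number of times.

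I would do the case $x=\emptyword$ first: if $a_r=\mathtt{a}^q$ but $b_r\neq\mathtt{a}^p$, then in round $k+1$ Spoiler plays $b'_{k+1}\df b_r\cdot\mathtt{a}$ on the $\mathfrak{D}_p$-side (legal, as $b_r$ is a proper factor), which forces Duplicator to produce $a'_{k+1}=a_r\cdot\mathtt{a}=\mathtt{a}^{q+1}$, which does not exist; symmetrically for the converse. This consumes only round $k+1$. For $x=\mathtt{a}$: assuming $\mathtt{a}^q=a_r\cdot\mathtt{a}$, Spoiler plays $a'_{k+1}\df\mathtt{a}^q$ on the $\mathfrak{C}_q$-side; the constant $\mathtt{a}$ forces $b'_{k+1}=b_r\cdot\mathtt{a}$ (so $b_r$ must be a proper factor of $\mathtt{a}^p$), and if $b'_{k+1}$ were still a proper factor, round $k+2$ with $b'_{k+2}\df b'_{k+1}\cdot\mathtt{a}$ would again demand the non-existent $\mathtt{a}^{q+1}$ from Duplicator; hence $b'_{k+1}=\mathtt{a}^p$, i.e.\ $\mathtt{a}^p=b_r\cdot\mathtt{a}$. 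This uses rounds $k+1,k+2$, and the converse is symmetric.

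The case $x=\mathtt{aa}$ is the one that actually forces the look-up game to be played for $k+3$ rounds, and is the most delicate step of the argument, although still entirely routine. Assuming $\mathtt{a}^q=a_r\cdot\mathtt{aa}$, Spoiler ``climbs'' on the $\mathfrak{C}_q$-side: round $k+1$ plays $a'_{k+1}\df a_r\cdot\mathtt{a}$, round $k+2$ plays $a'_{k+2}\df a'_{k+1}\cdot\mathtt{a}=\mathtt{a}^q$; the constant $\mathtt{a}$ then forces $b'_{k+1}=b_r\cdot\mathtt{a}$ and $b'_{k+2}=b_r\cdot\mathtt{aa}$, so $b_r\cdot\mathtt{aa}\sqsubseteq\mathtt{a}^p$. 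Finally round $k+3$ rules out $b'_{k+2}\neq\mathtt{a}^p$ exactly as in the earlier cases, giving $\mathtt{a}^p=b_r\cdot\mathtt{aa}$; the converse direction is symmetric. (The degenerate situations $p<2$ or $q<2$ do not arise, since $\mathtt{a}^q\equiv_{k+3}\mathtt{a}^p$ with $k\geq 1$ forces $p=q$ there and the claim is then immediate.) Putting the three cases together yields the claim, and also shows precisely why $k+3$ rounds are needed in $\mathcal{G}_l$.
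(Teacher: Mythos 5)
Your proof is correct and takes essentially the same approach as the paper's: both exploit the fact that $\mathtt{a}^q$ (resp.\ $\mathtt{a}^p$) is the unique factor with no $\mathtt{a}$-successor, and both use the three spare rounds of $\mathcal{G}_l$ to force Duplicator into producing the non-existent $\mathtt{a}^{q+1}$. The only difference is organizational: the paper runs a single uniform three-round argument for all $x$ (round $k+1$ plays $x$ itself, round $k+2$ plays $\mathtt{a}^q = a_r\cdot x$ to force $b_r\cdot x$, round $k+3$ appends $\mathtt{a}$ to derive the contradiction), whereas you split on the value of $x$ and climb one $\mathtt{a}$ at a time; both are valid.
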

\begin{claimproof}
Let $\vec a' = (a_1' , a_2', \dots, a_{k+3}', \mathtt{a}, \emptyword)$ and $\vec b' = (b_1', b_2', \dots, b_{k+3}', \mathtt{a}, \emptyword)$ be the resulting tuples of the look-up game $\mathcal{G}_l$.
Without loss of generality, assume that $\mathtt{a}^q = a_r \cdot x$ and $\mathtt{a}^p \neq b_r \cdot x$ for some $r \leq k$ and $x \in \{ \emptyword, \mathtt{a}, \mathtt{aa} \}$.

In round $k+1$, assume that Spoiler chooses $\mathfrak{C}_q$ and $a_{k+1} \df x$.
Duplicator must respond with $\mathfrak{D}_p$ and $b_{k+1} \df x$ since $x$ is either a constant or the concatenation of two constants. 
In round $k+1$, assume that Spoiler chooses $\mathfrak{C}_q$ and $a_{k+2} \df \mathtt{a}^q$. 
It follows that $a_{k+2} = a_r \cdot a_{k+1}$ and therefore Duplicator must choose some $b_{k+2} = b_r \cdot b_{k+1}$.
We therefore know that $b_{k+2} \neq \mathtt{a}^p$.
In round $k+3$, let Spoiler choose $\mathfrak{B}_p$ and $b_{k+3} \df b_{k+2} \cdot \mathtt{a}$.
Since Spoiler cannot choose any factor of $\mathtt{a}^q$ of the form $a_{k+2} \cdot \mathtt{a}$, Spoiler has won.
This is a contradiction, as we have assumed that Duplicator plays with their winning strategy.

So far, we have that for any $r \leq k-3$ if $\mathtt{a}^q = a_r \cdot x$ then $\mathtt{a}^p = b_r \cdot x$, where $x \in \{ \emptyword, \mathtt{a}, \mathtt{aa} \}$.
However, the other direction immediately follows by simply considering the same proof with the roles switched.
\end{claimproof}

\proofsubparagraph{Duplicator's strategy.}
We now work towards a full strategy for Duplicator in the $k$-round game over $\mathfrak{A}_q$ and $\mathfrak{B}_p$.
For each round $r \in [k]$ of $\mathcal{G}_l$, if Duplicator responds in $\mathcal{G}_l$ with $\mathtt{a}^n$ for some $n \in \mathbb{N}_+$, then we know that Spoiler's choice in $\mathcal{G}$ is of the form $u_1 \cdot w^m \cdot u_2$, where $u_1, u_2 \sqsubset w$ , and $m \in \mathbb{N}_+$.
Duplicator then responds in $\mathcal{G}$ with $u_1 \cdot w^n \cdot u_2$.
From~\cref{obs:factorOfRep}, we know that this response is uniquely defined.
See~\cref{fig:powerStrat} for an informal illustration of Duplicator's strategy.

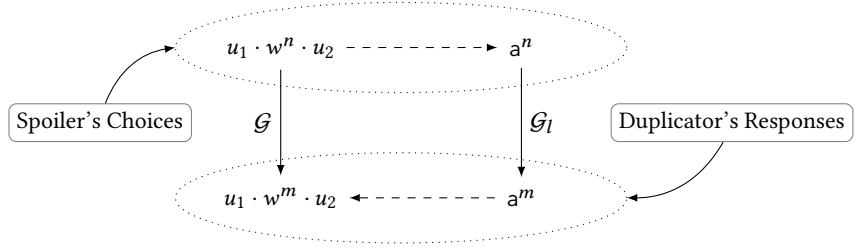
\begin{figure}
    \centering
    \begin{tikzpicture}
    \tikzstyle{vertex}=[rectangle,fill=white!35,minimum size=12pt,inner sep=3pt,outer sep=1.5pt]
   \tikzstyle{vertex2}=[rectangle,fill=white!35,draw=gray,rounded corners=.1cm]
    \node[vertex2] (7) at (-4,0) {Spoiler's Choices};
    \node[vertex2] (8) at (4.4,0) {Duplicator's Responses};

    \draw[-latex] (7) to[bend left=30] (-3,1);
    \draw[-latex] (8) to[bend left=30] (3,-1);
    
    \draw[dotted] (0,1) ellipse (3cm and 0.6cm);
    \draw[dotted] (0,-1) ellipse (3cm and 0.6cm);
    
    \node[vertex] (1) at (-1.6, 1) {$u_1 \cdot w^n \cdot u_2$};
    \node[vertex] (2) at (1.6, 1) {$\mathtt{a}^n$};
    
    \node[vertex] (3) at (1.6,-1) {$\mathtt{a}^m$};
    \node[vertex] (4) at (-1.6,-1) {$u_1 \cdot w^m \cdot u_2$};
    
    \path[-latex,dashed] (1) edge node [above] {} (2);
    \path[-latex] (2) edge node [right] {$\mathcal{G}_l$} (3);
    \path[-latex,dashed] (3) edge node [below] {} (4);
    \path[-latex] (1) edge node [left] {$\mathcal{G}$} (4);
    \end{tikzpicture}
    \caption{Duplicator's strategy when Spoiler chooses some $u_1 \cdot w^n \cdot u_2$ where $n \geq 1$.}
    \label{fig:powerStrat}
\end{figure}

In the case where Duplicator responds with $\emptyword$ in the look-up game $\mathcal{G}_l$, it follows that Spoiler's choice was some $u \in \Sigma^*$, where $\expo_w (u) = 0$.
Thus, Duplicator simply responds with the same $u \in \Sigma^*$.

Before showing that this strategy is a winning strategy for Duplicator, we first show that each choice for Duplicator (as defined above) is indeed a factor of the structure that Duplicator plays in a given round.
The case where Spoiler chooses some $u$ where $\expo_w(u) = 0$ is obvious, therefore we only look at the case where $\expo_w(u) \geq 1$.
Without loss of generality, assume that some round $r \leq k$, Spoiler chooses $\mathfrak{A}_q$ and $u_1 \cdot w^n \cdot u_2$ where $n \in \mathbb{N}_+$. 
It therefore follows that Duplicator's response is of the form $u \df u_1 \cdot w^m \cdot u_2$ where 
\begin{itemize}
\item $m \in \mathbb{N}_+$, 
\item $u_1$ is a strict suffix of $w$, and 
\item $u_2$ is a strict prefix of $w$.
\end{itemize}

Working towards a contradiction, assume that $u$ is not a factor of $w^p$.
By the definition of Duplicator's strategy (using the look-up game), we know that $m > p$ cannot hold.
Likewise, we know that $m \leq p-2$ cannot hold since then $u_1 \cdot w^m \cdot u_2$ is a factor of $w^p$.
Therefore, if $u$ is not a factor of $w^p$, then $p-2 < m \leq p$.
It follows from~\cref{claim:almostFull} that $q-n = p-m$.
However, then $u_1 \cdot w^n \cdot u_2$ is not a factor of $w^q$ if $u_1 \cdot w^m \cdot u_2$ is not a factor of $w^p$, and therefore our assumption that $u$ is not a factor of $w^p$ cannot hold.

\proofsubparagraph{Correctness.}
We now show that the given strategy for Duplicator is indeed a winning strategy.

Let $\vec a' = (a_1' , a_2', \dots, a_{k+5}')$ where $a_i' \sqsubseteq \mathtt{a}^q$ for all $i \in [k]$ and $\vec b' = (b_1', b_2', \dots, b_{k+5}')$ where $b_i' \sqsubseteq \mathtt{a}^p$ for all $i \in [k]$ be the resulting tuples of $\mathcal{G}_l$.
Since Duplicator plays $\mathcal{G}_l$ using their winning strategy, we have that $(\vec a', \vec b')$ forms a partial isomorphism.

Let $\vec a = (a_1, a_2, \dots, a_{k+|\Sigma|+1})$ where $a_i \in A$ for all $i \in [k]$, and let $\vec b = (b_1,b_2,\dots,b_{k+|\Sigma|+1})$ where $b_i \in B$ for all $i \in [k]$ be the resulting tuples from a $k$-round game over $\mathfrak{A}_q$ and $\mathfrak{B}_p$, where Duplicator plays the strategy given in this proof.

To show that $(\vec a, \vec b)$ is indeed a partial isomorphism, we shall show that the following conditions hold:
\begin{itemize}
\item For every $i \in [k+ |\Sigma|+1]$, and constant symbol $c \in \signature_\Sigma$, we have $a_i = c^{\mathfrak{A}_w}$ if and only if $b_i = c^{\mathfrak{B}_v}$,
\item for every $i,j \in [k+ |\Sigma|+1]$, we have that $a_i = a_j$ if and only if $b_i = b_j$, and
\item for every $i,j,k \in [k+ |\Sigma|+1]$, we have that $a_i = a_j \cdot a_k$ if and only if $b_i = b_j \cdot b_k$.
\end{itemize}

Due to the definition of Duplicator's strategy, we know that if Spoiler some constant $\mathtt{a} \in \Sigma$, then Duplicator responds with $\mathtt{a} \in \Sigma$. 
If Spoiler chooses some $a_j$ such that $a_j = a_i$, then by the definition of Duplicator's strategy, Duplicator chooses $b_j$ such that $b_j = b_i$. 
The case where Spoiler chooses $b_j$ follows analogously.

Therefore, it suffices to show that $a_i \cdot a_j = a_l$ if and only if $b_i \cdot b_j = b_l$ holds to prove that $(\vec a, \vec b)$ is a partial isomorphism.

Without loss of generality, assume that $a_i \cdot a_j = a_l$.
We now look at the following cases, and for each case show that $a_i \cdot a_j = a_l$ implies $b_i \cdot b_j = b_l$
\begin{itemize}
\item \textbf{Case 1:} $\expo_w(a_i) = 0$ and $\expo_w(a_j) = 0$, 
\item \textbf{Case 2:} $\expo_w(a_i) \geq 1$ and $\expo_w(a_j) \geq 1$,
\item \textbf{Case 3:} $\expo_w(a_i) = 0$ and $\expo_w(a_j) \geq 1$, and
\item \textbf{Case 4:} $\expo_w(a_i) \geq 1$ and $\expo_w(a_j) = 0$.
\end{itemize}

\textbf{Case 1, $\expo_w(a_i) = 0$ and $\expo_w(a_j) = 0$:} 
First, note that from~\cref{lemma:expoIncrease}, we know that $\expo_w(a_l) \leq 1$ must hold.
Thus, we continue Case 1 under the assumption that $\expo_w(a_l) \leq 1$.
Furthermore, we know from the definition of Duplicator's strategy, that $a_i = b_i$ and $a_j = b_j$.

\begin{itemize}
\item If $\expo_w(a_l) = 0$, then it follows directly from Duplicator's strategy that $a_l = b_l$, and $a_i = b_i$ and $a_j = b_j$.
Thus, $b_l = b_i \cdot b_j$.
\item If $\expo_w(a_l) = 1$, then we can write $a_l = w_1 \cdot w \cdot w_2$ where $w_1, w_2 \sqsubset w$.
From the look-up game, we know that $a_l' = \mathtt{a}$, and therefore $b_l' = \mathtt{a}$ must hold.
Thus, from Duplicator's strategy, we know $b_l = w_1 \cdot w \cdot w_2$. 
Hence $b_l = b_i \cdot b_j$.
\end{itemize}

\textbf{Case 2, $\expo_w(a_i) \geq 1$ and $\expo_w(a_j) \geq 1$:}
For this case, we know that we can write:
\begin{itemize}
\item $a_i = u_i  \cdot w^{m_i} \cdot v_i$, where $u_i$ is a suffix of $w$ and $v_i$ is a prefix of $w$, and $m_i = \expo_w(a_i)$, and
\item $a_j = u_j  \cdot w^{m_j} \cdot v_j$, where $u_j$ is a suffix of $w$ and $v_j$ is a prefix of $w$, and $m_j = \expo_w(a_j)$.
\end{itemize}

Due to the definition of Duplicator's strategy, we know that 
\begin{itemize}
\item $b_i = u_i  \cdot w^{n_i} \cdot v_i$, where $n_i = \expo_w(b_i)$, and
\item $b_j = u_j  \cdot w^{n_j} \cdot v_j$, where $n_j = \expo_w(b_j)$.
\end{itemize}

Since $a_l \sqsubseteq w^q$ and $b_l \sqsubseteq w^p$, we know from~\cref{lemma:primitive2} that $v_i \cdot u_j = w$ or $v_i \cdot u_j = \emptyword$.
Thus,
\[ a_l = u_i \cdot w^{m_i} \cdot w^g \cdot w^{m_j} \cdot v_j, \]
where $g \in \{0, 1 \}$.
From the definition of Duplicator's strategy,
$ b_l = u_i \cdot w^n \cdot v_j $
for some $n \in \mathbb{N}$. 
Furthermore,
\begin{align*}
b_i \cdot b_j &= u_i \cdot w^{n_i} \cdot w^g \cdot w^{n_j} \cdot v_j, \\
b_i \cdot b_j &= u_i \cdot w^{n_i + n_j + g} \cdot v_j.
\end{align*}
Therefore, $b_l = b_i \cdot b_j$ if and only if $n = n_i + n_j + g$.

Recall that the result $(\vec a', \vec b')$ of the look-up game $\mathcal{G}_l$ forms a partial isomorphism.
From the look-up game $\mathcal{G}_l$, we know that 
\begin{enumerate}
\item $a_i' = \mathtt{a}^{m_i}$ and $b_i' = \mathtt{a}^{n_i}$,
\item $a_j' = \mathtt{a}^{m_j}$ and $b_j' = \mathtt{a}^{n_j}$, and
\item $a_l' = \mathtt{a}^{m_i + m_j + g}$ and $b_l' = \mathtt{a}^{n}$.
\end{enumerate}

We consider two cases.

\emph{Case 2.1, $g = 0$:}
Since $g = 0$, it follows that $a_l' = \mathtt{a}^{m_i + m_j}$, and hence $a_l' = a_i' \cdot a_j'$.
Furthermore, because $(\vec a', \vec b')$ forms a partial isomorphism and $a_l' = a_i' \cdot a_j'$, we have that $b_l' = b_i' \cdot b_j'$. 
Hence, $b_l' = \mathtt{a}^{n_i} \cdot \mathtt{a}^{n_j}$, and consequently, $n = n_i+ n_j$.
It therefore follows that $b_l = b_i \cdot b_j$ holds.

\emph{Case 2.2, $g = 1$:}
Since $g=1$, it follows that $a_l' = \mathtt{a} \cdot \mathtt{a}^{m_i + m_j}$, and thus $a_l' = \mathtt{a} \cdot a_i' \cdot a_j'$.
It follows from~\cref{claim:onemore} that $b_l' = \mathtt{a} \cdot b_i' \cdot b_j'$, which means that $n = n_i + n_j + 1$.
It therefore follows that $b_l = b_i \cdot b_j$ holds.

\textbf{Case 3, $\expo_w(a_i) = 0$ and $\expo_w(a_j) \geq 1$:}
Since $\expo_w(a_j) \geq 1$ and therefore $\expo(a_l) \geq 1$, we can write
\begin{itemize}
\item  $a_j = u_j \cdot w^{\expo_w(a_j)} \cdot v_j$ where $u_j, v_j \sqsubset w$, and
\item $a_l = u_l \cdot w^{\expo_w(a_l)} \cdot v_l$ where $u_l, v_l \sqsubset w$.
\end{itemize}

From the definition of Duplicator's strategy, we know that 
\begin{itemize}
\item $b_i = a_i$,
\item $b_j = u_j \cdot w^{\expo_w(b_j)} \cdot v_j$, and
\item $b_l = u_l \cdot w^{\expo_w(b_l)} \cdot v_l$.
\end{itemize}

Since $a_l = a_i \cdot a_j$, we have that
\[ u_l \cdot w^{\expo_w(a_l)} \cdot v_l = a_i \cdot u_j \cdot w^{\expo_w(a_j)} \cdot v_j. \]

Notice that $\expo_w(a_i) = 0$ and $\expo_w(u_j) = 0$, and thus from~\cref{lemma:expoIncrease}, we know that $\expo_w(a_i \cdot u_j) \leq 1$.
If $\expo_w(a_i \cdot u_j) = 0$, then $a_l = a_i \cdot u_j \cdot w^{\expo_w(a_l)} \cdot v_l$, and therefore 
\[ u_l \cdot w^{\expo_w(a_l)} \cdot v_l = a_i \cdot u_j \cdot w^{\expo_w(a_l)} \cdot v_l, \]
which implies that $a_i \cdot u_j = u_l$.

If $\expo_w(a_i \cdot u_j) = 1$, then from~\cref{obs:factorOfRep}, we know that $a_i \cdot u_j = w' \cdot w \cdot w''$, where $w', w'' \sqsubset w$.
Since,
\[ (w' \cdot w \cdot w'') \cdot  w^{\expo_w(a_j)} \cdot v_l \sqsubseteq w^q , \]
it immediately follows that $w'' = \emptyword$ must hold, see~\cref{obs:primitive}.

Thus, either $a_i \cdot u_j = u_l$ or $a_i \cdot u_j = u_l \cdot w$.
Therefore, we can write:
\[ a_l = u_l \cdot w^g \cdot w^{\expo_w(a_j)} \cdot v_j, \]
where $g \in \{ 0, 1\}$.
 
We know that:
\[ b_i \cdot b_j = b_i \cdot u_j \cdot w^{\expo_w(b_j)} \cdot v_j. \]
Furthermore, $b_i = a_i$, and  $a_i \cdot u_j = u_l \cdot w^g$ where $g \in \{0,1\}$.
Therefore,
\begin{align*}
 b_i \cdot b_j = &  u_l \cdot w^g \cdot w^{\expo_w(b_j)} \cdot v_j, \\
 				= &  u_l \cdot w^{\expo_w(b_j) + g} \cdot v_j,
\end{align*}
for some $g \in \{ 0, 1\}$.
Hence $b_l = b_i \cdot b_j$ if and only if $\expo_w(b_l) = \expo_w(b_i) + g$ for $g \in \{ 0, 1\}$.
From~\cref{claim:onemore} along with the analogous reasoning given in Case 2, we know that $\expo_w(b_l) = \expo_w(b_i) + g$ must hold.
Consequently, $b_l = b_i \cdot b_j$.

\textbf{Case 4, $\expo_w(a_i) \geq 1$ and $\expo_w(a_j) = 0$:}
This case is symmetric to Case 3.
We therefore omit a proof.

\proofsubparagraph{Concluding the proof.}
We have shown that if $a_l = a_i \cdot a_j$, then $b_l = b_i \cdot b_j$ must hold.
Due to the arbitrary nature of the notation $\vec a$ and $\vec b$, it follows that if $b_l = b_i \cdot b_j$, then $a_l = a_i \cdot b_j$.
Thus, we have shown that $(\vec a, \vec b)$ is a partial isomorphism.
We have therefore given a winning strategy for Duplicator over $\mathfrak{A}_q$ and $\mathfrak{B}_p$ -- two structures that represent $w^q$ and $w^p$ respectively, whenever $\mathtt{a}^q \equiv_{k+3} \mathtt{a}^p$.
\end{proof}

\subsection{Proof of Proposition~\ref{lemma:allWords}}
\begin{proof}
The proof follows almost directly from~\cref{lemma:primitivePower}.
Let $w \in \Sigma^+$ and let $z \in \Sigma^+$ be the primitive root of $w$.
That is, $w = z^i$ for some $i \in \mathbb{N}_+$ and some primitive word $z \in \Sigma^+$.
It follows that for any $p \in \mathbb{N}_+$, we have that $w^p = z^{pi}$.
Therefore, if for any primitive word $z \in \Sigma^+$, and any $k \in \mathbb{N}$, there exists $p \in \mathbb{N}_+$ and $v \in \Sigma^*$ such that $v \neq z^{pi}$, where $z^{pi} \equiv_k v$, then we have proven the stated lemma.

To that end, we observe that the set $\{ i \cdot 2^n \mid n \in \mathbb{N}_+ \}$ is not semi-linear (recall Section~\ref{sec:EFgames}, using the same reasoning as for $L_{\mathsf{pow}}$) and, hence, not definable in $\fc$.
Consequently, from~\cref{lemma:primitivePower}, for any word $w \in \Sigma^+$ and any $k \in \mathbb{N}_+$, there exists $p \in \mathbb{N}_+$ and $v \in \Sigma^*$ such that $w^p \equiv_k v$ where $w^p \neq v$.
\end{proof}

\section{Appendix for Section~\ref{subsec:pumping}}

\subsection{Proof of Lemma~\ref{lemma:coprim}}
\begin{proof}
We prove this lemma in two parts.

\proofsubparagraph{(1) if and only if (2).}
First, we show that the only if direction holds by proving the contrapositive.
In other words, if $v$ and $w$ conjugate, then there does not exist $n_0, m_0 \in \mathbb{N}$ such that 
$ \facts(w^{n_0}) \intersect \facts(v^{m_0}) = \facts(w^{n}) \intersect \facts(v^{m}) $
for all $n > n_0$ and all $m > m_0$.
To that end, assume that there exists $x,y \in \Sigma^*$ such that $w = x \cdot y$ and $v = y \cdot x$.
Then, for any $m \in \mathbb{N}$, we have that
\begin{align*}
w^m = & \underbrace{xy \cdot xy \cdot xy \cdots xy}_{m \text{ times}}, \text{ and } \\
v^m = & \underbrace{yx \cdot yx \cdot yx \cdots yx}_{m \text{ times}}. 
\end{align*}
Thus $(xy)^{m-1} \in \facts(w^m) \intersect \facts(v^m)$ and $(xy)^{m-1} \notin \facts(w^{m-d}) \intersect \facts(v^{m-d})$ for any $d \geq 2$ where $d \leq m$.
Consequently, the only if direction holds.

Next, we work towards showing the if direction.
In order to prove this direction, we reformulate the statement.
Let
\[ \langinf \df \{ u \in \Sigma^* \mid u \sqsubseteq w^\omega \text{ and }u \sqsubseteq v^\omega \}. \]
Clearly, if $\langinf$ is infinite, then there does not exist $n_0, m_0 \in \mathbb{N}$ such that $ \facts(w^{n_0}) \intersect \facts(v^{m_0}) = \facts(w^{n}) \intersect \facts(v^{m}) $ for all $n > n_0$ and all $m > m_0$.
We can therefore reformulate the contraposition of the if direction as: If $\langinf$ is infinite then $w$ and $v$ conjugate.
To that end, assume $\langinf$ is infinite. 
Therefore, for every $n \in \mathbb{N}$, there exists $u \in \langinf$ such that $|u| \geq n$.
Invoking the periodicity lemma, we conclude that $w$ and $v$ are conjugate. 

\proofsubparagraph{(2) if and only if (3).}
Let us first consider the if direction.
Note that for any $n_1, n_2 \in \mathbb{N}$ where $n_1 < n_2$, and any word $w \in \Sigma^*$, we have that $w^{n_1} \sqsubseteq w^{n_2}$.
Therefore, $\facts(w^{n_1}) \subseteq \facts(w^{n_2})$.

Let $w, v \in \Sigma^+$ be two primitive words such that there exists $r \in \mathbb{N}$ where 
\[r \geq \mathsf{max} \{ |w| \in \mathbb{N} \mid w \in \facts(w^n) \intersect \facts(v^m) \},\]
for all $n,m \in \mathbb{N}$.
We assume that $r \in \mathbb{N}$ is the smallest such value.
Thus, for some $n', m' \in \mathbb{N}$, we have that
\[ r = \mathsf{max} \{ |w| \in \mathbb{N} \mid w \in \facts(w^{n'}) \intersect \facts(v^{m'}) \}. \]
Furthermore, due to the fact that $\Sigma$ is a fixed and finite alphabet, the number of words of in $\Sigma^*$ with a length that is less than or equal to $r$ is finite.

We now define the language
\[  L_\mathsf{intersect} \df \{ u \in \Sigma^* \mid u \in \facts(w^n) \intersect \facts(v^m) \text{ for some } n,m \in \mathbb{N} \}. \]

Since the number of words in $\Sigma^*$ with a length  that is less than or equal to $r$ is finite, it follows that $L_\mathsf{intersect}$ is also finite.
Combining this with the fact that for any $n_1, n_2 \in \mathbb{N}$ where $n_1 < n_2$, and any word $w \in \Sigma^*$, we have that $\facts(w^{n_1}) \subseteq \facts(w^{n_2})$,
there must exist some $n_0, m_0 \in \mathbb{N}$  such that for all $n \geq n_0$ and all $m \geq m_0$, we have that
\[ \facts(w^{n_0}) \intersect \facts(v^{m_0}) = \facts(w^n) \intersect \facts(v^m). \]

Now let us consider the only if direction. 
Assume that there exists $n_0, m_0 \in \mathbb{N}$ such that $\facts(w^{n_0}) \intersect \facts(v^{m_0}) = \facts(w^m) \intersect \facts(v^m)$ for all $n> n_0$ and $m > m_0$.
Let $r \df \mathsf{max} \{ \facts(w^{n_0}) \intersect \facts(v^{m_0})  \}$.
Since for any $n> n_0$ and $m>m_0$, there does not exist $u$ such that $|u| > r$ and $u \in  \facts(w^m) \intersect \facts(v^m)$, this concludes the proof.
\end{proof}

\subsection{Proof of Lemma~\ref{lemma:FCpumping}}
\begin{proof}
Let $w_1, w_2, w_3 \in \Sigma^*$, and let $u,v \in \Sigma^+$ be two distinct primitive words such that there exists $r \in \mathbb{N}$ where
$r \geq \mathsf{max} \{ |u| \in \mathbb{N} \mid u \in \facts(u^m) \intersect \facts(v^n) \} $
for all $m,n \geq 1$, see~\cref{lemma:coprim}.
Let $f \colon \mathbb{N} \rightarrow \mathbb{N}$ be an injective function, and let $\varphi \in \fc$ such that 
\[w_1 \cdot u^p \cdot w_2 \cdot v^{f(p)} \cdot w_3 \in \lang(\varphi),\]
 for all $p \in \mathbb{N}$.

\begin{claim}\label{claim:concatenatingFixedWords}
For every $k \in \mathbb{N}$, there exists $p,q \in \mathbb{N}$ such that $p \neq q$ and $w_1 \cdot u^p \cdot w_2 \equiv_k w_1 \cdot u^q \cdot w_2$.
\end{claim}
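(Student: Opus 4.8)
The plan is to assemble $w_1 \cdot u^p \cdot w_2$ out of $u^p$ by two concatenations: the Primitive Power Lemma (\cref{lemma:primitivePower}) supplies an equivalence $u^p \equiv u^q$ with $p \neq q$ at arbitrarily large quantifier rank, and the Pseudo-Congruence Lemma (\cref{lemma:congruence}) is applied twice to glue on the fixed prefix $w_1$ and the fixed suffix $w_2$. We may assume $k \geq 1$: the case $k = 0$ is trivial, since $\equiv_0$ only distinguishes structures by which terminals occur and, for $p \geq 1$, the word $w_1 u^p w_2$ contains exactly the terminals of $w_1 u w_2$; and in any case $w \equiv_{k+1} v$ implies $w \equiv_k v$, so it suffices to treat $k \geq 1$.

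The first ingredient consists of two elementary stabilisation observations that make \cref{lemma:congruence} applicable. Since $u^m$ is a prefix of $u^{m+1}$, the sets $\facts(u^m)$ are non-decreasing in $m$; intersecting with the finite set $\facts(w_1)$ yields a non-decreasing chain of finite sets, hence eventually constant — say $\facts(w_1) \intersect \facts(u^m)$ is the same set for all $m \geq n_1$ (one may take $n_1 = |w_1| + 1$, as a factor of $w_1$ that occurs in some power of $u$ already occurs in $u^{|w_1|+1}$). Let $s_1$ be the maximum length of an element of this set. Likewise $w_1 u^m$ is a prefix of $w_1 u^{m+1}$, so $\facts(w_1 u^m) \intersect \facts(w_2)$ is constant for all $m$ beyond some $n_2$; let $s_2$ be the maximum length of an element of that set. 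Note $s_1 \leq |w_1|$ and $s_2 \leq |w_2|$ are fixed numbers, independent of $p,q$ once $p,q$ exceed $N := \max\{n_1, n_2\}$. Put $K := k + s_1 + s_2 + 7$.

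Now I would fix the witnesses. Since $\equiv_K$ has finite index — there are only finitely many pairwise inequivalent $\fc(K)$-sentences over the signature, which by \cref{thm:ehrenfeucht} bounds the number of $\equiv_K$-classes — among the infinitely many unary words $\mathtt{a}^n$ with $n > N$ two distinct ones are $\equiv_K$-equivalent; this gives $p \neq q$, both larger than $N$, with $\mathtt{a}^p \equiv_{K} \mathtt{a}^q$ (this is \cref{lemma:pow2} with witnesses chosen above a prescribed bound, obtained the same way). By \cref{lemma:primitivePower}, using that $u$ is primitive, we get $u^p \equiv_{k + s_1 + s_2 + 4} u^q$. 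Apply \cref{lemma:congruence} to $w_1, u^p$ and $w_1, u^q$: the hypothesis $\facts(w_1) \intersect \facts(u^p) = \facts(w_1) \intersect \facts(u^q)$ holds because $p, q > N$, the relevant bound there is exactly $s_1$, and $w_1 \equiv w_1$ trivially; this yields $w_1 u^p \equiv_{k + s_2 + 2} w_1 u^q$. Applying \cref{lemma:congruence} once more to $w_1 u^p, w_2$ and $w_1 u^q, w_2$ — the factor-intersection hypothesis holding because $p, q > N$, with bound $s_2$, and $w_2 \equiv w_2$ trivially — gives $w_1 u^p w_2 \equiv_k w_1 u^q w_2$, which is the claim.

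I do not anticipate a genuine obstacle: the argument is a chaining of two already-established lemmas. The only points requiring care are the two stabilisation facts (a short combinatorial observation about factors of powers) and keeping the quantifier-rank arithmetic straight when composing the two Pseudo-Congruence steps; the observation that makes the arithmetic go through is that $s_1$ and $s_2$ are constants depending only on $w_1, w_2, u$, so that the required unary quantifier rank $K$ is a fixed finite number and the finite-index argument applies.
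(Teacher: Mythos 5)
Your proof is correct and follows essentially the same route as the paper's: the Primitive Power Lemma supplies $u^p \equiv u^q$ at sufficiently high rank, and the Pseudo-Congruence Lemma is applied twice to attach $w_1$ and $w_2$, with the factor-intersection hypotheses secured by the stabilisation of $\facts(w_1)\intersect\facts(u^m)$ and $\facts(w_1 u^m)\intersect\facts(w_2)$. Your explicit pigeonhole step guaranteeing that $p,q$ can be chosen above the stabilisation threshold $N$ is a point the paper glosses over, and your quantifier-rank bookkeeping checks out.
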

\begin{claimproof}
For every $k \in \mathbb{N}$, there exists $p, q \in \mathbb{N}$ such that $\mathtt{a}^p \equiv_{k+3} \mathtt{a}^q$ and $p \neq q$, see~\cref{lemma:pow2}.
Thus, immediately from~\cref{lemma:primitivePower}, we know that for every $k \in \mathbb{N}$, there exists $p,q \in \mathbb{N}$ such that $p \neq q$ and $ u^p  \equiv_k u^q $.
Now note that there exists some $p_0 \in \mathbb{N}$ such that for all $p, q \geq p_0$, we have that $\facts(w_1) \intersect \facts(u^p) = \facts(w_1) \intersect \facts(u^q)$.
This simply follows from the fact that $w_1$ is a fixed word.
Thus, there exists some $r \in \mathbb{N}$ such that:
\[  r \geq \mathsf{max} \{ |w| \in \mathbb{N} \mid w \in \facts(w_1) \intersect \facts(u^p) \text{ for all } p \in \mathbb{N} \}. \] 
Therefore, there exists $p_0, r_0 \in \mathbb{N}$ such that
\[  r_0 = \mathsf{max} \{ |w| \in \mathbb{N} \mid w \in \facts(w_1) \intersect \facts(u^p) \text{ for all } p \geq p_0 \}. \] 

We know that for every $k \in \mathbb{N}$, that there exists $p, q \in \mathbb{N}$ such that $u^p \equiv_{k+ r_0 +2} u^q$, where $p \neq q$.
Trivially, $w_1 \equiv_{k + r_0 +2} w_1$ holds.
Thus, invoking~\cref{lemma:congruence}, we know: For every $k \in \mathbb{N}$, there exists $p,q \in \mathbb{N}$ such that $w_1 \cdot u^p \equiv_k w_1 \cdot u^q$, where $p \neq q$.
Using the symmetric reasoning, for every $k \in \mathbb{N}$, there exists $p, q \in \mathbb{N}$ such that $w_1 \cdot u^p \cdot w_2 \equiv_k w_1 \cdot u^q \cdot w_2$, where $p \neq q$.
\end{claimproof}

Note that from~\cref{lemma:coprim}, we know that there exists $n_0, m_0 \in \mathbb{N}$ such that for all $n > n_0$ and $m > m_0$, we have that
$\facts(u^{n_0}) \intersect \facts(v^{m_0}) = \facts(u^n) \intersect \facts(v^m)$.
Furthermore, due to the fact that $w_1$, $w_2$, and $w_3$ are fixed words, there exists $n_1, m_1 \in \mathbb{N}$ such that for all $n > n_1$ and $m > m_1$, we have that
\[
\facts(w_1 \cdot u^{n_1} \cdot w_2) \intersect \facts(v^{m_1} \cdot w_3) = \facts(w_1 \cdot u^n \cdot w_2) \intersect \facts(v^m \cdot w_3). 
\]

Let $r' \df \mathsf{max} \{ |w| \in \mathbb{N} \mid w \in \facts(w_1 \cdot u^{n_1} \cdot w_2) \intersect \facts(v^{m_1} \cdot w_3) \}$.

Trivially, we know that $v^{f(p)} \cdot w_3 \equiv_k v^{f(p)} \cdot w_3$ for all $k \in \mathbb{N}_+$.
Furthermore, we know that from~\cref{claim:concatenatingFixedWords} that for every $k \in \mathbb{N}$, there exists $p,q \in \mathbb{N}$ such that $p \neq q$ and $w_1 \cdot u^p \cdot w_2 \equiv_k w_1 \cdot u^q \cdot w_2$.

Thus, for every $k \in \mathbb{N}$, we know that there exists $p,q \in \mathbb{N}$ such that $p \neq q$ and:
\begin{itemize}
\item $w_1 \cdot u^p \cdot w_2 \equiv_{k + r' + 2} \, w_1 \cdot u^q \cdot w_2$, and 
\item $v^{f(p)} \cdot w_3 \equiv_{k + r' + 2} \, v^{f(p)} \cdot w_3$.
\end{itemize}

We can then invoke~\cref{lemma:congruence} to infer that for all $k \in \mathbb{N}_+$, we have some $p,q \in \mathbb{N}$ where $p \neq q$ and 
\[ w_1 \cdot u^p \cdot w_2 \cdot v^{f(p)} \cdot w_3 \equiv_k w_1 \cdot u^q \cdot w_2 \cdot v^{f(p)} \cdot w_3. \]
Since, we have assumed that $w_1 \cdot u^p \cdot w_2 \cdot v^{f(p)} \cdot w_3 \in \lang(\varphi)$ for all $p \in \mathbb{N}$, it follows that $w_1 \cdot u^q \cdot w_2 \cdot v^{f(p)} \cdot w_3  \in \lang(\varphi)$.
Due to the fact that $f$ is injective, $f(q) \neq f(p)$ since $q \neq p$.
\end{proof}

\subsection{Proof of Lemma~\ref{lemma:nonFClangs}}
\begin{proof}
We now show that each of the following languages are not expressible in $\fc$:
\begin{itemize}
\item $L_1 \df \{ \mathtt{a}^n \cdot (\mathtt{ba})^n  \mid n \in \mathbb{N} \}$,
\item $L_2 \df \{ \mathtt{a}^i \cdot (\mathtt{ba})^j \mid 1 \leq i \leq j  \}$,
\item $L_3 \df \{ \mathtt{b}^n \cdot \mathtt{a}^m \cdot \mathtt{b}^{n+m} \mid m,n \in \mathbb{N} \}$,
\item $L_4 \df \{ \mathtt{b}^n \cdot \mathtt{a}^m \cdot \mathtt{b}^{n \cdot m} \mid m,n \in \mathbb{N} \}$, and 
\item $L_5 \df \{ (\mathtt{abaabb})^m \cdot (\mathtt{bbaaba})^m \mid m \in \mathbb{N} \}$, and
\item $L_6 \df \{ \mathtt{a}^n \cdot \mathtt{b}^n \cdot (\mathtt{ab})^n \mid n \in \mathbb{N} \}$.
\end{itemize}

\underline{$L_1$ and $L_2$:} The fact that $L_1, L_2 \notin \lang(\fc)$ follows from~\cref{prop:subword} and~\cref{obs:equivToLang}

\underline{$L_3$:} For $L_3$, consider the case where $n=0$.
We therefore get the language $\{ \mathtt{a}^m \cdot \mathtt{b}^{m} \mid m \in \mathbb{N} \}$.
From~\cref{example:anbn}, we know that for every $k \in \mathbb{N}$, there exists $p, q \in \mathbb{N}$ such that $q \neq p$ and $\mathtt{a}^q \cdot \mathtt{b}^p \equiv_k \mathtt{a}^p \cdot \mathtt{b}^p$.
Thus, invoking~\cref{obs:equivToLang}, we conclude that $L_3 \notin \lang(\fc)$.

\underline{$L_4$:}
For $L_4$, let $n=1$. We therefore get the language $\{ \mathtt{b} \cdot \mathtt{a}^m \cdot \mathtt{b}^m \mid m \in \mathbb{N} \}$.
From~\cref{lemma:pow2}, we know that for every $k \in \mathbb{N}$, there exists $p,q \in \mathbb{N}$ such that $p \neq q$ and $\mathtt{b}^p \equiv_{k+3} \mathtt{b}^q$.
It trivially holds that $\mathtt{b} \cdot \mathtt{a}^p \equiv_{k+3} \mathtt{b} \cdot \mathtt{a}^p$.
As $\facts(\mathtt{b}^m) \intersect \facts(\mathtt{b} \cdot \mathtt{a}^n) = \{ \emptyword, \mathtt{b} \}$ for any $n,m \in \mathbb{N}$, we can invoke~\cref{lemma:congruence} with $r=1$ and conclude that $\mathtt{b} \cdot \mathtt{a}^p \cdot \mathtt{b}^p \equiv_{k} \mathtt{b} \cdot \mathtt{a}^p \cdot \mathtt{b}^q$.
Thus, we know from~\cref{obs:equivToLang} that $L_4 \notin \lang(\fc)$.

\underline{$L_5$:} It is clear that $\mathtt{abaabb}$ and $\mathtt{bbaaba}$ are both co-primitive.
Thus, there exists some $r \in \mathbb{N}$ where $r \geq  \mathsf{max}\{ |u| \in \mathbb{N} \mid u \in \facts((\mathtt{abaabb})^m) \intersect \facts((\mathtt{bbaaba})^n) \}$ for all $m,n \geq 1$.
Therefore, we can invoke~\cref{lemma:pumpingConsequence} to conclude that $L_5 \notin \lang(\fc)$.

\underline{$L_6$:} We know that for every $k \in \mathbb{N}$, there exists $p, q \in \mathbb{N}$ such that $\mathtt{a}^p \cdot \mathtt{b}^p \equiv_{k+2} \mathtt{a}^q \cdot \mathtt{a}^p$ and $p \neq q$.
Furthermore, it is trivial that $(\mathtt{ab})^p \equiv_{k+2} (\mathtt{ab})^p$ for all $k \in \mathbb{N}$.

Since for any $i,j,l \in \mathbb{N}_+$, it holds that $\facts(\mathtt{a}^i \cdot \mathtt{b}^j) \intersect \facts( (\mathtt{ab})^l ) = \{ \emptyword, \mathtt{a}, \mathtt{b}, \mathtt{ab} \}$, we can invoke~\cref{lemma:congruence} with $r = 2$.
Thus, for every $k \in \mathbb{N}$, there exists $p, q \in \mathbb{N}_+$ with $p \neq q$, and
$ \mathtt{a}^q \cdot \mathtt{b}^p \cdot (\mathtt{ab})^p \equiv_k \mathtt{a}^p \cdot \mathtt{b}^p \cdot (\mathtt{ab})^p$. 
Therefore, $L_6 \notin \lang(\fc)$.
\end{proof}

\section{Appendix for Section 5}

\subsection{Proof of Lemma~\ref{lemma:bounded}}
\begin{proof}
The proof of this result closely follows the proof of Theorem 6.2 from~\cite{fre:splog}.
We prove this result for completeness sake, as Theorem 6.2 from~\cite{fre:splog} does not directly deal with $\fc$ and $\fc[\reg]$.
In this proof, we only show that if $L \subseteq \Sigma^*$ is a bounded language, then $L \in \lang(\fc)$ if and only if $L \in \lang(\fcreg)$.
Immediately from the fact that $\fc$ has conjunction, disjunction, and negation, we are able to lift this to Boolean combinations of bounded languages.

The only if direction is trivial, and therefore this proof focuses on the if direction.
That is, if $L \in \lang(\fc[\reg])$, and $L$ is bounded, then $L \in \lang(\fc)$.
Let $\varphi \in \fc[\reg]$ such that $\lang(\varphi) = L$, where $L$ is bounded.
Most of the proof of this direction follows from the following claim:
\begin{claim}\label{claim:boundedConstraints}
If $\gamma$ is a regular expression that generates a bounded regular language, then there exists $\varphi \in \fc$ such that $\fun{\varphi}(w) = \fun{x \regconst \gamma}(w)$ for all $w \in \Sigma^*$.
\end{claim}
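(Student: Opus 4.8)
The plan is to reduce the claim to the already-cited fact that every bounded regular language is definable in $\fc$ as a language. Fix a sentence $\psi \in \fc$ with $\lang(\psi) = \lang(\gamma)$, and recall that $\lang(\gamma)$ is bounded, say $\lang(\gamma) \subseteq w_1^* \cdots w_n^*$. The key observation is that whenever a word $u \in \lang(\gamma)$ happens to be a factor of the input word $w$, \emph{all} factors of $u$ lie in $\facts(w)$, so that the structure $\mathfrak{A}_u$ sits inside $\mathfrak{A}_w$ on the definable sub-universe $\{\, y \mid y \sqsubseteq x \,\}$. Hence I would obtain $\varphi(x)$ from $\psi$ by \emph{relativizing} $\psi$ to the factors of the free variable $x$: the resulting formula says, in effect, ``$\mathfrak{A}_{\subs(x)} \models \psi$'', which is exactly ``$\subs(x) \in \lang(\gamma)$''; and since $\subs(x)$ always lies in $\facts(w)$, this matches $\fun{x \regconst \gamma}(w)$.

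Concretely, write $(y \sqsubseteq x) \df \exists p, t, s \colon (x \logeq t \cdot s) \land (t \logeq p \cdot y)$ with fresh $p,t,s$, which is satisfied iff $\subs(y)$ is a factor of $\subs(x)$. After renaming so that the bound variables of $\psi$ avoid $x$, let $\psi^{\downarrow x}$ be obtained from $\psi$ by (i) replacing each subformula $\exists z \colon \theta$ by $\exists z \colon ((z \sqsubseteq x) \land \theta)$ and each $\forall z \colon \theta$ by $\forall z \colon ((z \sqsubseteq x) \rightarrow \theta)$, and (ii) replacing each atom $\alpha$ in which constant symbols $\mathtt{a}_{i_1}, \dots, \mathtt{a}_{i_\ell} \in \Sigma$ occur by $\bigl(\bigwedge_{j=1}^{\ell} (\mathtt{a}_{i_j} \sqsubseteq x)\bigr) \land \alpha$. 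I would then set $\varphi(x) \df \psi^{\downarrow x}$ and prove, by induction on the structure of $\psi$, that for every $w \in \Sigma^*$ and every $\subs$ with $\subs(x) = u \in \facts(w)$ one has $(\mathfrak{A}_w, \subs) \models \psi^{\downarrow x}$ iff $\mathfrak{A}_u \models \psi$; the point that makes this induction go through is that the quantifiers of $\fc$ range over $\facts(\cdot)$ and never over $\perp$, so the guard $(z \sqsubseteq x)$ restricts them to exactly $\facts(u)$. Combining this with $\mathfrak{A}_u \models \psi \iff u \in \lang(\gamma)$ and with the fact that $\subs(x) \in \facts(w)$ always holds then yields $\fun{\varphi}(w) = \{\, \subs \mid \subs(x) \sqsubseteq w,\ \subs(x) \in \lang(\gamma) \,\} = \fun{x \regconst \gamma}(w)$.

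The step I expect to be the only delicate one is the correctness of the constant guards in (ii). In $\mathfrak{A}_w$ the symbol $\mathtt{a}$ denotes the letter $\mathtt{a}$ iff $\mathtt{a} \in \alphabet(w)$, whereas over $\mathfrak{A}_u$ it should denote $\mathtt{a}$ iff $\mathtt{a} \in \alphabet(u)$ and $\perp$ otherwise. The guards reconcile this: $(\mathtt{a} \sqsubseteq x)$ is false precisely when $\mathtt{a} \notin \alphabet(u)$ (using $u \in \facts(w)$ and that $\perp$ is not a factor of anything), and any atom $z_1 \logeq z_2 \cdot z_3$ in which some $z_j$ is $\perp$ is false because $\concrel^{\mathfrak{A}_w} \subseteq \facts(w)^3$ — which is exactly the behaviour of that atom over $\mathfrak{A}_u$ when the offending constant is interpreted there as $\perp$. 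The constant $\emptyword$ needs no guard since $\emptyword \in \facts(u)$ for every $u$. With this bookkeeping in place the induction is routine, mirroring the argument for Theorem~6.2 of~\cite{fre:splog}; as an alternative one could instead write the bounded language $\lang(\gamma)$ as a finite union of languages $u_0 v_1^* u_1 \cdots v_m^* u_m$ and relativize the (much simpler) defining sentence of each summand, using disjunction to recombine, but relativizing a single sentence $\psi$ is the shortest route.
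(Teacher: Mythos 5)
Your proposal is correct, but it takes a genuinely different route from the paper's. The paper proves the claim from scratch: it invokes the Ginsburg--Spanier characterisation of bounded regular languages (all finite languages and all languages $w^*$, closed under finite union and finite concatenation) and builds the relation-defining formula $\varphi(x)$ directly by induction on that characterisation; the only non-trivial case is $w^*$, handled by the commutation trick $(x \logeq w \cdot z) \land (x \logeq z \cdot w)$ together with the commutation property (Proposition 1.3.2 in Lothaire) to conclude $\subs(x) \in w^*$, while finite languages, union, and concatenation are immediate. You instead start from the sentence-level fact that $\lang(\gamma) \in \lang(\fc)$ for bounded regular $\gamma$ (which the paper cites from~\cite{fre:splog,frey2019finite}, but essentially re-proves in the appendix for self-containedness) and relativize that sentence to $\facts(\subs(x))$. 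Your relativization is sound: quantifiers in $\fc$ range over $\facts(\cdot)$ and never over $\perp$, the guard $(y \sqsubseteq x)$ carves out exactly $\facts(u)$ inside $\facts(w)$, the concatenation relations agree on $\facts(u)^3$, and your constant guards make a guarded atom false over $\mathfrak{A}_w$ precisely when the corresponding atom is false over $\mathfrak{A}_u$ because some letter constant denotes $\perp$ there (with $\emptyword$ indeed needing no guard), so the induction goes through. The trade-off: your argument is more modular and in fact more general --- it shows that for \emph{any} $\fc$-definable language $L$ the unary relation ``$\subs(x) \in L$'' is $\fc$-definable, so boundedness enters only through the cited expressibility fact --- whereas the paper's construction is self-contained (it does not presuppose the language-level result and avoids any relativization lemma) at the price of being tied to the specific closure characterisation of bounded regular languages.
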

\begin{claimproof}
From Theorem 1.1 from~\cite{ginsburg1966bounded}, we know that the class of bounded regular languages are exactly the class of languages that contain all finite languages, all languages $w^*$ where $w \in \Sigma^*$, and is closed under finite union and finite concatenation.
We show that for any $w \in \Sigma^*$, the language $w^*$ can be defined in $\fc$.

As syntactic sugar, we shall use atomic formulas that contain arbitrary concatenation.
It is clear that this can be rewritten into an $\fc$ formula that only contains binary concatenation (for example, see~\cite{freydenberger2021splitting}).
Now, consider
\[  \varphi_{w^*}(x) \df \exists x \colon \Bigl(  (x \logeq \emptyword) \lor \bigl( (x \logeq w \cdot z) \land (x \logeq z \cdot w)  \bigr) \Bigr). \]
Thus, for any interpretation $\mathcal{I} \df (\mathfrak{A}_w, \subs)$ such that $\mathcal{I} \models \varphi_{w^*}$, we have that either $\subs(x) = \emptyword$, or $\subs(x) = w \cdot u$ and $\subs(x) = u \cdot w$ for some $u \in \Sigma^*$.
Proposition 1.3.2 in Lothaire~\cite{lothaire1997combinatorics} states that if $vu = uv$ for $u,v \in \Sigma^*$, then there is some $z \in \Sigma^*$ and $k_1,k_2 \in \mathbb{N}$ such that $u = z^{k_1}$ and $v = z^{k_2}$.
Thus, if $\subs(x) = w \cdot u$ and $\subs(x) = u \cdot w$ for some $u \in \Sigma^*$, then $\subs(x) = w^k$ for some $k \in \mathbb{N}_+$.

Finite languages, finite union, and finite concatenation can easily be dealt with.
Thus, for any $\fc[\reg]$-formula of the form $x \regconst \gamma$, where $\gamma$ generates a bounded regular language, there exists $\varphi \in \fc$ such that $\fun{\varphi}(w) = \fun{x \regconst \gamma}(w)$ for all $w \in \Sigma^*$.
\end{claimproof}

We now construct $\psi \in \fc$ such that $\lang(\psi) = \lang(\varphi)$.
Note that every set of factors of a bounded language is also bounded (for example, see Lemma 5.1.1 in Ginsburg~\cite{ginsburg:CFL}).
Thus, since $\lang(\varphi)$ is bounded, it follows that $\lang(\varphi) \subseteq B$, where $B \df w_1^* \cdots w_n^*$ for some $w_1, \dots, w_n \in \Sigma^*$.
Therefore, for any $w \in \Sigma^*$ and any $x \in \var(\varphi)$, if $\subs \in \fun{\varphi}(w)$, then $\subs(x) \in L_x$, where 
\[ L_x \subseteq \{ u \in \Sigma^* \mid u \sqsubseteq v \text{ and } v \in B \} \]
is a bounded regular language.
Therefore, we can replace any atomic formulas of the form $(x \regconst \gamma)$, with $(x \regconst \gamma')$, where $\lang(\gamma') = \lang(\gamma) \intersect L_x$.
Furthermore, it follows that $\lang(\gamma')$ is a bounded regular language.
Then, we can invoke~\cref{claim:boundedConstraints} to replace $(x \regconst \gamma')$ with $\varphi_{\gamma'}(x) \in \fc$ without altering the semantics of the formula.

After carrying out this process of replacing all regular constraints, we have a formula $\psi \in \fc$ that is equivalent to $\varphi$.
\end{proof}

\subsection{Proof of Theorem~\ref{thm:relations}}
\begin{proof}
First, let $\varphi_w(\strucvar) \df \neg \exists z_1, z_2 \colon \Bigl( \neg (z_2 \logeq \emptyword) \land \bigl( (z_1 \logeq z_2 \cdot x) \lor (z_1 \logeq x \cdot z_2)  \bigr)  \Bigr)$.
It is clear that for every interpretation $(\mathfrak{A}_w, \subs)$, where $\mathfrak{A}_w$ is a $\signature_\Sigma$-structure that represents $w \in \Sigma^*$, and where $(\mathfrak{A}_w, \subs) \models \varphi_w(\strucvar)$, we have that $\subs(\strucvar) = w$.

Note that for this proof, we use arbitrary concatenation as a shorthand.
For example, $(x \logeq y_1 \cdot y_2 \cdot y_3)$ is shorthand for $(x \logeq y_1 \cdot z_1) \land (z_1 \logeq y_2 \cdot y_3)$ where $z_1$ is a new and unique variable.
Also note that all the languages given in~\cref{lemma:nonFClangs} are bounded languages, therefore $L_i \notin \lang(\fcreg)$ for $i \in [6]$.

Assume that some relation $R$ from $\mathsf{Num}_\mathtt{a}$, $\mathsf{Add}$, $\mathsf{Mult}$, $\mathsf{Scatt}$, $\mathsf{Perm}$, $\mathsf{Rev}$, $\mathsf{Shuff}$ is definable in $\fc[\reg]$, and let $\varphi_R \in \fc[\reg]$ define $R$.
Now, consider the following $\fc$-formulas:
\begin{align*}
\psi_1 & \df \exists \strucvar, x,y \colon \bigl( \varphi_w(\strucvar) \land (\strucvar \logeq x  y) \land (x \regconst \mathtt{a}^*) \land (y \regconst (\mathtt{ba})^*)  \land \varphi_{\mathsf{Num}_\mathtt{a}}(x,y)\bigr) , \\
\psi_2 & \df \exists \strucvar, x,y \colon \bigl( \varphi_w(\strucvar) \land  (\strucvar \logeq x  y) \land (x \regconst \mathtt{a}^*) \land (y \regconst (\mathtt{ba})^*) \land \varphi_\mathsf{Scatt}(x,y) \bigr), \\
\psi_3 & \df \exists \strucvar, x,y,z \colon \bigl( \varphi_w(\strucvar) \land  (\strucvar \logeq x  y  z) \land (x \regconst \mathtt{b}^*) \land (y \regconst \mathtt{a}^*) \land (z \regconst \mathtt{b}^*)  \land \varphi_\mathsf{Add}(x,y,z)  \bigr), \\
\psi_4 & \df \exists \strucvar, x,y,z \colon \bigl( \varphi_w(\strucvar) \land  (\strucvar \logeq x  y  z)  \land (x \regconst \mathtt{b}^*) \land (y \regconst \mathtt{a}^*) \land (z \regconst \mathtt{b}^*) \land \varphi_\mathsf{Mult}(x,y,z) \bigr), \\
\psi_5 &  \df \exists \strucvar, x,y \colon \bigl( \varphi_w(\strucvar) \land  (\strucvar \logeq x  y) \land (x \regconst (\mathtt{abaabb})^* ) \land (y \regconst (\mathtt{bbaaba})^*) \land \varphi_\mathsf{Perm}(x,y) \bigr), \\
\psi_5' & \df \exists \strucvar, x,y \colon \bigl( \varphi_w(\strucvar) \land  (\strucvar \logeq x  y) \land (x \regconst (\mathtt{abaabb})^* ) \land (y \regconst (\mathtt{bbaaba})^*) \land \varphi_\mathsf{Rev}(x,y) \bigr), \\
\psi_6 & \df \exists \strucvar, x, y, z \colon \bigl( \varphi_w(\strucvar) \land (\strucvar \logeq x y z) \land (x \regconst \mathtt{a}^+) \land (y \regconst \mathtt{b}^+) \land \mathsf{Shuff}(x,y,z)  \bigr).
\end{align*}
It follows that $\lang(\psi_i) = L_i$ for $i \in [6]$, where $L_i$ is defined in~\cref{lemma:nonFClangs}, and $\lang(\psi_5') = L_5$.
Thus the relations $\mathsf{Num}_\mathtt{a}$, $\mathsf{Add}$, $\mathsf{Mult}$, $\mathsf{Scatt}$, $\mathsf{Perm}$, $\mathsf{Rev}$, and $\mathsf{Shuff}$ are not definable in $\fc[\reg]$

Now, let $\Sigma \df \{ \mathtt{a}, \mathtt{b} \}$.
To conclude the proof, assume that for the morphism $h \colon \Sigma^* \rightarrow \Sigma^*$, where $h(\mathtt{a}) = \mathtt{b}$ and $h(\mathtt{b}) = \mathtt{b}$, the relation $\mathsf{Morph}_h$ is definable in $\fcreg$.
Then, let $\psi \df \exists \strucvar, x, y \colon \bigl( \varphi_w(\strucvar) \land (\strucvar \logeq x \cdot y) \land (x \regconst \mathtt{a}^*) \land \mathsf{Morph}_h(x,y) \bigr)$.
It is clear that $\lang(\psi) = \{ \mathtt{a}^n \mathtt{b}^n \mid n \in \mathbb{N} \}$, and $\lang(\psi) \notin \lang(\fcreg)$ (see~\cite{frey2019finite} or~\cref{example:anbn} along with~\cref{lemma:bounded}).
Thus, $\mathsf{Morph}_h$ is not definable in $\fcreg$.
\end{proof}

\end{document}